\definecolor{mygreen}{RGB}{10,180,120}
\definecolor{myred}{RGB}{140,50,40}
\newcommand{\rs}{Ruzsa-Szemer\'{e}di\xspace}
\newcommand{\Ot}{\ensuremath{\widetilde{O}}}
\newcommand{\eps}{\ensuremath{\varepsilon}}
\newcommand{\Paren}[1]{\Big(#1\Big)}
\newcommand{\paren}[1]{\ensuremath{\left(#1\right)}\xspace}
\newcommand{\card}[1]{\vert{#1}\vert}
\newcommand{\set}[1]{\ensuremath{\left\{ #1 \right\}}}
\newcommand{\poly}{\mbox{\rm poly}}
\newcommand{\polylog}{\mbox{\rm  polylog}}
\newcommand{\opt}{\textnormal{\ensuremath{\mbox{opt}}}\xspace}
\newcommand{\logstar}[0]{\ensuremath{\log^*}}
\renewcommand{\epsilon}{\varepsilon}
\DeclareMathOperator{\E}{\ensuremath{\normalfont \textbf{E}}}
\DeclareMathOperator{\supp}{\ensuremath{\normalfont supp}}
\newcommand{\hiddencomment}[1]{}
\newcommand{\RS}[2]{\ensuremath{\textnormal{\textsf{RS}}(#1,#2)}}
\newcommand{\MC}[2]{\ensuremath{\textnormal{\textsf{MC}}(#1,#2)}}
\newcommand{\topt}{\ensuremath{{\opt}}}
\newcommand{\DS}{\ensuremath{\mathbf{D}}}
\newcommand{\add}{\ensuremath{\textnormal{\texttt{insert}}}}
\newcommand{\member}{\ensuremath{\textnormal{\texttt{member}}}}
\renewcommand{\leq}{\leqslant}
\renewcommand{\geq}{\geqslant}
\renewcommand{\b}[1]{\ensuremath{\bm{\mathrm{#1}}}}
\crefname{lemma}{Lemma}{Lemmas}
\crefname{theorem}{Theorem}{Theorems}
\crefname{property}{Property}{Properties}
\crefname{claim}{Claim}{Claims}
\crefname{result}{Result}{Results}
\crefname{definition}{Definition}{Definitions}
\crefname{observation}{Observation}{Observations}
\crefname{proposition}{Proposition}{Propositions}
\crefname{assumption}{Assumption}{Assumptions}
\crefname{line}{Line}{Lines}
\crefname{figure}{Figure}{Figures}
\crefname{equation}{}{}
\crefname{section}{Section}{Sections}
\crefname{appendix}{Appendix}{Appendices}
\crefname{algCounter}{Algorithm}{Algorithms}
\Crefname{algCounter}{Algorithm}{Algorithms}
\newtheorem{theorem}{Theorem}
\newtheorem{lemma}{Lemma}[section]
\newtheorem{proposition}[lemma]{Proposition}
\newtheorem*{proposition*}{Proposition}
\newtheorem{corollary}[lemma]{Corollary}
\newtheorem{definition}[lemma]{Definition}
\newtheorem{claim}[lemma]{Claim}
\newtheorem*{claim*}{Claim}
\newtheorem{fact}[lemma]{Fact}
\newtheorem{remark}[lemma]{Remark}
\definecolor{mylightgray}{RGB}{230,230,230}
\algnewcommand{\IIf}[2]{\textbf{if} #1 \textbf{then} #2}
\algnewcommand{\EndIIf}{\unskip\ \algorithmicend\ \algorithmicif}
\newenvironment{whitetbox}{
\par\addvspace{0.1cm}
\begin{tcolorbox}[width=\textwidth,
                  boxsep=5pt,
                  left=1pt,
                  right=1pt,
                  top=2pt,
                  bottom=2pt,
                  boxrule=1pt,
                  arc=0pt,
                  colframe=black,
                  colback=white
                  ]%%
}{
\end{tcolorbox}
}
\newenvironment{myproof}{
\vspace{-0.5cm}
\begin{proof}
}{
\end{proof}
}
\newcounter{algCounter}
\def\eps{\epsilon}
\def\sizeof#1{\left|#1  \right|}
\newenvironment{fminipage}%
  {\begin{Sbox}\begin{minipage}}%
  {\end{minipage}\end{Sbox}\fbox{\TheSbox}}
\def\union{\cup}
\def\intersect{\cap}
\def\Union{\bigcup}
\newtheorem{mdresult}{Result}
\newenvironment{result}{\begin{mdframed}[backgroundcolor=lightgray!40,topline=false,rightline=false,leftline=false,bottomline=false,innertopmargin=2pt]\begin{mdresult}}{\end{mdresult}\end{mdframed}}
\theoremstyle{definition}
\newtcolorbox{myalgbox}[1][]{
    enlarge top by=5pt,
    enlarge bottom by=5pt,
    breakable,
    boxsep=0pt,
    left=4pt,
    right=4pt,
    top=10pt,
    arc=0pt,
    boxrule=1pt,toprule=1pt,
    colback=white
}
\newtheorem{mdalg}{Algorithm}
\newenvironment{Algorithm}{
\begin{myalgbox}
\begin{mdalg}
}{
\end{mdalg}
\end{myalgbox}
}
\renewcommand{\paragraph}{%
  \@startsection{paragraph}{4}%
  {\z@}{10pt}{-1em}%
  {\normalfont\normalsize\bfseries}%
}
\title{On Regularity Lemma and Barriers in Streaming and Dynamic~Matching} 
 \author{
 Sepehr Assadi\footnote{(\texttt{sepehr@assadi.info}) Department of Computer Science, Rutgers University. Research supported in part by a NSF CAREER Grant CCF-2047061, a Google Research gift,
 and a Fulcrum award from Rutgers Research Council.} \and 
 Soheil Behnezhad\footnote{(\texttt{soheil.behnezhad@gmail.com}) Department of Computer Science, Stanford University.} \and
 Sanjeev Khanna\footnote{(\texttt{sanjeev@cis.upenn.edu}) Department of Computer and Information Science, University of Pennsylvania. Research supported in part by
NSF awards CCF-1763514, CCF-1934876, and CCF-2008305.}  \and
 Huan Li\footnote{(\texttt{huanli@cis.upenn.edu}). Department of Computer and Information Science, University of Pennsylvania.} 
 }
\date{}
\begin{document}

\maketitle

\pagenumbering{roman}

\begin{abstract}
\medskip

    We present a new approach for finding matchings in dense graphs by building on Szemer\'edi's celebrated Regularity Lemma. This allows us to obtain non-trivial albeit slight improvements over longstanding bounds for matchings in streaming and dynamic graphs. In particular, we establish the following results for $n$-vertex graphs: 
    
    \begin{itemize}[leftmargin=15pt]
        \item A deterministic \textbf{single-pass streaming} algorithm that finds 
        a $(1-o(1))$-approximate matching in $o(n^2)$ bits of space. This constitutes the first single-pass algorithm for this problem in sublinear space that improves over the $\sfrac{1}{2}$-approximation of the greedy algorithm. 
        
        \item A randomized \textbf{fully dynamic} algorithm that with high probability maintains a $(1-o(1))$-approximate matching in $o(n)$ worst-case update time per each edge insertion or deletion. The algorithm works even against an adaptive adversary. This is the first $o(n)$ update-time dynamic algorithm with approximation guarantee arbitrarily close to one. 
    \end{itemize}

     Given the use of regularity lemma, the improvement obtained by our algorithms over trivial bounds is only by some $(\logstar{n})^{\Theta(1)}$ factor. Nevertheless, in each case, they show that the ``right'' answer to the problem is not what is dictated by the previous bounds.  
     
     \medskip
     Finally, in the  streaming model, we also present a randomized $(1-o(1))$-approximation algorithm whose space can be {upper bounded} by the density of certain Ruzsa-Szemer\'edi (RS) graphs. While RS graphs by now have been used extensively to prove streaming lower bounds, ours is the first to use them as an upper bound tool for designing improved streaming algorithms.

\end{abstract}

\clearpage
\clearpage

\setcounter{tocdepth}{2}
\tableofcontents

\clearpage

\pagenumbering{arabic}
\setcounter{page}{1}

\section{Introduction}

Given a graph $G=(V,E)$, a matching $M$ in $G$ is any collection of edges that share no endpoints. Finding maximum matchings has been a cornerstone of algorithm design starting from the work of \citet*{Konig1916} over a century ago. Nevertheless, many fundamental questions regarding the complexity of this problem have remained unresolved, specifically in modern models of computations such as streaming or dynamic graphs. Indeed, in both mentioned models, despite significant attention, there has been {no} improvement in certain key cases over longstanding barriers that have remained in place since the introduction of the model itself. 

In this paper, we make an ever so slight improvement over these barriers, showing that the \emph{right} answer to the problem must be different than what is dictated by prior bounds. Our results combine tools from extremal combinatorics, primarily Szemer\'edi's Regularity Lemma~\cite{szemeredi1975regular} and its extensions, with multiple ideas (old and new) tailored to each model specifically. To put our results in more context, we start with the history of the problem in each model separately. 

\paragraph{Graph Streaming.} In this model, edges of an $n$-vertex graph appear one by one in a stream in an arbitrary order. The algorithm can read the edges in the arrival order while using a limited memory smaller than the input size, and output the solution at the end of the stream. The holy grail of algorithms here is a one that uses $O(n \cdot \polylog{(n)})$ memory and a single pass over the stream. The study of graph streaming algorithms were initiated by~\citet*{FeigenbaumKMSZ05} who already observed that there is a straightforward $1/2$-approximation algorithm for matching in $O(n\log{n})$ space\footnote{Throughout, as is standard, we always measure the space of streaming algorithms in \emph{bits}.}: greedily maintain a maximal matching in the stream. They further proved that finding an exact maximum matching requires $\Omega(n^2)$ space, which matches the trivial algorithm that stores the entire input via its adjacency matrix.  

Almost two decades since~\cite{FeigenbaumKMSZ05}, there are still no better algorithms for matchings than these two straightforward solutions. On the lower bound front, a series of work by~\citet*{GoelKK12} and~\citet*{Kapralov13} culminated in a recent work of~\citet*{Kapralov21} that rules
out better than $1/(1+\ln{2}) \approx 0.59$ approximation in  $n^{1+o(1/\log\log{n})}$ space. This lack of progress has led researchers to consider various relaxations of the problem, in particular by allowing a few more passes over the
input (e.g., in~\cite{KonradMM12,KaleT17,Konrad18,KonradN21,Assadi22,FeldmanS22}) or assuming random arrival of edges in the stream (e.g., in~\cite{KonradMM12,AssadiBBMS19,FarhadiHMRR20,Bernstein20,AssadiB21})\footnote{See the papers of~\citet*{FeldmanS22} and~\citet*{AssadiB21}, respectively, for the state of the art in each case, and more details on previous work on each relaxation.}.  
At this point, beating $1/2$-approximation factor of the greedy algorithm in $O(n \cdot \polylog{(n)})$ space, or even much larger than that, has become one of the most central open questions of the graph streaming literature; see, e.g.,~\cite{KonradMM12,McGregor14,Kapralov21,wajc2020matching,FeldmanS22} for various references to this question. 

\paragraph{(Fully) Dynamic Graphs.} In this model, we have an $n$-vertex graph that undergoes an arbitrary sequence of edge insertions and deletions. The goal is to maintain the solution to the problem, say an approximate maximum matching of the graph, with a quick update time per each insertion or deletion. Dynamic algorithms for matchings were studied first in this model by~\citet*{IvkovicL93} in 1993 and continue to be a highly active area of research (see, e.g.~\cite{OnakR10,BaswanaGS11,BaswanaGS18,NeimanS-STOC13,GuptaP13,BhattacharyaHI-SiamJC18,BernsteinS16,BhattacharyaHN-SODA17,BhattacharyaHN-STOC16,Solomon-FOCS16,CharikarS18-ICALP,ArarCCSW-ICALP18,BernsteinFH-SODA19,BehnezhadDHSS-FOCS19,BehnezhadLM-SODA20,Wajc-STOC20,BernsteinDL-STOC21,BhattacharyaK21-ICALP21,RoghaniSW-ITCS22,Kiss-ITCS22,GrandoniSSU-SOSA22,BehnezhadK22} and references therein).

There is a folklore algorithm that for any $\eps > 0$, maintains a $(1-\eps)$-approximate matching in $O(n/\eps^2)$ (amortized) update time: Assume inductively that we have a $(1-\eps/2)$-approximate matching $M$ of the current graph; $(i)$ for the next $(\eps/2) \cdot \card{M}$ updates do nothing and return $M$ still as the answer; after that, $(ii)$ compute a $(1-\eps/2)$-approximate matching of the current graph in $O(m/\eps)$ time using the Hopcroft-Karp algorithm~\cite{HopcroftK73} where $m$ is the number of edges in the graph and repeat from step $(i)$. Since $m = O(n \cdot \card{M})$ in any graph with maximum matching size bounded by $O(\card{M})$, 
the amortized update time will be $O(n/\eps^2)$, and the correctness can be easily verified. This algorithm can also be deamortized using standard batching ideas. 

For sparser graphs, this folklore algorithm was improved by~\citet*{GuptaP13} to achieve an $O(\sqrt{m}/\eps^2)$ update time where $m$ denotes the (dynamic) number of edges. Faster algorithms are only known for smaller approximations between 1/2 and 2/3 which can respectively be maintained in $O(1)$ \cite{Solomon16} (see also~\cite{BaswanaGS11}) and $O(\sqrt{n})$ update-times \cite{BernsteinS16}.
See also a recent result of \cite{BehnezhadK22} for update-time/approximation trade-offs between 1/2 and 2/3. Yet, for the original $(1-\eps)$-approximation question, raised e.g. in~\cite{GuptaP13}, an $O(n)$ update time still remains a barrier in general.

\subsection{Our Contributions} 

We present the first algorithms that beat the aforementioned barriers for finding matchings in streaming and dynamic graphs with non-trivial albeit quite small factors:

\begin{result}[Formalized in~\Cref{thm:stream1}] \label{res:stream1}
There is a randomized $(1-o(1))$-approximate matching algorithm in single-pass streams with adversarial order of edge arrivals in $n^2/(\logstar{n})^{\Omega(1)}$ space and polynomial time.   
\end{result}
 
    This is the first $o(n^2)$-space algorithm for matchings in adversarial-order streams with better than $1/2$-approximation guarantee. In fact, it was not known previously how to achieve a $(1-o(1))$-approximation 
    in $o(n^2)$ space even on random-arrival streams and 
    even if we allow any constant number passes over the input (see~\cite{McGregor05,AhnG11,AhnG18,AssadiLT21,AssadiJJST22,FischerMU22} for representative examples of multi-pass streaming matching algorithms\footnote{The state-of-the-art is the $O_{\eps}(n^{1+1/p})$-space $O(p/\eps)$-pass 
    algorithm by~\citet*{AhnG18} and $O_{\eps}(n \cdot \polylog{(n)})$-space $\poly{(1/\eps)}$-pass by~\citet*{FischerMU22} (see also algorithms by~\citet*{AssadiJJST22} and~\citet*{AssadiLT21} with improved bounds for bipartite graphs).}).
    Moreover, combined with the lower bound of $\Omega(n^2)$ space by~\cite{FeigenbaumKMSZ05} for computing exact matchings, \Cref{res:stream1} shows the first provable separation between the space complexity of computing nearly-optimal versus exact-optimal matchings in single-pass streams.
    
\begin{result}[Formalized in~\Cref{thm:dynamicalgo}]\label{res:dynamic}
There is a randomized $(1-o(1))$-approximate matching algorithm in fully dynamic graphs against an adaptive adversary with $n/(\logstar{n})^{\Omega(1)}$ worst case update time. 
\end{result}
This is the first algorithm for matchings in fully dynamic graphs that achieves $o(n)$ update time for all densities with close to one approximation guarantee (this was not known before even for oblivious adversaries). 

The key idea behind both these results is to maintain a \emph{matching cover}---introduced by~\citet*{GoelKK12} in spirit of cut/spectral sparsifiers---that is a ``sparse'' subgraph which approximately preserve matchings in each induced subgraph of the input graph. 
We present a polynomial time algorithm for constructing $o(n^2)$-size matching covers using Szemer\'edi's Regularity Lemma~\cite{szemeredi1975regular} 
and along the way extend them to general graphs (\cite{GoelKK12} only proves their existence and for bipartite graphs). 
We then show this new construction can be maintained in streaming and dynamic graphs using several new ideas combined  with standard tools from prior work specific to each model. We elaborate more on our techniques in~\Cref{sec:overview}. 

We also present a third result specific to the graph streaming model. \emph{All} previous lower bounds for approximating matchings in graph streams in a single pass~\cite{GoelKK12,Kapralov13,AssadiKL17,Kapralov21}, multi-pass~\cite{AssadiR20,ChenKPSSY21,Assadi22}, or random-order streams~\cite{AssadiB21} rely on constructions based on \rs (RS) graphs \cite{RuszaS78}. These are graphs whose edges can be partitioned into ``large'' \emph{induced} matchings (see~\Cref{sec:rs} for details). We present a converse approach  by developing a streaming algorithm for matchings whose space can be \emph{upper bounded} by the density of (certain) RS graphs. In particular, 

\begin{result} \label{res:stream2}\!\!\textnormal{(Formalized in~\Cref{thm:stream2})}\!
For any $k \geq 1$, there is a randomized {\small $(1-o(1))$}-approximate matching algorithm in single-pass streams with adversarial order of edge arrivals in
\[
\paren{n^2/k + RS(n,o(n/k))} \cdot \poly\!\log{(k)}
\]
space and exponential time; here, $RS(n,r)$ denotes the largest number of edges in any $n$-vertex graph whose edges can be partitioned
into induced matchings of size $r$. The algorithm can be made deterministic if the goal is an additive $o(n)$ approximation instead. 
\end{result}

\Cref{res:stream2} builds on and  generalize the RS graph based communication protocol of~\citet*{GoelKK12} to the streaming model (and from bipartite to general graphs). %although there are gaps between our bounds and~\cite{GoelKK12}). 

To put this result in more context, notice that RS graphs are naturally becoming denser and denser by reducing the size of their induced matchings\footnote{Any (simple) graph can be seen as an RS graph with induced matchings of size one.}, leading to a tradeoff between the two terms in the space guarantee 
of~\Cref{res:stream2}. Unfortunately, proving tight bounds on the density of RS graphs is a notoriously difficult problem in combinatorics (see, e.g.,~\cite{gowers2001some,conlon2013graph,fox2017graphs}). As such,
the space complexity of the algorithm in~\Cref{res:stream2} as purely a function of $n$ is not clear at this point. However, using~\Cref{res:stream2} 
combined with Fox's triangle-removal lemma~\cite{Fox11} that, to our knowledge, provides the best approach currently for bounding density of RS graphs with $o(n)$-size induced matchings, we can obtain the following result: 
\vspace{-3pt}
\begin{itemize}[leftmargin=10pt]
    \item \textbf{A corollary of~\Cref{res:stream2}} (Formalized in~\Cref{cor:stream2})\textbf{.}
    There is a deterministic $(1-o(1))$-approximate matching algorithm in single-pass streams with adversarial order of edge arrivals using $n^2/2^{\Theta(\logstar{n})}$ space and exponential time.  
\end{itemize}
\vspace{-3pt}

This corollary improves upon our algorithm in~\Cref{res:stream1} based on the regularity lemma in terms of approximation ratio and being deterministic at the cost of taking exponential time.
Moreover, by a result of~\citet*{GoelKK12} on lower bounds for streaming matching via RS graphs, obtaining streaming algorithms with better space complexity than this corollary, namely, beating $n^2$ by more than a $2^{\Theta(\logstar{n})}$ factor, immediately implies improved RS graph upper bounds; in other words, improving upon our algorithm at the very least requires proving better RS graph upper bounds than currently known bounds (see~\cite{FoxHS15} for why this is a challenging task).  

Finally, given the current state of knowledge about RS graphs (see~\cite{AlonMS12,fox2017graphs}), it is possible that the space of the algorithm in~\Cref{res:stream2} can be improved to $n^{2}/2^{\Theta(\sqrt{\log{n}})}$---thus more than any $\polylog{(n)}$ factor shaving in the space over $n^2$---assuming that the currently best construction of dense RS graphs in~\cite{RuszaS78} (see also~\cite{AlonMS12}) with induced matchings of size $n/2^{\Theta(\sqrt{\log{n}})}$ cannot be improved substantially to larger induced matching sizes. 

\medskip

In conclusion, our paper shows that these longstanding barriers in computing large matchings in streaming and dynamic graphs can at least be broken by some non-trivial albeit quite small factors. Moreover, these algorithms rely on techniques and ideas that are vastly different from prior approaches used in these two models. We hope our work paves the path toward further progress on these longstanding open questions. 

\section{Technical Overview}\label{sec:overview} 

{\em Matching sparsifiers}, which loosely speaking, are sparse subgraphs that approximately preserve the maximum matching have long been known to be an important tool for fully dynamic and (variants of) streaming algorithms. Some prominent examples include edge-degree constrained subgraphs (EDCS) \cite{BernsteinS15,BernsteinS16} and its generalizations \cite{AssadiB19,BehnezhadK22}, kernels \cite{BhattacharyaHN-STOC16,BhattacharyaHI-SiamJC18,ArarCCSW-ICALP18,BernsteinDL-STOC21}, and matching skeletons \cite{GoelKK12}. One of our main contributions, and the key to both  \Cref{res:stream1} and \Cref{res:dynamic}, is a new matching sparsifier based on Szemer\'edi's Regularity Lemma. 

Our matching sparsifier, more strongly, is a {\em matching cover}---\`a la \citet*{GoelKK12}---which not only preserves an approximate {\em maximum} matching of the graph, but rather ``covers'' smaller matchings of it as well. Let us formalize this. For a given graph $G$,
we write $\mu(G)$ to denote the maximum matching size
of $G$, and write $G[A,B]$ to denote the bipartite subgraph
of $G$ between some disjoint vertex subsets $A,B$.
We say a subgraph $H$ of $G$ is an \textbf{$\alpha$-matching cover}
for  $\alpha \in (0,1)$
if for any disjoint subsets of vertices $(A,B)$ in $G$,
$\mu(H[A,B]) \geq \mu(G[A,B]) - \alpha \cdot n$. That is, $H$ preserves the  largest matching in the induced bipartite subgraph $G[A, B]$
to within an additive $\alpha \cdot n$ factor. While from an information theoretic perspective, {\em existence} of an $o(n^2)$-edge $o(1)$-matching cover for bipartite graphs was proved in the original paper of \citet*{GoelKK12}, it was not known up until now whether one can find such matching covers efficiently, say in polynomial time. Note that this is specially important, for instance, for applications in dynamic algorithms where the goal is to optimize the update time.

In this paper, we prove that there is an $\tilde{O}(n^{\omega})$-time\footnote{Here and throughout,
$\omega\approx 2.37286$ is the matrix multiplication exponent with current best bounds achieved by~\citet*{AlmanW21}. } offline
algorithm that computes an $o(n^2)$-edge $o(1)$-matching cover of any $n$-vertex graph (not necessarily bipartite). Our algorithm builds on Szemer\'edi's Regularity Lemma (and its algorithmic version due to \citet*{AlonDLRY92}). We first explain how our offline algorithm for obtaining a matching cover works, and then outline its use in obtaining improved dynamic matching and streaming algorithms.

\subsection{Matching Covers via Regularity Lemma}
  Roughly speaking, Szemer\'edi's regularity lemma~\cite{szemeredi1975regular} says that the vertices of any graph can be partitioned
  into a small {\em irregular} part $C_0$ with $|C_0| = o(n)$, plus $k$ other equal-size parts
  $C_1,\ldots,C_k$ for some $k \in [\omega(1),\log n]$.
  The latter $k$ parts have the property that all but $o(1)$-fraction of the
  $C_i,C_j$ pairs are {\em regular}: for any pair of subsets
  $X\subseteq C_i, Y\subseteq C_j$ with large enough size, % (in particular, $\geq \Omega(\alpha |C_i|)$),
  the edge density between $X,Y$ is similar to that of $C_i,C_j$.
  Therefore, if the edges between $C_i,C_j$ are dense to start with, the density will also be high between every large enough $X\subseteq C_i, Y\subseteq C_j$ pair.

 It is not difficult to see that by regularity, any large matching between a dense regular pair $C_i,C_j$ can be mostly preserved if we subsample  edges between them at a sufficiently high rate $p = o(1)$. In particular, the subsampled graph will be a matching cover of the graph induced by edges between the regular pair $C_i,C_j$. 
 This suggests a natural strategy for building an $o(1)$-matching cover with $o(n^2)$ edges: {\em subsample} the edges between dense regular $C_i,C_j$ pairs at rate $p = o(1)$ and take {\em all} other edges. We would like to show that this is an $\alpha$-matching cover for some $\alpha=o(1)$.

 This idea runs into the following problem. Suppose we have an 
  $(\alpha n)$-size matching $M$ whose edges are evenly distributed across
  all $\binom{k}{2}$ pairs of $C_i,C_j$,
  then the number of edges of $M$ between each $C_i,C_j$ pair is only $O(\alpha \cdot n / k^2)$.
  This means that only an $O(\alpha/k)$ fraction of vertices in $C_i,C_j$ are matched to each other -- this is unfortunately way too small to invoke the regularity property. 

We get around this issue by first focusing on solving an \emph{$\alpha$-hitting set} problem: find one edge between endpoints of any $(\alpha n)$-size matching -- we will show later on using a similar argument as in~\cite{GoelKK12} that this is sufficient for obtaining an $\alpha$-matching cover. Now to fix our problem about an $(\alpha n)$-size matching whose edges are distributed across many pairs, we 
present a strategy for {\em consolidating the support} of a matching over different pairs. This consolidation argument shows that whenever there is a large matching $M$ between dense regular $C_i,C_j$ pairs,
  there must also exist another (almost as) large matching $M'$ that is supported on the same set of vertices $V(M)$ but
  only uses edges between a small number of such $C_i,C_j$ pairs.
  As a result, there must exist one pair of $C_i,C_j$ where a substantial fraction
  of vertices are matched to each other, to which we are now able to apply regularity to prove the existence
  of an edge between them in the subsampled graph (which solves our $\alpha$-hitting set problem).
  At a high level, our argument for consolidating the support of the matching is proved by $(i)$ viewing the matching $M$ as a {\em fractional} matching
  in a meta graph obtained by contracting each $C_i$ into a supernode; and
  $(ii)$ rounding the fractional matching by an edge sampling process. 
%  We present the details of this argument in \cref{sec:matcover}.

All in all, using the algorithm of~\cite{AlonDLRY92} for finding the regularity lemma partition in $\Ot(n^{\omega})$ time, and a direct sampling algorithm between dense regular pairs, this step gives us an $\Ot(n^{\omega})$ time and $\Ot(n)$ space algorithm for finding an $\alpha$-matching cover of size $n^2/(\logstar{n})^{\Omega(1)}$ for some $\alpha=1/(\logstar{n})^{\Omega(1)}$.

\subsection{Applications of Matching Cover}

\paragraph{A fully dynamic matching algorithm.}  
The matching cover algorithm above is offline. But observe that since the algorithm takes $\tilde{O}(n^\omega) = n^{3-\Omega(1)}$ time, the time spent {\em per edge} in a dense instance is sublinear in $n$. This gives hope that perhaps such a matching cover can be maintained in $o(n)$ time, and indeed we show this to be the case.

Our algorithm roughly proceeds by re-computing
an $o(1)$-matching cover every $\tilde{\Theta}(n^{\omega-1})$ updates,
and then using the $O(\sqrt{m})$-update time data structure
by~\citet*{GuptaP13} to maintain a nearly optimal
matching in the matching cover
through the subsequent $\tilde{\Theta}(n^{\omega-1})$ updates.
Since the matching cover only has $o(n^2)$ edges,
we immediately get an update time of $o(n)$ for the Gupta-Peng algorithm.
To argue the correctness, we show that the matching cover found
by our offline algorithm has the additional
feature that it is {\em robust to edge updates}:
not only is it an $o(1)$-matching cover of the graph
at the time we compute it, but it remains
an $o(1)$-matching cover throughout any arbitrary sequence of $n^{2-o(1)}$ updates.
This suffices to show that our algorithm can dynamically maintain
an approximate matching with an additive error $o(n)$.

When the number of edges is close to $n^2$,
this additive approximation becomes a $(1-o(1))$-multiplicative approximation,
since the maximum matching size is itself $\Omega(n)$.
On the other hand, when the number of edges is $o(n^2)$, directly applying
the Gupta-Peng data structure gives us a nearly-optimal matching in $o(n)$ update time.
Our final algorithm then balances the dense and the sparse regimes together
to maintain a $(1-o(1))$-approximate matching in $o(n)$ update time.

\paragraph{Streaming algorithms.} 
Our streaming algorithm in~\Cref{res:stream1} is also based on using matching covers as a natural sparsifier for matchings. The algorithm works 
through a series of buffers of edges $B_1,B_2,\ldots,$. The first buffer $B_1$ reads the edges from the input until it gets ``full'', i.e., receives some $o(n^2)$ edges (which is some constant factor larger than the size of our matching cover). At that point we compute a matching cover of the edges in the buffer using an offline/non-streaming 
algorithm and send its edges to the buffer $B_2$; then, we ``restart'' $B_1$ by emptying all its current edges and letting it collect more edges from the stream. The same approach is repeated across all other buffers as well. The number of these buffers can be bounded as only a constant fraction of edges
in one buffer can make their way to the next one, eventually reaching a buffer that never gets full. This also implies
that fewer edges will be be further ``sparsified'' in each matching cover, thus the error occurred due to the approximation guarantee of the matching cover does not get amplified ``too much''. Thus, using this algorithm along with our matching cover algorithm for regularity lemma, leads to an $o(n^2)$-space $(1-o(1))$-approximation algorithm for 
single-pass streaming matchings. 

The strategy we outlined above works for \emph{any} choice of matching cover (as long as we can compute it in a small space offline). Thus, 
we can alternatively implement the matching cover subroutine by simply enumerating all subgraphs of the input (in exponential time) and the \emph{optimal}
one. 
An argument due to~\citet*{GoelKK12}---extended in our paper to general graphs---shows that density of optimal matching covers can be bounded by the density of certain RS graphs. To obtain~\Cref{res:stream2}, we also need to turn the additive approximation guarantee of the matching cover into a multiplicative bound. This is done using vertex-sparsification methods of~\citet*{AssadiKLY16} and~\citet*{ChitnisCEHMMV16} (as specified in~\cite{AssadiKL16ec}) that reduce the number of vertices in the graph down to its maximum matching size without reducing the matching size by much. This turns the additive guarantee of the matching
cover into a multiplicative one, giving us~\Cref{res:stream2} as well. 

Finally, one key step in making the above algorithms work is to store the $o(n^2)$ edges they have in the buffers more efficiently than spending $\Theta(\log{n})$ bits per each (which is prohibitive for us given the extremely small improvement in the space the algorithms get over the trivial $O(n^2)$ bound). This is done by storing the edges via the succinct dynamic dictionary of~\citet*{RamanR03} (see~\Cref{sec:succinct}) and then performing all the computation in this compressed space instead.

\section{Preliminaries}\label{sec:prelim} 

\paragraph{Notation.} For any integer $t\geq s \geq 1$, we let $[t]:=\set{1,\ldots,t}$ and let $[s,t] = \{s, \ldots, t\}$. 
We use the term with high probability, abbreviated w.h.p., to imply probability at least $1-1/n^c$ for any desirably large constant $c \geq 1$ (that might affect the hidden constants in our statements).

For a graph $G=(V,E)$, we use $V(G) = V$ to denote the set of vertices and $E(G) = E$ to denote the edges. For any subsets of edges $F \subseteq E$ and disjoint subsets of vertices $X,Y \subseteq V$, we use $X(F)$ and $Y(F)$ to
denote the edges of $F$ incident on $X$ and $Y$, respectively, and $F(X,Y)$ to denote the edges of $F$ going between $X$ and $Y$.  Similarly, we use $G[X]$ and $G[X,Y]$ to respectively denote the subgraph of $G$ induced on vertices $X$, and the bipartite subgraph of $G$ between vertices $X$ and $Y$. For any $p \in [0, 1]$, we use $G[p]$ to denote a random subgraph of $G$ that includes each edge of $G$ independently with probability $p$.

For any graph $G$, $\mu(G)$ denotes the size of the maximum matching in $G$. We have, 

\begin{fact}\label{fact:m-mu(G)}
    Any graph $G$ has at most $2n \cdot \mu(G)$ edges. 
\end{fact}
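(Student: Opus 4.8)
The plan is to fix a maximum matching $M$ of $G$, so that $|M| = \mu(G)$, and to argue that every edge of $G$ must be incident to one of the $2\mu(G)$ vertices saturated by $M$. Indeed, if some edge $e \in E(G)$ had both of its endpoints outside $V(M)$, then $M \cup \{e\}$ would be a matching of size $\mu(G)+1$, contradicting the maximality of $M$. Hence $E(G)$ is contained in the set of edges incident to $V(M)$.

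Next I would bound the number of such edges by a degree sum: the number of edges with at least one endpoint in $V(M)$ is at most $\sum_{v \in V(M)} \deg_G(v)$ (this over-counts edges lying entirely inside $V(M)$, which only helps for an upper bound). Since $G$ has $n$ vertices, each degree is at most $n-1$, and $|V(M)| = 2\mu(G)$, we get
\[
|E(G)| \;\leq\; \sum_{v \in V(M)} \deg_G(v) \;\leq\; |V(M)| \cdot (n-1) \;=\; 2\mu(G)(n-1) \;\leq\; 2n \cdot \mu(G),
\]
which is the claimed bound.

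There is no real obstacle here; the only thing to be careful about is not to undercount by forgetting edges inside $V(M)$ (handled automatically by using the degree sum rather than a bipartite count) and to note the degenerate case $\mu(G) = 0$, where $G$ has no edges and the inequality holds trivially.
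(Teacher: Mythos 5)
Your proof is correct and rests on the same core fact as the paper's: every edge of $G$ is incident to a vertex saturated by a maximal (hence maximum) matching, so a degree bound over the $2\mu(G)$ matched vertices finishes the count. The paper phrases this as a greedy peeling argument (repeatedly add an edge to a matching and delete its at most $2n$ incident edges until the graph is empty), but the two are the same counting in different clothing.
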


The proof of~\Cref{fact:m-mu(G)} is simply based on picking an arbitrary edge of the graph and adding to a matching, removing at most $2n$ edges incident on this edge, and repeating until the graph is empty.   

We will also need the following version of Hall's theorem.

\begin{proposition}[Extended Hall’s marriage theorem; cf.~\cite{Hall87}]\label{prop:hall}
Let $G=(L,R,E)$ be any bipartite graph with $|L|=|R|=n$.
Then $\max(|A|-|N_G(A)|) = n - \mu(G)$, where
$A$ ranges over all subsets of $L$ and $R$, and
$N_G(A)$ denotes the neighbors of $A$ in $G$.
\end{proposition}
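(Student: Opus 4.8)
The plan is to derive this deficiency form of Hall's theorem from König's minimax theorem for bipartite matchings, which asserts that $\mu(G)$ equals the minimum size $\tau(G)$ of a vertex cover of $G$. Since the claimed identity is symmetric in $L$ and $R$ (both have size $n$, and $\mu(G)$ is unchanged if we swap the two sides of the bipartition), it suffices to prove $\max_{A \subseteq L}(|A| - |N_G(A)|) = n - \mu(G)$; the case $A \subseteq R$ then follows from the same argument with the roles of $L$ and $R$ reversed, and the overall maximum over subsets of either side is simply the larger of these two (equal) quantities.

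For the lower bound $\max_{A \subseteq L}(|A| - |N_G(A)|) \geq n - \mu(G)$, I would take a minimum vertex cover $C$ of $G$, write $C_L = C \cap L$ and $C_R = C \cap R$, and set $A := L \setminus C_L$. Every edge incident to a vertex of $A$ fails to be covered on the $L$-side, so it must be covered on the $R$-side, which gives $N_G(A) \subseteq C_R$. Hence $|A| - |N_G(A)| \geq (n - |C_L|) - |C_R| = n - |C| = n - \tau(G) = n - \mu(G)$. For the upper bound $\max_{A \subseteq L}(|A| - |N_G(A)|) \leq n - \mu(G)$, I would take an arbitrary $A \subseteq L$ and observe that $(L \setminus A) \cup N_G(A)$ is a vertex cover of $G$: any edge either has its $L$-endpoint outside $A$ (covered by $L \setminus A$) or inside $A$ (in which case its $R$-endpoint lies in $N_G(A)$). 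Therefore $\tau(G) \leq |L \setminus A| + |N_G(A)| = (n - |A|) + |N_G(A)|$, which rearranges to $|A| - |N_G(A)| \leq n - \tau(G) = n - \mu(G)$. Combining the two inequalities yields the equality for subsets of $L$, and the symmetric statement for subsets of $R$ completes the proof.

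There is no substantial obstacle here; the only point needing a little care is that the maximum in the statement ranges over subsets of \emph{both} $L$ and $R$, so one must verify that the two sides of the bipartition yield exactly the same extremal value $n - \mu(G)$ — and this is precisely where the hypothesis $|L| = |R| = n$ enters. If one prefers not to invoke König's theorem as a black box, the same two inequalities can be obtained directly from the max-flow min-cut theorem applied to the standard unit-capacity $s$-$t$ network associated with $G$, by reading off which vertices of $L$ and $R$ lie on the source side of a minimum cut; I would default to the König-based argument as it is the shortest.
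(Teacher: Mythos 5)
Correct. The paper states this proposition as a known classical fact with a citation to Hall and does not itself supply a proof, so there is no paper proof to compare against; your derivation via König's minimax theorem ($\mu(G) = \tau(G)$ for bipartite graphs) is the standard textbook route to the deficiency form of Hall's theorem, and both the lower-bound argument (taking $A = L \setminus C_L$ for a minimum vertex cover $C$ and observing $N_G(A) \subseteq C_R$) and the upper-bound argument (observing that $(L \setminus A) \cup N_G(A)$ is a vertex cover for any $A \subseteq L$) are correct. Your remark that the hypothesis $|L| = |R| = n$ is exactly what makes the $L$-side and $R$-side extremal values coincide, so that the maximum over subsets of either side equals $n - \mu(G)$, is also the right thing to flag.
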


\subsection{Szemer\'edi's Regularity Lemma}\label{sec:regularity-lemma}

Szemer\'edi's Regularity Lemma~\cite{szemeredi1975regular} is a powerful tool in extremal combinatorics. Loosely speaking, it says that every dense graph can be well-approximated by a ``small'' collection  of random-like subgraphs. To formally state the lemma, we need a few definitions.

Let $G= (V,E)$ be any given graph,
and $A,B\subseteq V$ be any disjoint vertex subsets.
We write $e(A,B)$ to denote the number of edges between $A,B$.
If $A,B\neq \emptyset$, we define the \textbf{density} of edges between $A$ and $B$ by:
\begin{align*}
    d(A,B) := \frac{e(A,B)}{|A||B|}.
\end{align*}
For a parameter $\gamma \in (0,1)$,
we say $(A,B)$ is \textbf{$\gamma$-regular} if for every $X\subseteq A$ and
$Y\subseteq B$ satisfying $|X|\geq \gamma \cdot |A|$ and $|Y|\geq \gamma \cdot |B|$,
we have
%\begin{align*}
 $   \sizeof{d(A,B) - d(X,Y)} < \gamma.$
%\end{align*}

Let $C_0,C_1,\ldots,C_k$ be a partition of the vertex set $V$.
We say this partition is {\bf equitable} if the classes
$C_1,\ldots,C_k$ all have the same size.
We will call $C_0$ the {\bf exceptional class}.
We say this partition is $\gamma$-regular if
both of the following statements are true:
\begin{enumerate}
    \item It is equitable and $|C_0|\leq \gamma n$.
    \item All but at most $\gamma \binom{k}{2}$ of the pairs
    $C_i,C_j$ for $1\leq i < j \leq k$ are $\gamma$-regular.
\end{enumerate}

Instead of the original formulation of Szemer\'edi's  Regularity Lemma in~\cite{szemeredi1975regular}, we state an algorithmic version of it due to \citet*{AlonDLRY92}.

\begin{proposition}[\cite{AlonDLRY92}]
\label{thm:regularityalgo}
There exists a function
$Q:\mathbb{R}^{+}\times \mathbb{R}^{+}\to \mathbb{R}^{+}$
satisfying $\log^* Q(x,y) \leq \poly(x,y)$ for all $x,y$,
such that,
given any $n$-vertex graph $G=(V,E)$
and $\gamma \in (0,1), t\geq 1$,
one can find in $n^\omega \cdot Q(t,1/\gamma)$ time
a $\gamma$-regular partition of $V$
into $k+1$ classes such that $t\leq k \leq Q(t,1/\gamma)$.
\end{proposition}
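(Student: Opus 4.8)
The statement is the algorithmic form of Szemer\'edi's Regularity Lemma, so my plan is to follow the classical \emph{energy-increment} proof and replace its single non-constructive ingredient by an efficient subroutine, exactly as in \cite{AlonDLRY92}. Recall that for an equitable partition $\mathcal{P}=\{C_0,C_1,\dots,C_k\}$ one defines the \emph{index} (mean-square density) $\mathrm{ind}(\mathcal{P}) := \sum_{1\le i<j\le k} \tfrac{|C_i||C_j|}{n^2}\, d(C_i,C_j)^2$, which always lies in $[0,\tfrac12]$ (the exceptional class $C_0$ is excluded; its effect is absorbed by a routine convexity bound). The entire argument is driven by the fact that whenever $\mathcal P$ fails to be $\gamma$-regular, one can refine it into a partition whose index is larger by at least a fixed function of $\gamma$; since the index never exceeds $\tfrac12$, only $\poly(1/\gamma)$ refinement rounds can occur.

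Concretely I would proceed in rounds. In each round, for every pair $(C_i,C_j)$ I call a subroutine that either (a)~certifies $(C_i,C_j)$ is $\gamma$-regular, or (b)~returns a witness of irregularity, i.e.\ subsets $X\subseteq C_i$, $Y\subseteq C_j$ with $|X|,|Y|\ge \poly(\gamma)\cdot(n/k)$ and $|d(X,Y)-d(C_i,C_j)|\ge \poly(\gamma)$. If at least $\gamma\binom k2$ pairs fall in case~(b), I refine each class $C_i$ by the (at most $k-1$) witness sets touching it --- cutting $C_i$ into at most $2^{k-1}$ atoms --- then re-equitize (redistributing atoms into equal classes, pushing leftovers into $C_0$ while keeping $|C_0|\le\gamma n$), and repeat. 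A defect Cauchy--Schwarz computation shows each refined irregular pair gains at least $\tfrac{(n/k)^2}{n^2}\cdot\poly(\gamma)$ in its density-square contribution, so the total index gain per round is at least $\gamma\binom k2\cdot\tfrac{\poly(\gamma)}{k^2}=\poly(\gamma)$, independent of $k$. Hence after $R=\poly(1/\gamma)$ rounds every pair is in case~(a) and $\mathcal P$ is $\gamma$-regular. Each round multiplies $k$ by at most an exponential, so the final $k$ is a tower of height $R$; starting from $t$ initial classes ensures $t\le k$, and $\logstar k\le\poly(1/\gamma)+\logstar t$, which is exactly the claimed $t\le k\le Q(t,1/\gamma)$ with $\logstar Q\le\poly$.

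The real content --- and the main obstacle --- is implementing the per-pair subroutine efficiently, since deciding $\gamma$-regularity \emph{exactly} is co-NP-hard. Here I would use the key lemma of \cite{AlonDLRY92}: irregularity of a bipartite pair $(A,B)$ with $|A|=|B|=m$ and density $d$ can be \emph{read off from pairwise co-degrees}. If a witness $(X_0,Y_0)$ exists, then writing $f(a)=\deg_{Y_0}(a)-d|Y_0|$, the deviation $|e(X_0,Y_0)-d|X_0||Y_0||\ge\gamma|X_0||Y_0|$ forces $\sum_{a\in A}f(a)^2$ to be large by Cauchy--Schwarz; expanding over the ``second endpoint'' shows $\sum_{a\in A}\deg_B(a)^2=\sum_{y,y'\in B}|N(y)\cap N(y')|$ must deviate noticeably from its regular value $d^2m^2$, and moreover this deviation is localized: some vertex $y\in B$ has a co-degree vector $\big(|N(y)\cap N(y')|\big)_{y'}$ spread away from $d\cdot\deg(y)$, and then $X:=N_A(y)$ together with $Y:=\{y' : |N(y)\cap N(y')|\ \text{large}\}$ is a valid witness with $\poly(\gamma)$ parameters. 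Since all co-degrees are exactly the entries of $MM^{\top}$ for the bipartite adjacency matrix $M$, they are computable in $O(m^\omega)$ time, after which scanning the $m$ candidate vertices $y$ and checking induced densities costs $O(m^3)$ (less with batching); so the subroutine runs in $O(m^\omega\cdot\poly(1/\gamma))$ time, and when it finds no such $y$ it correctly certifies $\gamma$-regularity. I expect the fussiest part to be arranging the defect Cauchy--Schwarz bookkeeping so that a \emph{single} clean $\poly(\gamma)$ loss suffices --- i.e.\ certification at parameter $\gamma$ is consistent with witnesses at parameter $\poly(\gamma)$ across all rounds.

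Finally I would tally the running time. In round $r$ the classes have size $m=n/k_r$ and there are $\binom{k_r}{2}$ pairs, so the subroutine calls cost $\binom{k_r}{2}(n/k_r)^\omega\cdot\poly(1/\gamma)\le n^\omega\cdot\poly(1/\gamma)$ using $\omega\ge 2$ (so $k_r^{2-\omega}\le 1$), regardless of how large the tower $k_r$ has grown; re-equitizing is a cheap $\poly(k_r)$ sorting step. Summing over the $R=\poly(1/\gamma)$ rounds gives total time $n^\omega\cdot Q(t,1/\gamma)$ for a suitable $Q$ with $\logstar Q(x,y)\le\poly(x,y)$, as required.
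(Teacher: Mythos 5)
Your proposal correctly reconstructs the Alon--Duke--Lefmann--R\"odl--Yuster algorithm, which is precisely the source the paper cites for this proposition (the paper does not reprove it; its Appendix~A.1 sketches the very same refine-by-witness algorithm, with witnesses found via co-degrees computed by fast matrix multiplication, when arguing space-efficiency in Proposition~3.3). The energy-increment potential, the co-degree/defect-Cauchy--Schwarz witness subroutine, the $\binom{k}{2}(n/k)^\omega\le n^\omega$ per-round time bound, and the tower-type growth of $k$ all match the standard proof.
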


The algorithm in~\Cref{thm:regularityalgo} can also be implemented in a space-efficient manner (which is needed for our streaming algorithms). See~\Cref{app:prop-regularity-space-efficient} for a proof sketch. 

\begin{proposition}\label{prop:regularity-space-efficient}
  Given query access to the adjacency matrix,
  the algorithm in~\Cref{thm:regularityalgo} can be implemented
  in $O(n \cdot Q(t,1/\gamma)\log n)$ space and $\poly(n, Q(t,1/\gamma))$ time.
\end{proposition}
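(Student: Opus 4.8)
The plan is to run the refinement algorithm of~\citet*{AlonDLRY92} that underlies~\Cref{thm:regularityalgo} essentially unchanged, but to replace each of its subroutines that is realized through (fast) matrix multiplication --- or any other operation that would materialize a dense $\Theta(n)\times\Theta(n)$ array --- by a direct enumeration over vertices that only issues adjacency-matrix queries and keeps $O(\log n)$ bits of scratch. Because the proposition only demands $\poly(n,Q(t,1/\gamma))$ time (we deliberately give up the $n^\omega$ bound of~\Cref{thm:regularityalgo}), this time-for-space trade-off is affordable. Recall the structure of that algorithm: starting from the trivial equitable partition, it runs for $\poly(1/\gamma)$ rounds, and in each round it (i) for every pair $C_i,C_j$ of non-exceptional classes either certifies that the pair is regular for the relevant internal parameter or produces ``witness'' sets $X\subseteq C_i$, $Y\subseteq C_j$ with $\card{X}\geq\poly(\gamma)\cdot\card{C_i}$, $\card{Y}\geq\poly(\gamma)\cdot\card{C_j}$, and $\card{d(X,Y)-d(C_i,C_j)}\geq\poly(\gamma)$; and (ii), if more than $\gamma\binom{k}{2}$ pairs produced a witness, replaces the current partition by a re-equitized common refinement of all the witness sets (the overflow going into $C_0$), and otherwise halts. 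The guarantee behind~\Cref{thm:regularityalgo} --- which I will use as a black box --- is that the number of classes stays at most $Q(t,1/\gamma)$ throughout.

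There are three space-consuming objects, and I would bound each separately. The current partition is stored as one class label per vertex, i.e.\ $O(n\log Q(t,1/\gamma))$ bits. Within a single round, for each ordered pair $(i,j)$ we store the witness set inside $C_i$ as a bitmap over $C_i$; a vertex of $C_i$ lies in at most one such bitmap per choice of $j$, so the total over all pairs is $O(n\cdot Q(t,1/\gamma))$ bits. The common refinement needs no extra asymptotic space: the new class of a vertex $v\in C_i$ is a function of $i$ and the length-$O(Q(t,1/\gamma))$ Boolean signature recording, for each $j$, whether $v$ lies in the witness bitmaps of $(i,j)$ and of $(j,i)$; we sort vertices by (old label, signature), cut each resulting group into equal-size pieces, and move the remainders into $C_0$, exactly as in~\citet*{AlonDLRY92}. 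Their potential-function analysis keeps $\card{C_0}\leq\gamma n$ and bounds the number of rounds in terms of $1/\gamma$, so none of this disturbs the claimed bounds.

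It remains to handle the per-pair tester, which is precisely where matrix multiplication enters the original algorithm: for a bipartite pair $(A,B)=(C_i,C_j)$ it decides regularity from the near-constancy of the co-degrees $\card{N(a)\cap N(a')\cap B}$ over $a,a'\in A$ and, when these deviate, exhibits a vertex $a^\star\in A$ for which $Y=N(a^\star)\cap B$ together with the ``deviation set'' $X=\set{a'\in A:\ \card{\,d(\set{a'},Y)-d(A,Y)\,}\geq\poly(\gamma)}$ is a valid witness. I would compute every quantity it uses by nested loops over vertices: a degree $\card{N(a)\cap B}$ by scanning $B$; a co-degree $\card{N(a)\cap N(a')\cap B}$ by scanning $B$; and, for a candidate $a$, the size $\card{X}$ attached to $Y=N(a)\cap B$ by, for each $a'$, scanning $Y$. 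Each such computation keeps only a constant number of $O(\log n)$-bit counters together with the loop indices, runs in $O(\card{A}^2\cdot\card{B})\leq O(n^3)$ time per pair and hence $\poly(n,Q(t,1/\gamma))$ time overall, and --- crucially --- produces exactly the same numbers that~\citet*{AlonDLRY92} compute, so the tester's output guarantee is unchanged. Assembling the pieces yields a $\gamma$-regular partition with $t\leq k\leq Q(t,1/\gamma)$, and every space total above is $O(n\cdot Q(t,1/\gamma))$ bits, which is within the claimed $O(n\cdot Q(t,1/\gamma)\log n)$; the time is $\poly(n,Q(t,1/\gamma))$.

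The main obstacle --- and the only reason this is not immediate from~\Cref{thm:regularityalgo} --- is exactly this reliance on matrix products: the textbook implementation stores and multiplies $\card{C_i}\times\card{C_j}$ matrices, costing $\Theta((n/k)^2)$ space, which is $\Theta(n^2)$ when $k=O(1)$ and thus far over budget. The resolution is the observation that all of that linear algebra is used only to evaluate co-degrees and deviation-set sizes, each of which is a single integer that can be streamed off the adjacency oracle with a counter. A secondary point needing care is that several thresholds in~\citet*{AlonDLRY92} are tuned to their setting; one should check that the polynomial-in-$\gamma$ slack making their tester one-sided survives when the co-degrees are computed exactly by enumeration (it does, since the enumeration alters none of those quantities) and that the re-equitization keeps the class count below $Q(t,1/\gamma)$ as in their analysis.
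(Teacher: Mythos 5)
Your proposal is correct and follows essentially the same route as the paper's proof sketch: you identify matrix multiplication (used in the Alon et al.\ tester to compute co-degrees) as the only space bottleneck, replace it by on-the-fly enumeration over vertices driven by adjacency-matrix queries with $O(\log n)$ scratch, and observe that storing the partition labels and the per-round witness bitmaps costs $O(n\cdot Q(t,1/\gamma))$ bits, which fits the claimed $O(n\cdot Q(t,1/\gamma)\log n)$ bound while the poly-time budget of the proposition absorbs the loss of the $n^\omega$ running time. The paper's own argument is terser but makes exactly the same two observations.
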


\subsection{Fox's Triangle Removal Lemma}\label{sec:triangle-removal}

Similar to the Regularity Lemma, the Triangle Removal Lemma is another highly useful tool in extremal combinatorics. While original proofs of this lemma were based on the regularity lemma,~\citet*{Fox11} presented a proof that bypasses regularity lemma and thus obtains stronger bounds. 
We will use this result also in one of our streaming algorithms. 

\begin{proposition}[\cite{Fox11}]\label{prop:triangle-removal}
    There exists an absolute constant $b > 1$ such that the following is true. For any $\gamma \in (0,1)$ let $\delta := \delta(\gamma)$ be inverse of the tower of twos of height $b \cdot \log{(1/\gamma)}$, i.e., $\delta^{-1} = 2 \upuparrows b \cdot \log{(1/\gamma)}$.  Then, any $n$-vertex graph with at most $\delta \cdot n^3$ triangles can be made triangle-free by removing at most $\gamma \cdot n^2$ edges. 
\end{proposition}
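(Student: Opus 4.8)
The plan is to proceed along the classical route to a triangle removal lemma --- a regularity-type partition, a ``cleaning'' step, and a triangle counting lemma --- and then to indicate where the sharper tower-of-$\log$ bound in the statement comes from.

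First I would fix a regularity parameter $\gamma'$, to be chosen as a small function of $\gamma$, and apply a regularity partition (e.g.\ \Cref{thm:regularityalgo}) to the input graph $G$ on $n$ vertices, obtaining an equitable partition $V = C_0 \cup C_1 \cup \cdots \cup C_k$ with $|C_0| \leq \gamma' n$, classes $C_1,\dots,C_k$ each of size $n/k$, and all but $\gamma'\binom{k}{2}$ of the pairs $(C_i,C_j)$ being $\gamma'$-regular; here $k$ is at most a tower-type function of $1/\gamma'$ (concretely, $\logstar$ of the bound on $k$ is $\poly(1/\gamma')$ by \Cref{thm:regularityalgo}). Next I would form the cleaned graph $G'$ by deleting (i) all edges incident to $C_0$, (ii) all edges inside a class $C_i$, (iii) all edges across an irregular pair, and (iv) all edges across a pair of density below $\gamma/4$. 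A routine count bounds the number of deleted edges by roughly $\gamma' n^2 + n^2/(2k) + \tfrac12\gamma' n^2 + \tfrac{\gamma}{8}n^2$, which is at most $\gamma n^2$ once $\gamma' \leq \gamma/8$ and $k \geq 4/\gamma$.

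The crux is the dichotomy: I claim $G'$ is triangle-free unless $G$ contains at least $\delta n^3$ triangles. Indeed, any triangle in $G'$ has its three vertices in three distinct classes $C_a, C_b, C_c$, and by construction each of $(C_a,C_b), (C_b,C_c), (C_a,C_c)$ is a $\gamma'$-regular pair of density at least $\gamma/4$. Choosing $\gamma'$ a constant factor below $\gamma/4$, the standard triangle counting lemma forces $G$ to contain at least an $\Omega(\gamma^3)$-fraction of the $(n/k)^3$ triples of $C_a \times C_b \times C_c$, i.e.\ at least $\Omega(\gamma^3/k^3)\cdot n^3$ triangles. Taking $\delta := \Omega(\gamma^3/k^3)$ then gives exactly the contrapositive of the Proposition. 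This already establishes \emph{some} triangle removal lemma --- but since $k$ is tower-type in $1/\gamma' = \Theta(1/\gamma)$, it only yields $\delta^{-1}$ equal to a tower whose \emph{height} is polynomial in $1/\gamma$, far weaker than the height $O(\log(1/\gamma))$ claimed here.

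To obtain the stated bound I would instead follow Fox's argument~\cite{Fox11}, whose whole point is to break the circular dependence that forces $\gamma'$, and hence the tower height, to be polynomially small in $\gamma$. Rather than a single call to the full regularity lemma followed by cleaning, Fox builds the partition through an iterative, ``triangle-aware'' refinement: one maintains successively finer partitions and refines only when a \emph{dense} triple of current parts fails to exhibit the number of triangles predicted by the counting heuristic, using at each stage only the uniformity quality appropriate to the relevant density scale. The decisive feature is that the bookkeeping is arranged so that only $O(\log(1/\gamma))$ refinement rounds ever occur --- morally one round per dyadic density scale down to $\gamma$ --- and the number of parts is controlled by a tower whose height is proportional to the number of rounds, giving $\delta^{-1} = 2\upuparrows(b\log(1/\gamma))$ as stated. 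The main obstacle, and the genuinely new content of Fox's proof, is precisely this decoupling: certifying ``either $G'$ is triangle-free or $G$ has $\gtrsim \delta n^3$ triangles'' while never letting the (growing) number of parts feed back into the required regularity parameter, so that the refinement terminates after $O(\log(1/\gamma))$ rounds rather than $\poly(1/\gamma)$. Everything else --- the edge-deletion count, the counting lemma, and the contrapositive bookkeeping --- is routine.
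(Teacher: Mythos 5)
The paper does not prove this proposition; it is stated as a black-box citation to Fox's \emph{A new proof of the graph removal lemma}~\cite{Fox11}, and the authors use it only as an external tool (to bound RS-graph density in \Cref{lem:RS-density}). So there is no ``paper's own proof'' to compare against.

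Your proposal correctly diagnoses the situation: the standard Szemer\'edi-regularity route (regularize, clean, apply the triangle counting lemma) does prove \emph{a} triangle removal lemma, but since the partition size $k$ is tower-type in $1/\gamma'$ with $\gamma' = \Theta(\gamma)$, the resulting $\delta^{-1}$ is a tower whose \emph{height} is $\poly(1/\gamma)$, not $O(\log(1/\gamma))$. You are right that to obtain the stated tower height one must use Fox's argument, which replaces the full regularity lemma by an iterative refinement driven by a mean-square-density / counting-deficiency increment that terminates after $O(\log(1/\gamma))$ rounds. However, at that point your write-up stops actually proving anything and simply refers the reader to~\cite{Fox11}. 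The genuinely hard content of the proposition --- (i) formulating a counting lemma that works with the weaker, Frieze--Kannan-style uniformity that the iterative refinement delivers, and (ii) showing that the refinement can be organized so that the number of iterations, and hence the tower height, is only logarithmic in $1/\gamma$ --- is described in one sentence but never carried out. As a \emph{proof}, this is therefore incomplete. As a \emph{sketch-plus-citation}, it is accurate and arguably more informative than the paper itself, which gives no sketch at all.

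One small technical caveat in your classical-route paragraph: you need $\gamma'$ small enough not just relative to $\gamma$ but also small enough for the triangle counting lemma to apply at density threshold $\gamma/4$ (typically $\gamma'$ must be at most a small constant times $\gamma$, and one also needs $(n/k)$ to be large, i.e.\ $n$ sufficiently large relative to $k$); and you should choose the parameter $t$ in the regularity lemma so that $k \geq t \geq 4/\gamma$, which you implicitly assume. These are routine but worth stating.
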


\subsection{\rs Graphs}\label{sec:rs}

A matching $M$ in a graph $G$ is called an \textbf{induced matching} iff the subgraph of $G$ induced on vertices of $M$ only contains the edges of $M$ itself; in other words, 
there are no other edges between the vertices of this matching. 

\begin{definition}\label{def:rs}
For integers $r,t \geq 1$, a graph $G=(V,E)$ is called an \emph{$(r,t)$-\rs graph} (RS graph for short) iff its edge-set $E$ can be partitioned into $t$ \underline{induced} matchings $M_1,\ldots, M_t$, each of size $r$. For any integer $n \geq 1$ and parameter $\beta \in (0,1/2)$, we use $\RS{n}{\beta}$ to denote the maximum number of edges in any $n$-vertex RS graph with induced matchings of size $\beta \cdot n$. 
\end{definition}

RS graphs have been extensively studied as they arise naturally in property testing, PCP constructions, additive combinatorics, streaming algorithms, graph sparsification, etc. (see, e.g.,~\cite{BirkLM93,HastadW03,FischerLNRRS02,Alon02,TaoV06,AlonS06,AlonMS12,GoelKK12,FoxHS15,AssadiB19,KapralovKTY21}).  In particular, a line of work 
initiated by~\citet*{GoelKK12} have used different constructions of these graphs to prove communication complexity and streaming lower bounds for graph streaming algorithms~\cite{GoelKK12,Kapralov13,Konrad15,AssadiKLY16,AssadiKL17,CormodeDK19,AssadiR20,Kapralov21,AssadiB21,ChenKPSSY21}. In this work however, we shall use them as an upper bound tool. The only other upper bound application of these graphs in a similar context that we are aware of is the communication protocols of~\cite{GoelKK12}: they show that to obtain a one-way communication protocol for $\eps \cdot n$-additive approximation of matchings, roughly $O(\RS{n}{\eps})$ communication is sufficient and also necessary. 

We establish a simple property of the $\RS{n}{\beta}$ function in~\Cref{def:rs} that relates density of different RS graphs with similar parameters (see~\Cref{app:prop-rs-move} for the proof). 

\begin{claim}\label{clm:rs-move}
    For any integer $n \geq 1$ and real $0 < \beta < 1$,  $\RS{2n}{3\beta} \leq O(1) \cdot \RS{n}{\beta}$. 
\end{claim}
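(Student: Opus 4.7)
The plan is to exhibit, starting from any extremal $(6\beta n, t)$-RS graph $G$ on $2n$ vertices that realizes $\RS{2n}{3\beta} = 6\beta n \cdot t$, an $n$-vertex RS graph with induced matchings of size exactly $\beta n$ whose edge count is a constant fraction of $\RS{2n}{3\beta}$. Let $M_1,\ldots,M_t$ be the induced matchings of $G$, each of size $6\beta n$. The construction is by random vertex subsampling: pick a uniformly random vertex subset $A \subseteq V(G)$ of size $n$ and consider $G[A]$. The structural fact I rely on is that the restriction of an induced matching to a vertex subset is again an induced matching, so each $M_i \cap E(G[A])$ is an induced matching in $G[A]$. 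After discarding the $i$'s with $|M_i \cap E(G[A])| < \beta n$ and truncating each remaining restricted matching to exactly $\beta n$ edges, what is left forms a legitimate $n$-vertex RS graph with induced matchings of size $\beta n$, whose total size lower-bounds $\RS{n}{\beta}$.

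First I would compute $\E[X_i]$ where $X_i := |M_i \cap E(G[A])|$: each edge of $M_i$ lies inside $A$ with probability $\frac{n(n-1)}{2n(2n-1)} \to \tfrac{1}{4}$, so $\E[X_i] \to \tfrac{3}{2}\beta n$, which comfortably exceeds the threshold $\beta n$. For the lower tail, the crucial observation is that $M_i$ is a matching, hence its edges are pairwise vertex-disjoint. Writing $X_i = \sum_{e \in M_i} Y_e$ with $Y_e := \mathbf{1}[e \subseteq A]$, each $Y_e$ has variance at most $1/4$, and for distinct $e,f \in M_i$ (which together span four distinct vertices) the covariance $\mathrm{Cov}(Y_e, Y_f) = \Pr[\text{all four endpoints in } A] - \Pr[Y_e]\Pr[Y_f]$ is non-positive and of order $O(1/n)$ under sampling without replacement. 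This gives $\mathrm{Var}(X_i) \leq |M_i|/4 = O(\beta n)$, and Chebyshev's inequality then yields $\Pr[X_i \geq \beta n] \geq 1/2$ once $\beta n$ exceeds an absolute constant.

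By linearity, $\E\bigl[\,|\{i : X_i \geq \beta n\}|\,\bigr] \geq t/2$, so some deterministic choice of $A$ yields at least $t/2$ restricted matchings of size $\geq \beta n$; extracting $\beta n$ edges from each produces an RS graph on $n$ vertices with induced matchings of size $\beta n$ and at least $\beta n \cdot t/2 = |E(G)|/12$ edges, proving $\RS{n}{\beta} \geq \tfrac{1}{12}\,\RS{2n}{3\beta}$. The only genuinely technical step is the concentration bound, and it is made essentially painless by the fact that $M_i$ is a matching so that survival indicators are weakly negatively correlated; any alternative (for instance, Hoeffding for sampling without replacement) would give an even stronger exponential tail bound. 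Finally, the edge cases where $\beta n$ is too small for Chebyshev to give constant probability, or where $\beta \geq 1/6$ so the left-hand side is vacuous, can be absorbed into the $O(1)$ constant because both sides are then bounded by absolute constants.
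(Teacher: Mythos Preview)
Your approach is essentially the same as the paper's: sample $n$ of the $2n$ vertices uniformly from the extremal RS graph and argue that a constant fraction of the induced matchings survive with size at least $\beta n$. The paper uses a Chernoff-type tail (splitting off the regime $\beta < 1/\sqrt{n}$ separately so that matchings are large enough for concentration to bite), whereas you exploit the negative correlation of the edge-survival indicators together with Chebyshev, and then average over $A$ to pick a good subset deterministically---a clean variant of the same idea.

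One slip in your edge-case handling: when $\beta n$ is below the constant threshold needed for Chebyshev, you assert that ``both sides are then bounded by absolute constants.'' That is false. If $\beta n = O(1)$ but $n$ is large, both $\RS{2n}{3\beta}$ and $\RS{n}{\beta}$ are $\Theta(n^2)$, not $O(1)$. The conclusion you want (that their ratio is $O(1)$) is still correct, but it needs exactly the argument the paper gives for its small-$\beta$ case: when $\beta < 1/\sqrt{n}$, known dense RS constructions give $\RS{n}{\beta} = (1-o(1))\binom{n}{2}$ and $\RS{2n}{3\beta} = (1-o(1))\binom{2n}{2}$, so the ratio is at most $4+o(1)$. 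This subsumes the regime $\beta n = O(1)$ that your Chebyshev step does not cover.
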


\subsection{Succinct Dynamic Dictionaries}\label{sec:succinct} 

We need to use succinct dynamic dictionaries from prior work in~\cite{BrodnikM99,Pagh01,RamanR03}. For concreteness, we use the construction of~\cite{RamanR03} although the other ones work as fine also
for us. 
\begin{proposition}[c.f.~\cite{RamanR03}]\label{prop:succinct-dict}
	There exists a dynamic data structure $\DS$ for maintaining a subset $S$ of size at most $s$ from a universe $U$ of size $u$  that supports the following operations: 
	\begin{itemize}
		\item $\DS.\,\add(a)$: Inserts an element $a \in U$ to the set $S$;
		\item $\DS.\,\member(a)$: Returns whether the given element $a \in U$ belongs to $U$ or not; 
	\end{itemize}
	The data structure requires $(1+o(1)) \cdot \log{{u}\choose{s}}$ bits of space to store $S$ and answers each query in $O(1)$ amortized expected time or $O(s)$ worst-case deterministic time.
\end{proposition}

% \clearpage

\section{A Matching Cover via Regularity Lemma}\label{sec:matcover}

In this section, we give a polynomial time algorithm for constructing an {\em matching cover} of size $o(n^2)$. We use the algorithm of this section both in the streaming model and the dynamic model.

Let us start by formally defining matching covers.

\begin{definition}[\cite{GoelKK12}]
\label{def:matching-cover}
    A subgraph $H$ of an $n$-vertex graph $G$ is an $\alpha$-matching cover of $G$ if for any disjoint subsets of vertices $(A,B)$ in $G$, we have $\mu(H[A,B]) \geq \mu(G[A,B]) - \alpha \cdot n$. 
\end{definition}

The following theorem is our main result of this section. 

\begin{theorem}\label{thm:matching-cover-polytime}
Given any $n$-vertex graph $G$, for some
$\alpha = (\log^*n)^{-\Omega(1)}$, there is an $O(n^{\omega}\log n)$ time algorithm, formalized below as \Cref{alg:matching-cover}, for finding an $\alpha$-matching cover $H$ of $G$ with at most $n^2/(\logstar n)^{\Omega(1)}$ edges.
\end{theorem}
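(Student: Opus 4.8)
The plan is to follow the strategy outlined in the technical overview: reduce the construction of an $\alpha$-matching cover to a purely ``hitting set'' problem, solve that via the Regularity Lemma together with a support-consolidation argument, and then lift a solution of the hitting set problem back to a matching cover using a Hall-type argument in the spirit of~\cite{GoelKK12}. Concretely, I would proceed as follows. First, I would invoke the algorithmic Regularity Lemma (\Cref{thm:regularityalgo}) with a suitable constant regularity parameter $\gamma$ and suitable $t \to \infty$ slowly, obtaining in $n^\omega \cdot Q(t,1/\gamma) = O(n^\omega \log n)$ time an equitable $\gamma$-regular partition $C_0,C_1,\dots,C_k$ with $t \le k \le Q(t,1/\gamma)$; the key point is that $\log^* Q(t,1/\gamma) \le \poly(t,1/\gamma)$, so choosing $t,1/\gamma$ to be tiny functions of $\log^* n$ keeps $k = 2^{O(\poly\log^* n)}$, hence $1/k = (\log^* n)^{-\Omega(1)}$, which is what governs both $\alpha$ and the sparsity. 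Then I would define $H$ by: keep \emph{all} edges incident to $C_0$, all edges inside each $C_i$, all edges between sparse pairs and between irregular pairs, and between each dense regular pair $C_i,C_j$ keep a random subsample $G[p]$ at rate $p = \Theta(1/k)$ (or, to make it deterministic and $O(n^\omega \log n)$-time, an explicitly chosen sparse subgraph of the right density — but a sampling argument with a union bound over the $2^{O(\poly\log^*n)}$ relevant ``configurations'' suffices to derandomize, or one simply argues existence and then searches, since $k$ is so small). A direct count shows $|E(H)| \le p n^2 + O(\gamma n^2) + (\text{edges to } C_0) + (\text{edges inside classes}) = n^2/(\log^* n)^{\Omega(1)}$, using that each $C_i$ has size $n/k$ so intra-class edges total $O(n^2/k)$, and $|C_0| \le \gamma n$.

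The heart of the argument is showing $H$ is an $\alpha$-matching cover. Following the overview, I would first reduce to the $\alpha$-hitting-set statement: it suffices to show that for every pair of disjoint sets $(A,B)$ with $\mu(G[A,B]) \ge \alpha n$, the subgraph $H$ contains at least one edge between $V(M)\cap A$-side and $V(M)\cap B$-side for a near-maximum matching $M$ of $G[A,B]$ — and then bootstrap: given such a hitting-set guarantee, one shows $\mu(H[A,B]) \ge \mu(G[A,B]) - \alpha n$ by the standard iterative/Hall's-theorem argument (using \Cref{prop:hall}), repeatedly peeling off a hit edge and recursing, exactly as in~\cite{GoelKK12}. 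For the hitting-set step itself, take a matching $M$ of size $\ge \alpha n$ in $G[A,B]$. Discard the $o(n)$ edges of $M$ touching $C_0$ and the $O(\gamma n)$ edges lying in irregular or sparse pairs (these are negligible if $\alpha$ is a large enough multiple of $\gamma$ plus the $C_0$ slack); what remains is a matching $M_0$ of size $\Omega(\alpha n)$ living entirely on dense regular pairs. Now comes the consolidation argument: view $M_0$ as a fractional matching in the ``meta-graph'' on supernodes $\{1,\dots,k\}$ where the weight on edge $\{i,j\}$ is (number of $M_0$-edges between $C_i$ and $C_j$)$/(n/k)$; this is a fractional matching of total value $\Omega(\alpha n)/(n/k) = \Omega(\alpha k)$ in a graph on $k$ vertices. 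By an edge-sampling/rounding argument one extracts an integral sub-object, i.e. a set of pairs forming a (near-)matching in the meta-graph on which a constant fraction of the original $M_0$-mass concentrates; equivalently, there is a single pair $(C_i,C_j)$ — or a few disjoint pairs — such that $M_0$ matches an $\Omega(\alpha k / k) = \Omega(\alpha)$ fraction of $C_i$ into $C_j$, i.e. $\ge \gamma |C_i|$ vertices once $\alpha$ is chosen $\gg \gamma$. Applying $\gamma$-regularity to $X = V(M_0)\cap C_i$, $Y = V(M_0)\cap C_j$ gives $d(X,Y) > d(C_i,C_j) - \gamma > 0$ (the pair is dense), so $G[A,B]$ has many edges between $X$ and $Y$; finally, since we subsampled this dense regular pair at rate $p = \Theta(1/k)$ and $|X|,|Y| = \Omega(\alpha n / k)$ with density bounded below by a constant, with overwhelming probability (over the subsampling) at least one of these edges survives into $H$, and a union bound over all ``relevant'' $(X,Y)$-configurations — of which there are only $2^{O(\poly\log^* n)}$ once we note it suffices to union over pairs of indices $(i,j)$ and over a net of possible support patterns — makes this hold simultaneously for all $(A,B)$, yielding the hitting-set property deterministically for the chosen $H$.

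I expect the main obstacle to be the consolidation step — making precise the claim that a fractional matching of value $\Omega(\alpha k)$ spread over $\binom{k}{2}$ meta-edges can be ``concentrated'' onto one (or $O(1)$) meta-edge(s) carrying $\Omega(\alpha)$-fraction of vertices of some class, \emph{while keeping the concentrated matching supported on the same vertex set} $V(M)$ (or a large subset), since that same-support property is exactly what the Hall-type bootstrapping into a matching cover needs. The subtlety is that naive LP rounding of the meta fractional matching could move mass to meta-edges on which the \emph{underlying} $G[A,B]$ has no matching of the required size between the \emph{specific} surviving vertices; one has to round carefully (e.g.\ sample each meta-edge $\{i,j\}$ with probability proportional to its weight, argue that in expectation a constant fraction of $M_0$-edges land on ``chosen'' pairs, and that each chosen pair then hosts $\Omega(\alpha)$-fraction of its class) and then take the best pair. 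A secondary technical point is bookkeeping the several sources of additive loss — the $C_0$ class ($\le \gamma n$), the irregular pairs ($\le \gamma\binom{k}{2}$ of them, contributing $\le \gamma n^2 / \binom{k}{2} \cdot \binom{k}{2} / (n/k)$-ish edges... ), the sparse pairs, and the constant-fraction loss in consolidation — and choosing $\gamma, t$ as explicit slowly-growing/decaying functions of $\log^* n$ so that the \emph{final} $\alpha$ and the \emph{final} edge bound are both $(\log^* n)^{-\Omega(1)}$; this is routine but must be done with care so the three occurrences of $\Omega(1)$ in the statement are mutually consistent. Everything else — the Regularity Lemma call, the edge count of $H$, and the Hall's-theorem peeling — is standard given the tools cited in the preliminaries.
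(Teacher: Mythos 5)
Your proposal follows essentially the same route as the paper: reduce to an $\alpha$-hitting-set problem, solve it via a regularity decomposition with subsampling of dense regular pairs plus a ``consolidation'' lemma for fractional matchings on the meta-graph, and lift back to a matching cover via Hall's theorem. The high-level architecture, including the identification of consolidation as the crux, matches the paper's Algorithm~\ref{alg:matching-cover}, Lemma~\ref{lem:fractionalmatching}, Lemma~\ref{lem:whpmatchingcover}, and Lemma~\ref{lem:hitcover}.

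However, there is a genuine gap in how you handle the consolidation step, and you misdiagnose where the real difficulty lies. You worry that rounding the meta fractional matching could ``move mass to meta-edges on which the underlying $G[A,B]$ has no matching.'' That problem does not actually arise: the paper's consolidation lemma (Lemma~\ref{lem:fractionalmatching}) already guarantees $\supp(\b{y}) \subseteq \supp(\b{x})$, so no mass is created on new meta-edges. The real issue is one of \emph{sidedness}. After consolidation you have a good pair $(C_i,C_j)$ with a $\Theta(\gamma)$ fraction of each class matched by $M$, and you propose to apply $\gamma$-regularity to $X = V(M_0)\cap C_i$ and $Y = V(M_0)\cap C_j$. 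But nothing forces $X$ to lie in the $A$-side or $Y$ in the $B$-side; the matched vertices of $M$ in $C_i$ could be almost entirely in $B$, in which case the edge regularity hands you could run from $A$ to $A$ or $B$ to $B$, witnessing nothing about $H[A,B]$. The paper fixes this with a preprocessing step your argument is missing: it prunes $M$ to a sub-matching $M'$ by deleting edges whose $A$-endpoint sits in a $B$-majority cluster or whose $B$-endpoint sits in an $A$-majority cluster. This costs only a factor $2$ in $|M|$, but it guarantees that no cluster is more than $3{:}1$ skewed toward one side among $M$-matched vertices, so that when the consolidation lands a $4\gamma$-fraction of $C_i$ matched via $M'$, at least a $\gamma$-fraction of $C_i$ is matched on the $A$-side (and symmetrically for $C_j$ and $B$), which is what lets regularity produce an edge from $A$ to $B$. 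Without this trick the hitting-set claim does not go through. Separately, note that the consolidation lemma only yields per-pair weight $\Theta(\gamma)$, not $\Omega(\alpha)$ as you claim (these differ since $\alpha = \Theta((\gamma\log(1/\gamma))^{1/3}) \gg \gamma$); it is the $\gamma$-level concentration that is both achievable and sufficient. Finally, two minor points: the sparsity bound $n^2/(\log^*n)^{\Omega(1)}$ comes from the \emph{lower} bound $k \geq t = (\log^*n)^{\Omega(1)}$, not from the upper bound $k \leq Q(t,1/\gamma)$ as your parameter discussion suggests; and the derandomization you contemplate is unnecessary since the theorem's algorithm is allowed to be randomized and is analyzed w.h.p.
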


Even though existence of $o(n^2)$ size $o(1)$-matching covers for bipartite graphs was already proved by \citet*{GoelKK12}, it was not known whether it is possible to find one in polynomial time (nor whether they also exist for general, not necessarily bipartite, graphs).

\subsection{First Step: A Hitting Set Argument}

In this section, we give an algorithm for finding an {\em $\alpha$-hitting set}, defined below. We later show in \cref{sec:hitting-to-matching-cover} that this can be turned into a matching cover.

\begin{definition}\label{def:hitting-set}
    A subgraph $H$ of an $n$-vertex graph $G$ is an $\alpha$-hitting set of $G$ if for any disjoint subsets of vertices $(A, B)$ in $G$ satisfying $|A| = |B| = \alpha n$ and $\mu(G[A, B]) = \alpha n$, there is at least one edge between $A$ and $B$ in $H$.
\end{definition}

The following lemma is our main guarantee of this section.

\begin{lemma}\label{lem:hittingset}
Given any $n$-vertex graph $G$, for some $\alpha = (\log^*n)^{-\Omega(1)}$, there is an $O(n^{\omega}\log n)$ time algorithm, formalized below as \Cref{alg:matching-cover}, for finding an $\alpha$-hitting set $H$ of $G$ with at most $n^2/(\logstar n)^{\Omega(1)}$ edges.
\end{lemma}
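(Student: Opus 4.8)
The plan is to run the algorithmic regularity lemma (\Cref{thm:regularityalgo}) on $G$ with parameters $t$ and $\gamma$ to be tuned at the end, obtaining an equitable $\gamma$-regular partition $C_0, C_1, \ldots, C_k$ with $t \le k \le Q(t, 1/\gamma)$. The hitting set $H$ is then defined by keeping \emph{all} edges incident on the exceptional class $C_0$, keeping \emph{all} edges between irregular pairs and between sparse (low-density) regular pairs, and \emph{subsampling} the edges between each dense regular pair $(C_i, C_j)$ independently at rate $p = o(1)$. The edge count is easy to bound: edges touching $C_0$ number at most $\gamma n^2$; edges in irregular pairs number at most $\gamma \binom{k}{2} \cdot (n/k)^2 \le \gamma n^2$; edges in sparse pairs number at most $d_0 n^2$ where $d_0$ is the density threshold; and subsampled edges number at most $p n^2$ w.h.p. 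Choosing $\gamma, d_0, p$ all to be $(\logstar n)^{-\Omega(1)}$ gives the claimed $n^2/(\logstar n)^{\Omega(1)}$ bound, and the running time is dominated by \Cref{thm:regularityalgo}, i.e. $n^\omega \cdot Q(t, 1/\gamma) = O(n^\omega \log n)$ once $t, 1/\gamma$ are constants (or slowly growing), modulo checking that the subsampling and density computations fit in the same budget.

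The heart of the argument is correctness: given disjoint $(A, B)$ with $|A| = |B| = \alpha n$ and $\mu(G[A,B]) = \alpha n$ (so $G[A,B]$ has a perfect matching $M$ on $A \cup B$), I must exhibit an edge of $H$ between $A$ and $B$. I would first discard the ``trivial'' cases: if many vertices of $A \cup B$ lie in $C_0$, or many edges of $M$ fall into irregular or sparse pairs, then since $H$ retains all such edges we are immediately done — this forces the bulk of $M$ to live on regular dense pairs among $C_1, \ldots, C_k$, with both endpoints outside $C_0$. Now the consolidation step from the technical overview comes in: view the restriction of $M$ to regular dense pairs as a fractional matching on the meta-graph whose nodes are $C_1, \ldots, C_k$ and whose edges are the dense regular pairs, where the weight on $(C_i, C_j)$ is (proportional to) the number of $M$-edges between $C_i$ and $C_j$. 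A randomized rounding / edge-sampling argument on this meta-graph produces a sub-collection of pairs carrying an almost-as-large matching $M'$ on $V(M)$ supported on only a \emph{bounded} number of pairs, hence some single pair $(C_i, C_j)$ has $\Omega(|M'| / (\text{number of pairs}))$ edges of $M'$ between them — a $\Theta(1)$ (not $\Theta(1/k)$) fraction of $C_i$ matched into $C_j$. Call the matched subsets $X \subseteq C_i$, $Y \subseteq C_j$; we have $|X| = |Y| \ge \gamma \cdot |C_i|$ once the fraction beats $\gamma$, and there is a perfect matching between $X$ and $Y$ in $G$, so $d(X, Y) \ge 1/|X| $ — actually we need more: a perfect matching only gives density $\ge 1/|X|$, which is \emph{not} enough to contradict $|d(X,Y) - d(C_i,C_j)| < \gamma$. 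So the right way to use regularity here is the reverse direction: because $(C_i, C_j)$ is dense, $d(C_i, C_j) \ge d_0$, and $\gamma$-regularity forces $d(X, Y) \ge d_0 - \gamma > 0$, i.e. there is \emph{some} $G$-edge between $X$ and $Y$; then I argue that \emph{after subsampling at rate $p$}, the expected number of surviving edges between $X$ and $Y$ is $\ge p (d_0 - \gamma) |X||Y| = p(d_0-\gamma)|X|^2 = \omega(1)$ since $|X| = \Omega(n/k)$, so w.h.p. at least one survives, and a union bound over all $\binom{n}{\alpha n}^2 \le 2^{O(\alpha n \log(1/\alpha))}$ choices of $(A,B)$ is absorbed because $p(d_0-\gamma)|X|^2 = \Omega(p d_0 n^2 / k^2) \gg \alpha n \log(1/\alpha)$ for suitable parameters.

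The main obstacle I anticipate is making the consolidation/rounding step yield the clean conclusion ``$\Theta(1)$ fraction of some $C_i$ is matched into some $C_j$'' while simultaneously controlling how much of $M$ is lost — the fractional-matching-on-the-meta-graph view needs the rounding to (a) preserve the vertex support $V(M)$ up to a $(1 - o(1))$ fraction so we still have a near-perfect matching on the relevant vertex subsets, and (b) collapse onto $O(1)$ (or at least $k^{o(1)}$) pairs so that pigeonhole gives a constant fraction per pair. Getting both from a single edge-sampling process, and threading the resulting error terms through so that the final $\alpha$ and the space/time bounds all come out as $(\logstar n)^{\pm \Omega(1)}$ governed by the tower-type function $Q$, is where the real work lies; everything else (edge counting, the trivial-case dichotomy, the concentration + union bound at the end) is routine once the parameters $t \sim \logstar n$-ish, $\gamma, d_0, p, \alpha$ are chosen in the correct dependency order.
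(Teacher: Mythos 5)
Your proposal captures the main structure of the paper's argument: run the algorithmic regularity lemma, keep all edges in bad pairs (irregular, sparse, or touching $C_0$) and subsample dense regular pairs; recognize that an $(\alpha n)$-size matching could spread its edges thinly across $\binom{k}{2}$ pairs; consolidate via a fractional-matching-on-the-meta-graph argument to find a single pair $(C_i,C_j)$ whose matched fraction exceeds the regularity threshold $\gamma$; then apply regularity and a Chernoff plus union bound to argue a subsampled edge survives. Your edge-count and the union bound over $(A,B)$ also go through (the paper union-bounds over $(X,Y)$ subsets of good pairs directly, which is cleaner but equivalent in strength). Two remarks are needed before addressing the real gap: first, you only need an $\Omega(\gamma)$ fraction of some cluster matched, not a $\Theta(1)$ fraction as you initially write (the paper's consolidation only delivers $y_{ij} \geq 4\gamma$, and that is enough because $\gamma$ is the regularity threshold); second, you recover the correct use of regularity (``the reverse direction'') after initially heading toward a density-from-matching argument that would not work.

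The genuine gap is that your argument produces an edge of $H$ between two sets $X \subseteq C_i$, $Y \subseteq C_j$ of $M$-matched vertices, but gives no control on whether that edge goes from $A$ to $B$ as the hitting-set definition requires. You set $X,Y$ to be the full matched subsets of $C_i,C_j$, so each of $X,Y$ is a mix of $A$-vertices and $B$-vertices, and the edge supplied by regularity could go $A\!\to\!A$ or $B\!\to\!B$. To fix this one must exhibit large sets $X \subseteq C_i \cap A$ and $Y \subseteq C_j \cap B$, but nothing in the consolidation alone prevents, say, all of $V(M')\cap C_i$ lying on the $B$ side. The paper handles this with a preliminary modification of $M$: call a cluster $A$-majority (resp.\ $B$-majority) if more than $3/4$ of its $M$-matched vertices lie in $A$ (resp.\ $B$), delete from $M$ every edge whose $A$-endpoint sits in a $B$-majority cluster or whose $B$-endpoint sits in an $A$-majority cluster, and show the surviving matching $M'$ still has size $\geq |M|/2$. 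After consolidation one finds a good pair $(C_i,C_j)$ with $x_i,x_j \geq 4\gamma$ and some $M'$-edge $(u,v)$, $u\in A\cap C_i$, $v\in B\cap C_j$; since $u\in A$ is $M'$-matched in $C_i$, the cluster $C_i$ cannot be $B$-majority, so at least a quarter of $V(M)\cap C_i$ is in $A$, giving $|C_i\cap V(M)\cap A| \geq \gamma|C_i|$, and symmetrically $|C_j\cap V(M)\cap B| \geq \gamma|C_j|$. Without a device of this kind (or an equivalent one), the final step of your proof does not close, so this is a missing idea rather than a cosmetic omission.
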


\Cref{thm:regularityalgo} below formalizes our algorithm for \cref{lem:hittingset}.

\begin{Algorithm}\label{alg:matching-cover}
  \begin{center}
		The construction of the matching cover for \cref{thm:matching-cover-polytime}.
	\end{center}
  
  \vspace{0.3cm}
  
  Let $t \gets (\log^*n)^{\delta}$, $\gamma \gets (\log^*n)^{-\delta}$
  for some constant $\delta\in(0,1)$
  such that $Q(t,1/\gamma) \leq \log n$.

    \begin{enumerate}[leftmargin=25pt,label=(\roman*)]
    \item Run the algorithm in \cref{thm:regularityalgo}
    to obtain a $\gamma$-regular partition
    $C_0,\ldots, C_{k}$ with
    $t = (\log^* n)^{\delta} \leq k\leq Q(t,1/\gamma) \leq \log n$.
    \label{line:2a}
    \item Let $(C_i, C_j)$ for $i \not= j$. We say $(C_i, C_j)$ is {\em good} if (i) $i, j \not= 0$,  (ii) it is $\gamma$-regular, and (iii) $d(C_i,C_j) \geq 8\gamma$. Otherwise, we say $(C_i, C_j)$ is {\em bad}.
    \item Let $F\subseteq E$ be
    a subset of edges
    obtained by $F := F_1 \union F_2 \union F_3$ where
    \begin{enumerate}
        \item $F_1$ contains
        the edges between the bad pairs;
        i.e. $F_1:= \Union_{\text{bad}\ C_i,C_j} E\intersect (C_i\times C_j)$.
        \item $F_2$ contains
        the edges within each class;
        i.e. $F_2:= \Union_{0\leq i \leq k} E\intersect (C_i\times C_i)$.
        \item $F_3$ is obtained
        by sampling the edges
        between good pairs with probability
        $p := \frac{10}{\log n}$; i.e.
        $F_3 = \Union_{\text{good}\ C_i,C_j}
        (E\intersect (C_i\times C_j))[p]$.
    \end{enumerate}
    \item Return $F$.
\end{enumerate}
\end{Algorithm}

It is not hard to see that \Cref{alg:matching-cover} outputs a subgraph with $O(\gamma n^2) = o(n^2)$ edges, since essentially the dense parts of the decomposition are subsampled and there are `few' other edges in the graph. The following claim formalizes this.

\begin{claim}\label{cl:fewedgespicked}
    The output $F$ of \Cref{alg:matching-cover}, w.h.p., has at most $O(\gamma n^2)$ edges.
\end{claim}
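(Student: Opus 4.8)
The plan is to bound $|F_1|$, $|F_2|$, and $|F_3|$ separately, with only the last piece requiring a probabilistic argument; the whole claim amounts to counting against the parameters of the regularity partition produced in step~\ref{line:2a}.

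First I would handle $F_1$, the edges between bad pairs, by splitting the bad pairs according to which of the three defining defects applies. Edges incident on the exceptional class $C_0$ number at most $|C_0|\cdot n \le \gamma n^2$ by part~1 of the definition of a $\gamma$-regular partition. For the pairs $(C_i,C_j)$ with $i,j\neq 0$ that fail to be $\gamma$-regular, part~2 bounds their number by $\gamma\binom{k}{2}$, and since the partition is equitable each such pair spans at most $|C_i||C_j| \le (n/k)^2$ edges, for a total of at most $\gamma\binom{k}{2}(n/k)^2 \le \tfrac12 \gamma n^2$. For the $\gamma$-regular pairs with $d(C_i,C_j) < 8\gamma$, each spans fewer than $8\gamma(n/k)^2$ edges and there are at most $\binom{k}{2}$ of them, contributing at most $8\gamma\binom{k}{2}(n/k)^2 \le 4\gamma n^2$. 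Hence $|F_1| = O(\gamma n^2)$, deterministically. Next, for $F_2$ (edges inside the classes), the part inside $C_0$ has at most $|C_0|^2 \le \gamma^2 n^2 \le \gamma n^2$ edges, while the parts inside $C_1,\dots,C_k$ together have at most $k\cdot(n/k)^2 = n^2/k$ edges; since $k \ge t = (\logstar n)^{\delta}$ and $\gamma = (\logstar n)^{-\delta}$, i.e. $t = 1/\gamma$, this is at most $n^2/t = \gamma n^2$. So $|F_2| = O(\gamma n^2)$ as well, and this is exactly where the algorithm's choice $t = 1/\gamma$ is used.

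Finally, $|F_3|$ is a sum of independent $\mathrm{Bernoulli}(p)$ indicators, one per edge between good pairs, so $\E[|F_3|] \le p\binom{n}{2} < 5n^2/\log n$; since $\log n$ grows faster than $(\logstar n)^{\delta}$, this is at most $\gamma n^2$ for all large $n$. A Chernoff bound — taken in its multiplicative-plus-additive form so as to also cover the regime where few edges lie between good pairs — gives $|F_3| \le 2\,\E[|F_3|] + O(\log n) = O(\gamma n^2)$ except with probability at most $n^{-c}$ for any desired constant $c$, by choosing the hidden constants large enough. Adding the three bounds yields $|F| = |F_1| + |F_2| + |F_3| = O(\gamma n^2)$ w.h.p. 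I do not expect a genuine obstacle here: the only points that need care are the equitability of $C_1,\dots,C_k$ (so each has size at most $n/k$), the fact that the number of irregular pairs is controlled by the same $\gamma$ that bounds $|C_0|$ and appears in the density threshold $8\gamma$, and the matching of the three scales $|C_0| \approx \gamma n$, $1/k \le \gamma$, and $p \ll \gamma$, which is what makes all three parts land at $O(\gamma n^2)$.
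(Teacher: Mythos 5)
Your proof is correct and takes essentially the same approach as the paper: a deterministic count of $F_1 \cup F_2$ against the regularity-partition parameters (equitability, $|C_0| \le \gamma n$, at most $\gamma\binom{k}{2}$ irregular pairs, density threshold $8\gamma$, and $k \ge t = 1/\gamma$), followed by a Chernoff bound for $F_3$. Your bookkeeping splits the bad pairs by defect type whereas the paper groups by where the edge lives, and your additive $+O(\log n)$ term in the $F_3$ Chernoff bound handles the low-expectation regime slightly more explicitly than the paper does, but these are cosmetic differences.
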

\begin{myproof}
    Let us first count the number of edges in $F_1$ and $F_2$. We do so by counting the number of different types of bad edges separately.
  \begin{enumerate}
    \item \textbf{Edges within each class:}
      The total number is bounded by
      \begin{align*}
        \binom{|C_0|}{2} +
        \sum_{i=1}^{k} \binom{|C_i|}{2}
        \leq \binom{\gamma n}{2} + k \binom{n/k}{2} <
        \gamma^2 n^2 + n^2 /k \leq
        \gamma^2 n^2 +
        n^2 (\log^*n)^{-\delta}.
      \end{align*}
    \item \textbf{Edges between $C_0$ and other $C_i$'s:}
      The total number is bounded by
      \begin{align*}
        \sum_{i=1}^{k} |C_0| |C_i| \leq |C_0| n \leq \gamma n^2.
      \end{align*}
    \item \textbf{Edges between irregular or sparse (i.e. density $< 8\gamma$) $C_i,C_j$'s:}
      The total number is bounded by
      \begin{align*}
        \gamma \binom{k}{2} (n/k)^2 +
        \sum_{1 \leq i<j\leq k: d(C_i,C_j) < 8\gamma} |E(C_i,C_j)|
        \leq 5 \gamma n^2.
      \end{align*}
  \end{enumerate}
  Summing up these and noting that
  $\gamma = (\log^* n)^{\delta}$ implies that $|F_1| + |F_2| \leq O(\gamma n^2)$.
  
  Moreover, since $F_3$ includes each good edge independently with probability $p = 10/\log n$, a simple Chernoff bound implies that w.h.p. $|F_3| \leq O(n^2/\log n) \ll \gamma n^2$, completing the proof.
\end{myproof}

The harder part of the proof, is to show that the sparse subgraph returned by \Cref{alg:matching-cover} is indeed a matching cover.  We continue with the following claim.

\begin{claim}\label{cl:goodpair-sample-nonempty}
  W.h.p., it holds for all $X\subseteq C_i, Y\subseteq C_j$ such that $(C_i, C_j)$ a good pair, $|X|\geq \gamma |C_i|$, and $|Y|\geq \gamma |C_j|$ that  $|F_3(X,Y)| > n^2 / \log^6 n$.
\end{claim}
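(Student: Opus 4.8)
The plan is to combine the $\gamma$-regularity of good pairs with a Chernoff bound on the independent sampling that defines $F_3$, and then take a union bound over all relevant sets.

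First I would fix a good pair $(C_i,C_j)$ and sets $X\subseteq C_i$, $Y\subseteq C_j$ with $|X|\geq \gamma|C_i|$ and $|Y|\geq \gamma|C_j|$, and lower bound the number of edges $e(X,Y)$ in the \emph{original} graph $G$. Since the partition from \Cref{thm:regularityalgo} is equitable with $|C_0|\leq \gamma n$ and $k\leq \log n$, every non-exceptional class satisfies $|C_i|\geq (1-\gamma)n/k \geq n/(2\log n)$ for large $n$. Because $(C_i,C_j)$ is good, it is $\gamma$-regular with $d(C_i,C_j)\geq 8\gamma$; hence $|d(X,Y)-d(C_i,C_j)|<\gamma$ gives $d(X,Y)>7\gamma$, so
\[
e(X,Y) > 7\gamma\,|X|\,|Y| \;\geq\; 7\gamma^3\,|C_i|\,|C_j| \;\geq\; \frac{7\gamma^3 n^2}{4\log^2 n}.
\]
Then I would note that $|F_3(X,Y)|$ is exactly $\mathrm{Bin}(e(X,Y),p)$ with $p=10/\log n$, so its mean is $\mu_{X,Y}:=p\cdot e(X,Y)\geq \tfrac{70\gamma^3 n^2}{4\log^3 n}$. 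Since $\gamma=(\logstar n)^{-\Omega(1)}$, the quantity $(\gamma\log n)^3=\gamma^3\log^3 n$ tends to infinity, so $\mu_{X,Y}/(n^2/\log^6 n)=\Theta(\gamma^3\log^3 n)\to\infty$; in particular $\mu_{X,Y}/2\geq n^2/\log^6 n$ for all large $n$. A standard multiplicative Chernoff bound then yields
\[
\Pr\!\left[\,|F_3(X,Y)|\leq \frac{n^2}{\log^6 n}\right] \;\leq\; \Pr\!\left[\,|F_3(X,Y)|\leq \frac{\mu_{X,Y}}{2}\right] \;\leq\; \exp\!\left(-\frac{\mu_{X,Y}}{8}\right) \;=\; \exp\!\left(-\Omega\!\left(\frac{\gamma^3 n^2}{\log^3 n}\right)\right).
\]

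Finally I would union bound over all choices: at most $\binom{k}{2}\leq \log^2 n$ good pairs, and at most $2^{|C_i|+|C_j|}\leq 2^n$ choices of $(X,Y)$ for each, i.e.\ at most $2^{O(n)}$ events. The one delicate point — and the only place the regularity estimate is really used — is that $e(X,Y)$, and hence the expected count $\mu_{X,Y}$ in \emph{every single} slice, is polynomially large in $n$ (not merely $\polylog(n)$): this makes the per-event failure probability $\exp(-\Omega(\gamma^3 n^2/\log^3 n))=\exp(-\omega(n))$ small enough to dominate the $2^{O(n)}$ union-bound factor. Consequently the total failure probability is at most $2^{O(n)}\cdot\exp(-\omega(n)) = \exp(-n^{2-o(1)}) \ll n^{-c}$ for every constant $c$, which gives the claimed high-probability statement.
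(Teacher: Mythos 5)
Your proposal is correct and takes essentially the same approach as the paper: use $\gamma$-regularity of the good pair to lower bound $e(X,Y)$ by roughly $\gamma^3(n/k)^2$, apply a multiplicative Chernoff bound to $|F_3(X,Y)| \sim \mathrm{Bin}(e(X,Y),p)$, and union bound over the at most $\log^2 n\cdot 2^{2n}$ choices of $(C_i,C_j,X,Y)$. The only differences are cosmetic: the paper parametrizes the Chernoff deviation so the per-event failure probability is exactly $e^{-3n}$ (just enough to beat the $e^{2n}$ union bound), whereas you observe the stronger fact that the mean is $n^{2-o(1)}$ and hence the failure probability is $\exp(-n^{2-o(1)})$.
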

\begin{myproof}
    Take a good pair $(C_i, C_j)$ and subsets $X\subseteq C_i, Y\subseteq C_j$ satisfying $|X|\geq \gamma |C_i|,|Y|\geq \gamma |C_j|$. Since $C_i,C_j$ is good, we know that $d(C_i,C_j) \geq 8\gamma$ and so
  $d(X,Y) \geq d(C_i,C_j) - \gamma \geq 7\gamma$ by the guarantee of the regularity lemma for good pairs.
  Thus, the number of edges between $X,Y$ in $E$ is
  \begin{align*}
    |E(X,Y)| \geq d(X,Y) |X| |Y|  \geq 7\gamma (\gamma |X|) (\gamma |Y|)  \geq 7 \gamma^3 (n/k)^2 \geq 7 (\log^* n)^{-3\delta} (n/\log n)^2  \geq 7 n^2 / \log^5 n.
  \end{align*}
  Since in obtaining $F$ we sample edges between $C_i,C_j$ with probability $p=10/\log n$, we get that
  $$
    \E|F(X, Y)| = p|E(X, Y)| \geq 7n^2/\log^6 n.
  $$
  Defining $\delta := \sqrt{6 n/\E|F(X, Y)|}$ and applying the Chernoff bound, we get that
  $$
    \Pr[|F(X, Y)| < (1-\delta)\E|F(X, Y)|] \leq \exp\left( - \frac{\delta^2 \E|F(X, Y)|}{2} \right) < e^{-3n}.
  $$
  Using the lower bound $7n^2/\log^6 n$ for $\E|F(X, Y)|$ and noting that $\delta$ is much smaller than, say, $1/2$ precisely because of this lower bound, we get that
  $$
    \Pr\left[|F(X, Y)| < \frac{n^2}{\log^6 n} \right] < e^{-3n}.
  $$
  Since the total choices of $C_i, C_j$ and subsets $X, Y$ is less than $\log^2 n \times 2^n \times 2^n < e^{2n}$, by a union bound we get that $|F(X, Y)| \geq n^2/\log^6 n$ for all $X$ and $Y$ and all $C_i$ and $C_j$ with probability at least $1-\frac{e^{2n}}{e^{3n}} \geq 1-2^{-n}$.
\end{myproof}

\begin{remark}\label{remark:over-sampling}
    For the purpose of this section, we will only use $|F_3(X, Y)| > 0$ instead of the much larger lower bound of \cref{cl:goodpair-sample-nonempty} for this set. This stronger guarantee will prove useful later in \cref{sec:dynamic} where we design our dynamic algorithm. We note that for the weaker guarantee of $|F_3(X, Y)| > 0$, it suffices to only sample $\Theta(n\log n)$ edges in \Cref{alg:matching-cover} instead of $O(n^2/\log n)$. Nonetheless, reducing the size of $F_3$ will not result in a sparser matching cover because $F_1$ and $F_2$ will already be in the order of $n^2/(\logstar n)^{\Theta(1)}$.
\end{remark}

Next, we prove the following lemma on {\em consolidating the support} of an arbitrary fractional matching so that each non-zero variable takes a sufficiently large value.

\begin{lemma}\label{lem:fractionalmatching}
    Let $\b{x}$ be any fractional matching (not necessarily in the matching polytope). For any $\epsilon \in (0, 1]$, there is a fractional matching $\b{y}$ such that all the following hold:
    \begin{enumerate}
        \item For any vertex $v$, $y_v \leq x_v$, where here $y_v := \sum_{e \ni v} y_e$ and $x_v := \sum_{e \ni v} x_e$.
        \item $\supp(\b{y}) \subseteq \supp(\b{x})$. That is, if $y_e > 0$ for some edge $e$, then  $x_e > 0$.
        \item For any edge $e$, either $y_e = 0$ or $y_e \geq \frac{\epsilon^3}{12\ln(1/\epsilon)}$.
        \item $|\b{y}| \geq |\b{x}| - 2\epsilon n$, where here $|\b{y}| := \sum_e y_e$ and $|\b{x}| := \sum_e x_e$.
    \end{enumerate}
\end{lemma}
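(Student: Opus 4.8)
The plan is to round the given fractional matching $\b{x}$ by an independent edge-sampling process at a carefully chosen rate, then clean up the result. Concretely, I would fix a threshold parameter and proceed in two stages. First, partition the edges of $\supp(\b{x})$ into "light" edges, those with $x_e < \epsilon^2$, and "heavy" edges, those with $x_e \geq \epsilon^2$. For the heavy edges there is nothing to do — they already satisfy a lower bound of the right shape, perhaps after a mild rescaling. The real work is on the light edges: here I would sample each light edge $e$ independently with probability $q_e := x_e / \tau$ for a threshold $\tau = \Theta(\epsilon^3/\ln(1/\epsilon))$, and set $y_e := \tau$ for each sampled edge (and $y_e := x_e$, or $y_e := 0$ after the cleanup, for heavy edges). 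Then every surviving variable is either $0$ or at least $\tau$, which is exactly Property 3, and Property 2 holds by construction. Note $q_e \le 1$ precisely because $e$ is light and $\tau$ is not too small relative to $\epsilon^2$; one must check the constants line up, which is routine.

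Second, I would control the vertex loads. For a fixed vertex $v$, $\E[y_v] = \sum_{e \ni v, \text{ light}} \tau q_e = \sum_{e\ni v,\text{ light}} x_e \le x_v$, so in expectation the load does not increase; but we need it to hold with certainty (Property 1), so after sampling I would do a deterministic cleanup pass: whenever $y_v > x_v$ at some vertex, delete sampled edges incident to $v$ one at a time until $y_v \le x_v$. Each such deletion removes at most one unit of $\tau$ of value and, crucially, can be charged so that the total value deleted is small with high probability — this is where a concentration bound (Chernoff/Bernstein on $y_v$ for each $v$, followed by a union bound over the $n$ vertices) enters. The point is that typically $y_v$ is close to its mean $\le x_v$, so only an $O(\epsilon)$-fraction of vertices are "overloaded," and at each such vertex we delete at most $O(x_v) \le O(1)$ worth of value, giving total loss $O(\epsilon n)$; combined with the (at most $\epsilon n$ worth of) value lost by dropping light edges whose $x_e$ was simply not captured, this yields $|\b{y}| \ge |\b{x}| - 2\epsilon n$, Property 4. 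The heavy edges contribute total value at most $|\b{x}|$ spread over edges of value $\ge \epsilon^2$, so there are at most $O(n/\epsilon^2)$ of them and they cause no loss.

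I expect the main obstacle to be Step two: making the vertex-load cleanup lose only $O(\epsilon n)$ total value while simultaneously guaranteeing the hard constraint $y_v \le x_v$ at every vertex. The naive union bound over all vertices of a Chernoff bound on $y_v$ needs the mean $x_v$ — which can be as small as $0$ for low-degree vertices — so one has to be careful: for vertices with tiny $x_v$ the additive slack from concentration is what matters, and one must argue that the number of vertices where the sampled load exceeds $x_v$ by more than, say, $\tau$ is itself $O(\epsilon n)$ in expectation (bounding $\sum_v (\text{overload at }v)$ directly via linearity and the variance of the sampling process, rather than per-vertex high-probability bounds). An alternative that sidesteps concentration entirely is a direct first-moment / alteration argument: show $\E[\sum_v \max(0, y_v - x_v)] \le \epsilon n$ by bounding the variance of each $y_v$, conclude some outcome achieves both the load bound after cleanup and the value bound, and take that outcome. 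I would try the first-moment version first since it keeps the constants transparent and avoids a union bound over $2^n$-many bad events; the Chernoff route from \Cref{cl:goodpair-sample-nonempty} is available as a fallback if the variance bookkeeping gets unwieldy.
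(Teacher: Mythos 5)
You have correctly intuited the paper's own high-level plan (``round the fractional matching by a random edge-sampling process and then alter it to enforce the vertex capacities'') but your implementation is genuinely different. The paper draws $\beta = 6\ln(1/\epsilon)/\epsilon^3$ i.i.d.\ Bernoulli$(x_e)$ trials \emph{per edge}, takes $z_e$ to be the empirical frequency of success, and then sets $y_e := z_e/(1+\epsilon)$ with a global indicator that zeros out \emph{all} edges incident to any vertex $v$ whose load overshoots \emph{multiplicatively}, $z_v \ge (1+\epsilon)x_v$; Property 4 is then obtained by showing $\E[\phi(v)] \le 2\epsilon$ per vertex via a Chernoff bound. You instead sample each small edge once, round it to a fixed grid value $\tau$, enforce the vertex caps $y_v \le x_v$ by greedily deleting sampled edges, and bound the total loss by a direct first-moment/alteration calculation. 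That route does close cleanly and even avoids the per-vertex concentration step: since $y_v - x_v$ is $\tau$ times a centered sum of independent indicators, one has $\E\,|y_v - x_v| \le \sqrt{\mathrm{Var}(y_v)} \le \sqrt{\tau x_v} \le \sqrt{\tau}$, so $\sum_v \E[\max(0, y_v - x_v)] \le n\sqrt{\tau} \le \epsilon n$, and the ceiling slop of one extra $\tau$ per cleaned vertex costs another $n\tau \le \epsilon n$. So the first-moment version you favor works and is, if anything, a bit lighter than the paper's argument.

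There is, however, a genuine bug in the constants as written. You put the light/heavy split at $x_e < \epsilon^2$, but take $\tau = \Theta(\epsilon^3/\ln(1/\epsilon)) \ll \epsilon^2$. A light edge with $x_e$ near $\epsilon^2$ then has $q_e = x_e/\tau = \Theta(\ln(1/\epsilon)/\epsilon) \gg 1$, so the Bernoulli sampling is ill-defined; the parenthetical ``one must check the constants line up, which is routine'' is false as stated. The fix is to move the split to $\tau$ itself: keep $y_e := x_e$ deterministically for every edge with $x_e \ge \tau$ (these already meet Property 3, since $\epsilon^3/12\ln(1/\epsilon) \le \tau$), and sample only the edges with $x_e < \tau$, giving $q_e < 1$ automatically. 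With that change your argument goes through. One smaller remark: since the edges above threshold are kept deterministically, they contribute no variance to $y_v$ and the cleanup never needs to touch them, so your aside ``$y_e := 0$ after the cleanup, for heavy edges'' should be dropped --- deleting such an edge costs more than $\tau$ and would threaten the loss budget.
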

\begin{proof}
    Let $\beta = 6 \ln(1/\epsilon)/\epsilon^3$; assume for ease of exposition that $\beta$ is an integer. For every edge $e \in E$ and every $i \in [\beta]$, draw a Bernoulli random variable $b^{i}_e$ that is 1 with probability $x_e$. Now for any edge $e$ we define
    $
        z_e := \beta^{-1} \sum_{i=1}^{\beta} b^{i}_e,
    $
    that is, $z_e$ is the fraction of the $\beta$ trials $b^1_e, \ldots, b^\beta_e$ that succeed. Moreover, for every vertex $v$ we let
    $
        z_v := \sum_{e \ni v} z_e,
    $ and construct $\b{y}$ as follows:
    \begin{flalign}\label{eq:hcl-290123}
        y_e = \mathbbm{1}[z_v < (1+\epsilon)x_v \text{ and } z_u < (1+\epsilon)x_u] \cdot z_e / (1+\epsilon) \qquad \qquad \text{for any edge $e$}.
    \end{flalign}
    
    First, observe that if $z_v \geq (1+\epsilon)x_v$ for a vertex $v$, then $y_v = 0$ by construction. Otherwise, we have $y_v \leq z_v/(1+\epsilon) < (1+\epsilon)x_v/(1+\epsilon) = x_v$. This proves the first property and also the fact that $\b{y}$ is a fractional matching.
    
    Second, note that if $x_e = 0$, then $z_e = 0$ and so $y_e = 0$. This immediately implies that any edge in the support of $\b{y}$ must also belong to the support of $\b{x}$, proving the second property.
    
    Third, observe from the definition of $z_e$ that either $z_e = 0$ or $z_e \geq \beta^{-1} = \epsilon^3/6\ln(1/\epsilon)$. This implies that either $y_e = 0$ or $y_e \geq \epsilon^3/6\ln(1/\epsilon)(1+\epsilon) \geq \epsilon^3 / 12\ln(1/\epsilon)$, proving the third property.
    
    We now turn to the fourth property. Take a vertex $v$. In the event that $z_v \geq (1+\epsilon)x_v$, we charge all of $z_v$ to $v$. That is, we define the charge $\phi(v)$ to be $z_v$ if $z_v \geq (1+\epsilon)x_v$ and $\phi(v)=0$ otherwise. We get that
    \begin{equation}\label{eq:mxc-2779018}
        \E|\b{y}| = \frac{1}{2} \sum_v \E[y_v] \geq \frac{1}{2(1+\epsilon)} \sum_v (\E[z_v - \phi(v)]) > (1-\epsilon) |\b{x}| - \frac{1}{2} \sum_v \E[\phi(v)],
    \end{equation}
    where the latter inequality follows from $\sum_v \E[z_v] = \sum_v x_v = 2|\b{x}|$. Next, we show that $\E[\phi(v)] \leq 2\epsilon$ for any vertex $v$. Observe that since $\E[\phi(v)] \leq \E[z_v] = x_v$, we are done if $x_v \leq 2\epsilon$. So let us assume that $x_v > 2\epsilon$. We have
    \begin{flalign}
        \nonumber \E[\phi(v)] &= \Pr[z_v \geq (1+\epsilon)x_v] \cdot \E[z_v \mid z_v \geq (1+\epsilon)x_v]\\
        \nonumber &\leq \E[z_v] - \Pr[|z_v - x_v| < \epsilon x_v] \cdot \E[z_v \mid |z_v - x_v| < \epsilon x_v]\\
        &\leq x_v - \Pr[|z_v - x_v| < \epsilon x_v] \cdot (1-\epsilon)x_v.\label{eq:mcg-9018377}
    \end{flalign}
    Now observe that $z_v \beta$ is a sum of $\deg(v) \cdot \beta$ independent (but not i.i.d.) Bernoulli random variables (one for each of the $\beta$ trials of each edge of $v$) with expected value $\E[z_v \beta] = x_v \beta$. Hence, by the Chernoff bound
    $$
        \Pr[|z_v\beta - x_v \beta| \geq \epsilon x_v \beta] \leq 2\exp\left( - \frac{\epsilon^2 x_v \beta}{3} \right) \stackrel{x > 2\epsilon, \beta = 6 \ln(1/\epsilon)/\epsilon^3}{\leq} 2\exp\left( - 4 \ln(1/\epsilon) \right) < 2\epsilon^4.
    $$
    Dividing through by $\beta$, this means that $\Pr[|z_v - x_v| \geq \epsilon x_v] < 2\epsilon^4$ and as a result $\Pr[|z_v - x_v| < \epsilon x_v] \geq 1-2\epsilon^4$. Plugging this into \cref{eq:mcg-9018377}, we get that
    $$
        \E[\phi(v)] \leq x_v - (1-2\epsilon^4)(1-\epsilon)x_v \leq 2\epsilon^2 x_v + \epsilon \leq 2\epsilon x_v \leq 2 \epsilon.
    $$
    Plugging $\E[\phi(v)] \leq 2\epsilon$ back into \cref{eq:mxc-2779018}, implies that
    $$
        \E|\b{y}| \geq (1-\epsilon)|\b{x}| - \epsilon n \geq |\b{x}| - 2\epsilon n.
    $$
    
    Finally, observe from our earlier discussion that the first three properties all hold with probability 1. The only place where we use the randomization of the construction of $\b{y}$ is for the fourth property where we showed $\E|\b{y}| \geq |\b{x}| - 2\epsilon$. This suffices for our purpose, since there must exist an outcome of $\b{y}$ with size as large as its expectation, while satisfying the other properties.
\end{proof}

We are now ready to prove that \Cref{alg:matching-cover} returns a $o(1)$-matching-cover w.h.p.

\begin{lemma}\label{lem:whpmatchingcover}
    The output of \Cref{alg:matching-cover} is, w.h.p., an $\alpha$-hitting set of $G$ for $\alpha = \Theta((\gamma \log(1/\gamma))^{1/3}) = (\log^*n)^{-\Omega(1)}$.
\end{lemma}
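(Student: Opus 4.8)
The plan is to prove the contrapositive of the hitting-set guarantee combined with the consolidation lemma. Fix disjoint vertex subsets $(A,B)$ with $|A|=|B|=\alpha n$ and $\mu(G[A,B]) = \alpha n$; I must show that w.h.p. $F$ (the output of \Cref{alg:matching-cover}) contains an edge between $A$ and $B$. Since $\mu(G[A,B]) = \alpha n$, there is a perfect matching $M$ on $A\cup B$ inside $G$. View $M$ as an integral (hence fractional) matching supported on $2\alpha n$ vertices. Now look at how the edges of $M$ distribute among the pieces $C_0,\dots,C_k$ of the $\gamma$-regular partition. Edges of $M$ incident to $C_0$, edges within some $C_i$, and edges across bad pairs are already all contained in $F$ (they lie in $F_2$, $F_2$, and $F_1$ respectively), so if any such edge of $M$ crosses between $A$ and $B$ we are done immediately. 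Hence I may assume essentially all of $M$ lives on edges across \emph{good} pairs $(C_i,C_j)$; a bounded fraction $O(\gamma)$ of the matching is lost by discarding the exceptional/within-class/bad-pair edges, because there are at most $O(\gamma n)$ vertices in $C_0$ and at most $O(\gamma n^2)$ bad edges total which can support at most $O(\gamma n)$ matching edges (alternatively, bound the number of matching edges, not the number of edges — one should be slightly careful here).

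Next I apply \Cref{lem:fractionalmatching} with a suitable $\epsilon = \Theta(\gamma^{?})$ to the restriction $\b{x}$ of $M$ to good-pair edges. This yields a fractional matching $\b{y}$ with $\supp(\b{y})\subseteq\supp(\b{x})$, with $y_v\le x_v$ (so still supported on $A\cup B$), with $|\b{y}|\ge |\b{x}| - 2\epsilon n \ge \alpha n - O(\epsilon n)$, and crucially with every nonzero $y_e \ge \epsilon^3/(12\ln(1/\epsilon)) =: \tau$. Now contract each class $C_i$ to a supernode and consider the meta-multigraph in which $\b{y}$ induces, for each good pair $(C_i,C_j)$, a total weight $w_{ij} := \sum_{e\in E(C_i,C_j)} y_e$. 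The total weight $\sum_{ij} w_{ij} = |\b{y}| \ge \alpha n - O(\epsilon n)$ is spread over at most $\binom{k}{2} \le \log^2 n$ pairs, so by averaging there is a good pair $(C_i,C_j)$ with $w_{ij} \ge (\alpha n - O(\epsilon n))/\log^2 n$. On this pair, because every participating edge has $y_e \ge \tau$ and $\b{y}$ is a fractional matching (so each vertex of $C_i$ carries weight $\le 1$ and likewise $C_j$), the number of vertices of $A\cap C_i$ that are "used" by $\b{y}$ restricted to this pair is at least $w_{ij}$, and similarly on the $C_j$ side. Setting $X := (A\cap C_i)\cap\{v: z\text{-weight}>0\}$ and $Y := (B\cap C_j)\cap\{\dots\}$, I get $|X|,|Y| \ge w_{ij} \ge (\alpha - O(\epsilon))n/\log^2 n$.

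The endgame is to certify $|X|\ge\gamma|C_i|$ and $|Y|\ge\gamma|C_j| = \gamma (n/k)$; since $|C_i| = n/k \le n/(\log^*n)^\delta$ and $\gamma = (\log^*n)^{-\delta}$, the requirement is $|X| \ge n/(\log^*n)^{2\delta}$, which holds provided $\alpha/\log^2 n \ge 1/(\log^*n)^{2\delta}$ — this forces the choice $\alpha = \Theta((\gamma\log(1/\gamma))^{1/3})$ with $\gamma$ a small enough power of $1/\log^*n$, and one checks $\log^2 n$ is absorbed because $\log^*n$ beats $\log n$ only in the \emph{exponent tower} sense; actually here we need $\alpha$ merely polynomially larger in $1/\log^*n$ than $\gamma$, and $\log n$ is swallowed since we only need the inequality for $n$ large and the relevant quantities are all $(\log^*n)^{-\Theta(1)}$ versus $\log n$ — wait, $\log n \gg (\log^* n)^{c}$, so this is the delicate point. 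The resolution (and the reason \Cref{alg:matching-cover} sets $k$ as small as $(\log^*n)^\delta$ rather than as large as $\log n$) is that $|C_i| = n/k$ with $k \le (\log^* n)^{?}$ when we take $t$ itself to be $(\log^*n)^\delta$; one must take $k = t$, i.e. exploit that \Cref{thm:regularityalgo} only guarantees $k\ge t$ and pick the \emph{smallest} valid partition, so $|C_i| \ge n/Q(t,1/\gamma) \ge n/\log n$ — hmm, that's the wrong direction. So the honest statement is: the comparison is $|X| \ge w_{ij} \ge \Omega(\alpha n/\log^2 n)$ against $\gamma|C_i| \le \gamma n/t = \gamma n/(\log^*n)^\delta$, and since $\gamma = (\log^*n)^{-\delta}$ we need $\alpha/\log^2 n \gtrsim (\log^*n)^{-2\delta}$, i.e. $\alpha \gtrsim (\log^*n)^{-2\delta}\log^2 n$ — this is \emph{not} $o(1)$. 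Therefore the matching $M$ cannot have been spread over all $\binom k2$ pairs to begin with; the role of \Cref{lem:fractionalmatching} must be stronger — it must consolidate onto $O(1/\epsilon)$ pairs, not just give large weight on \emph{one} average pair. So the key step I expect to be the main obstacle is exactly this: showing the consolidated fractional matching concentrates on a \emph{bounded} number of good pairs (bounded independent of $k$), so that one pair carries weight $\Omega(\epsilon\cdot\alpha n)$, giving $|X|,|Y| = \Omega(\epsilon\alpha n) \gg \gamma n/t$ for appropriate polynomial relations among $\alpha,\epsilon,\gamma,\delta$. Granting that, \Cref{cl:goodpair-sample-nonempty} applies to $(X,Y)$: it certifies $|F_3(X,Y)| > 0$ w.h.p., and any such edge of $F_3$ lies between $A$ and $B$ (since $X\subseteq A, Y\subseteq B$), which is the desired hit. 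A union bound over the $\le e^{2n}$ choices of $(X,Y)$ in \Cref{cl:goodpair-sample-nonempty} already handles all $(A,B)$ simultaneously, so the w.h.p. is uniform. Finally, the edge count $n^2/(\log^*n)^{\Omega(1)}$ is \Cref{cl:fewedgespicked}, and the running time is that of \Cref{thm:regularityalgo} plus the $O(n^2)$ sampling, i.e. $O(n^\omega\log n)$.
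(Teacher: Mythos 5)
Your proposal has a genuine gap, and you correctly identify where it sits but do not close it. The crux is that you apply \Cref{lem:fractionalmatching} to the wrong object. You apply it to ``the restriction $\b{x}$ of $M$ to good-pair edges,'' which is a fractional matching on the original $n$-vertex graph. But $M$ is an \emph{integral} matching, so every nonzero $x_e$ is already $1$; \Cref{lem:fractionalmatching} is then vacuous (take $\b{y}=\b{x}$), and contracting afterward gives you no more than you started with. The resulting averaging bound $w_{ij} \gtrsim \alpha n / k^2$ is, as you observe, hopelessly short of the needed $\gamma |C_i| = \gamma n / k$ when $k$ can be as large as $\log n$. The paper instead applies \Cref{lem:fractionalmatching} to the \emph{contracted} fractional matching: define $x_{ij}$ as the \emph{fraction} of $C_i$ matched to $C_j$, so $\b{x}$ lives on the $k$-vertex meta-graph $K_k$ and is genuinely fractional (with $|\b{x}|\geq \alpha k/2$). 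Then the lemma (with $\epsilon$ chosen so that $\epsilon^3/(12\ln(1/\epsilon))=4\gamma$, giving $\epsilon=\Theta((\gamma\log(1/\gamma))^{1/3})$, and with the lemma's ``$n$'' equal to $k$) forces every nonzero $y_{ij}\geq 4\gamma$ while keeping $|\b{y}|\geq |\b{x}|-2\epsilon k>0$. So there exists a good pair $(C_i,C_j)$ with $x_i,x_j\geq 4\gamma$, meaning a constant $\geq 4\gamma$ fraction of \emph{each entire cluster} is matched in $M'$ --- this is the ``consolidation onto $O(1/\epsilon)$ pairs'' you correctly guessed must be happening, but on the meta-graph, not on $G$.

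There is a second missing ingredient: even once you have a good pair with $\geq 4\gamma$ fraction matched, you still need $X\subseteq C_i\cap A$ and $Y\subseteq C_j\cap B$ with $|X|\geq\gamma|C_i|$, $|Y|\geq\gamma|C_j|$; a priori the matched vertices of $C_i$ could all lie in $B$. The paper handles this by a preprocessing step you omit entirely: it first passes from $M$ to a sub-matching $M'$ by discarding any edge whose $A$-endpoint is in a ``$B$-majority'' cluster or whose $B$-endpoint is in an ``$A$-majority'' cluster, losing only half the matching. After this pruning, the surviving good pair $(C_i,C_j)$ with a vertex of $A$ matched in $M'\cap C_i$ cannot be $B$-majority, which yields $|C_i\cap V(M)\cap A|\geq \tfrac14|C_i\cap V(M)| \geq \tfrac14\cdot 4\gamma|C_i|=\gamma|C_i|$, and symmetrically for $C_j$ and $B$. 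Only then does \Cref{cl:goodpair-sample-nonempty} apply to $(X,Y)=(C_i\cap V(M)\cap A,\; C_j\cap V(M)\cap B)$ and certify an edge of $F_3$ between $A$ and $B$. In short: your blueprint has the right shape (regularity partition, discard non-good edges, consolidate, invoke the sampling claim), but the consolidation is applied at the wrong level of the hierarchy and the $A$/$B$ majority bookkeeping is missing, so the argument as written does not go through.
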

\begin{myproof}
    We show that if the event of \Cref{cl:goodpair-sample-nonempty} holds---which was shown to hold w.h.p. in the claim---then the output of \Cref{alg:matching-cover} is indeed an $\alpha$-hitting set.
    
    Let $A$ and $B$ be some arbitrary disjoint subsets of $V$ such that $|A|=|B|=\alpha n$ and let $M$ be a perfect matching of size $\alpha n$ between $A$ and $B$ in $G$. We have to show that there is at least one edge that goes from $A$ to $B$ in $H$. Observe that if there is any edge $e \in M$ that goes across a bad pair $C_i, C_j$ or if both endpoints of $e$ are in the same class $C_i$, then this edge $e$ will be our desired edge since \Cref{alg:matching-cover} will add $e$ to $F$. So let us assume that every edge in $M$ belongs to a good pair. Note that if we find a good pair $C_i, C_j$ where at least $\gamma |C_i| = \gamma |C_j|$ edges of $M$ go from $C_i$ to $C_j$, then we can immediately apply \cref{cl:goodpair-sample-nonempty} to get that there is one edge connecting two endpoints of matching $M$ in $F$. Unfortunately, such a pair may not always exist. Note that since there are $k$ classes,  each of the $\Theta(k^2)$ pairs might include only $\Theta(|M|/k^2) = \Theta((\gamma\log(1/\gamma)^{1/3} n /k^2)$ edges of $M$. This could be much smaller than $\gamma |C_i|$ which may be of size $\Omega(\gamma n/k)$. Additionally, we have to ensure that the edge that we find goes from $A$ to $B$, and that it does not connect $A$ to $A$ or $B$ to $B$. To achieve all of this, we give a more delicate argument that uses \cref{lem:fractionalmatching}. 
    
    First, let us slightly modify the matching $M$. Call a cluster $C_i$ {\em $A$-majority} if 
    $$
    |C_i \cap V(M) \cap A| > .75 |C_i \cap V(M)|,
    $$
    and similarly {\em $B$-majority} if
    $$
    |C_i \cap V(M) \cap B| > .75 |C_i \cap V(M)|.
    $$
    We construct a sub-matching $M'$ of $M$ by removing any edge $(u, v), u \in A, v \in B$ from it where $u$ is in a $B$-majority cluster or $v$ is in an $A$-majority cluster. Any cluster $C_i$ causes removal of at most $.25$ fraction of its matched vertices in $M$, hence the total number of removed vertices from $M$ is at most $.25|V(M)|$. Each such removed vertex may remove a unique edge in $M$. Thus, in total, the obtained matched $M'$ has size at least
    \begin{equation}\label{eq:bhgc-819388888}
        |M'| \geq |M| - .25|V(M)| = |M| - .5|M| = |M|/2 = \alpha n/2.
    \end{equation}
    % Additionally, we get the following property for $M'$:

    Now for any $i, j \in [k]$ let $x_{ij}$ denote the fraction of the vertices of $C_i$ that are matched to $C_j$ in $M'$. Note that $x_{ij} = x_{ji}$ since the classes $C_1, \ldots, C_k$ all have equal sizes. Observe also that $\b{x}$ is a fractional matching of a complete graph on $k$ vertices: For any $i \in [k]$, $x_i := \sum_{j} x_{ij}$ equals the fraction of the vertices of $C_i$ that are matched by $M'$, and so $0 \leq x_i \leq 1$. Additionally, fractional matching $\b{x}$ satisfies the following:
    \begin{enumerate}[label=(X\arabic*)]
        \item If $x_{ij} \not= 0$, then $(C_i, C_j)$ is a good pair. This follows from our earlier assumption that all edges of $M'$ belong to good pairs.
        \item $|\b{x}| \geq \alpha k/2$. This holds because 
        $$
        |\b{x}| = \sum_{i,j \in [k]} x_{ij} = \sum_{i, j \in [k]} \frac{|M' \cap (C_i \times C_j)|}{|C_i|} = \frac{1}{|C_1|} \sum_{i, j \in [k]} |M' \cap (C_i \times C_j)| = \frac{|M'|}{|C_1|} \geq \frac{k|M'|}{n} \stackrel{\cref{eq:bhgc-819388888}}{\geq} \alpha k/2.
        $$
    \end{enumerate}
    
    Now we apply \cref{lem:fractionalmatching} on fractional matching $\b{x}$, for a parameter $\epsilon$ such that $\frac{\epsilon^3}{12\ln(1/\epsilon)} = 4\gamma$ which means $\epsilon = \Theta((\gamma \log(1/\gamma))^{1/3})$. This results in a fractional matching $\b{y}$ such that 
    \begin{enumerate}[label=(Y\arabic*)]
        \item For any $i \in [k]$, $y_i \leq x_i$.
        \item $\supp(\b{y}) \subseteq \supp(\b{x})$.
        \item For any $i, j \in [k]$ either $y_{ij} = 0$ or $y_{ij} \geq 4\gamma$,
        \item $|\b{y}| \geq |\b{x}| - \epsilon k.$
    \end{enumerate}
    Choosing the constant in the definition of $\alpha$ to be large enough such that $\alpha/2 > \epsilon$, we get from (X2) and (Y4) that $|\b{y}| > 0$. Hence, there must exist some $y_{ij} \not= 0$. This by (Y2) implies that $x_{ij} \not= 0$ and so $(C_i, C_j)$ must be a good pair. Additionally, $y_{ij} \not= 0$ implies by (Y3) that $y_i, y_j \geq y_{ij} \geq 4\gamma$ which by (Y1) also implies $x_i, x_j \geq 4\gamma$. From all of this, we get that there must be a good pair $(C_i, C_j)$ such that at least $4\gamma$ fraction of each of $C_i$ and $C_j$ is matched by $M'$, and there is an edge $(u, v) \in M'$ that goes from $u \in C_i$ to $v \in C_j$. Let us assume w.l.o.g. that $u \in A$, $v \in B$ (as we have not distinguished $C_i$, $C_j$ in any other way up to now). Our next claim is that
    \begin{enumerate}[label=(Z\arabic*)]
        \item There are subsets $X \subseteq C_i \cap A$, $Y \subseteq C_j \cap B$ such that $|X| \geq \gamma |C_i|, |Y| \geq \gamma |C_j|$.
        
        First, since $C_i$ has a vertex in $A$ that is matched in $M'$, then it cannot be $B$-majority. Hence, $|C_i \cap V(M) \cap A| \geq |C_i \cap V(M)|/4 \geq |C_i \cap V(M')|/4 \geq \cdot 4\gamma |C_i|/4 = \gamma |C_i|$, where the third inequality follows from our earlier discussion that at least $4\gamma$ fraction of vertices in $C_i$ are matched by $M'$. Similarly, since $C_j$ has a vertex in $B$ that is matched in $M'$, it cannot be $A$-majority and thus $|C_j \cap V(M) \cap B| \geq |C_j \cap V(M)|/4 \geq |C_j \cap V(M')|/4 \geq \cdot 4\gamma |C_j|/4 = \gamma |C_j|$.
    \end{enumerate}

    Now applying \cref{cl:goodpair-sample-nonempty} on these subsets $X$ and $Y$ of $C_i$ and $C_j$ in (Z1), proves that subsample $F$ must include at least one edge between them, and so $F$ is an $\alpha$-hitting set.
\end{myproof}

\subsection{Second Step: From Hitting Set to Matching Cover}\label{sec:hitting-to-matching-cover}

We now prove that any subgraph satisfying
the hitting set requirement (\cref{def:hitting-set})
is also a matching cover (\cref{def:matching-cover}).
This will follow from Hall's theorem (Proposition~\ref{prop:hall}),
and the proof similar to that of Lemma 9.3 in~\cite{GoelKK12}.

\begin{lemma}[From Hitting Set to Matching Cover]
\label{lem:hitcover}
Let $G=(V,E)$ be any graph that is not necessarily bipartite.
Then any subgraph $H$ of $G$ that is an $\alpha$-hitting
set is also an $\alpha$-matching cover of $G$.
\end{lemma}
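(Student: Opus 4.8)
The plan is to argue by contradiction, combining the $\alpha$-hitting set guarantee of $H$ with the extended Hall's theorem (\Cref{prop:hall}). Fix disjoint vertex subsets $A,B\subseteq V$ and suppose, towards a contradiction, that $\mu(H[A,B]) < \mu(G[A,B]) - \alpha n$. Let $M^*$ be a maximum matching of the bipartite graph $G[A,B]$, put $m^* := \mu(G[A,B]) = |M^*|$, and let $A^*\subseteq A$ and $B^*\subseteq B$ be the two sets of endpoints of $M^*$; thus $|A^*|=|B^*|=m^*$ and $M^*$ is a perfect matching of the \emph{balanced} bipartite graph $G[A^*,B^*]$. Since $H[A^*,B^*]$ is a subgraph of $H[A,B]$, we also have $\mu(H[A^*,B^*]) \leq \mu(H[A,B]) < m^* - \alpha n$. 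Applying \Cref{prop:hall} to $H[A^*,B^*]$ (whose two sides have equal size $m^*$) yields a set $S$, a subset of $A^*$ or of $B^*$, with $|S| - |N(S)| = m^* - \mu(H[A^*,B^*]) > \alpha n$, where $N(S)$ denotes the neighborhood of $S$ in $H[A^*,B^*]$. Since \Cref{def:hitting-set} is symmetric under swapping $A$ and $B$, we may assume without loss of generality that $S\subseteq A^*$, so that $N(S)\subseteq B^*$.

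Next, I would let $T\subseteq B^*$ be the set of $M^*$-partners of the vertices in $S$ (so $|T|=|S|$ and $M^*$ restricts to a perfect matching between $S$ and $T$ in $G$), and let $S'\subseteq S$ be the set of $M^*$-partners of $T\setminus N(S)$. Then $|S'| = |T\setminus N(S)| \geq |S| - |N(S)| > \alpha n$, and $M^*$ restricts to a perfect matching between $S'$ and $T\setminus N(S)$ in $G$. The key point is that $H$ has \emph{no} edge between $S'$ and $T\setminus N(S)$: any $H$-neighbor of a vertex of $S'\subseteq S$ that lies in $B^*$ belongs, by definition, to $N(S)$, and hence not to $T\setminus N(S)$; and since $S'\subseteq A^*$ and $T\setminus N(S)\subseteq B^*$, every $H$-edge between these two sets is in fact an edge of $H[A^*,B^*]$. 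Now, because $|S'| = |T\setminus N(S)| > \alpha n$, we may pick $\alpha n$ of the $M^*$-edges running between $S'$ and $T\setminus N(S)$ and let $A'$ and $B'$ be the sets of their endpoints in $A$ and in $B$, respectively. Then $A'\subseteq A$ and $B'\subseteq B$ are disjoint, $|A'|=|B'|=\alpha n$, and $\mu(G[A',B'])=\alpha n$, witnessed by the chosen $M^*$-edges. The $\alpha$-hitting set property of $H$ then forces an edge of $H$ between $A'$ and $B'$; but $A'\subseteq S'$ and $B'\subseteq T\setminus N(S)$, contradicting the key point above, which completes the proof.

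I expect the argument itself to be essentially routine; the points that need a little care are: (i) passing to the balanced bipartite graph $G[A^*,B^*]$ so that \Cref{prop:hall} applies to equal-sized sides; (ii) checking that the ``forbidden neighborhood'' conclusion, although read off inside $H[A^*,B^*]$, is genuinely a statement about edges of $H$ — which it is, since the edges of $H$ between $A^*$ and $B^*$ are exactly the edges of $H[A^*,B^*]$; and (iii) the usual rounding of $\alpha n$ to an integer (replace $\alpha n$ by $\lfloor \alpha n\rfloor$ throughout, which is harmless). This mirrors the structure of Lemma~9.3 of~\cite{GoelKK12}; the only new observation is that bipartiteness of $G$ is never used anywhere above, so the same proof yields the statement for general graphs.
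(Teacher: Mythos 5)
Your proof is correct and follows essentially the same route as the paper's: restrict to the balanced bipartite graph on the endpoints of a maximum matching, apply the extended Hall's theorem to find a deficiency set, and pair that set with its matching partners outside its $H$-neighborhood to produce a matching of size $>\alpha n$ across which $H$ has no edge, contradicting the hitting-set property. The only differences are cosmetic: you explicitly trim to exactly $\alpha n$ edges to match \Cref{def:hitting-set} verbatim, and you explicitly invoke the $A$/$B$ symmetry to place the deficiency set on one fixed side — both of which the paper glosses over but are easily justified.
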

\begin{proof}
    Let $H$ be a subgraph of $G$ that is an $\alpha$-hitting set.
    Consider any disjoint subsets $A,B\subset V$
    with a maximum matching size of $\mu_G(A,B)$ in $G$.
    Let $P\subseteq A, Q\subseteq B$ be subsets
    with $|P|=|Q|=\mu_G(A,B)$ such that
    there exists a perfect matching $M^*$ from $P$ to $Q$.
    We show that there must exist
    a matching from $P$ to $Q$ in $H$ of size $|P| - \alpha n$,
    i.e. $\mu_H(P,Q)\geq |P| - \alpha n$.
    
    Suppose for the sake of contradiction that
    $\mu_H(P,Q) < |P| - \alpha n$.
    Then by applying the extended Hall's theorem (Proposition~\ref{prop:hall})
    to the bipartite subgraph of $H$ induced by
    the bipartition $(P,Q)$,
    there exists $P'\subset P$ such that
    $|P'| - |N_H(P')\intersect Q| > \alpha n$.
    Now consider the edges of $M^*$ that are
    incident on $P'$ but
    {\em not} incident
    on $N_H(P')$, which themselves
    form a matching (call it $M^{**}$)
    of size $> \alpha n$.
    Now by the fact that $H$ is an $\alpha$-hitting set of
    $G$, there must be an edge in $H$
    connecting $V(M^{**})\intersect P$ to
    $V(M^{**})\intersect Q$.
    On the other hand,
    by the definition of $M^{**}$,
    we have $(V(M^{**}) \intersect P)\subseteq P'$
    but $V(M^{**})\intersect N_H(P') = \emptyset$,
    leading to a contradiction.
\end{proof}

We are now ready to prove \cref{thm:matching-cover-polytime}.

\begin{proof}[Proof of \cref{thm:matching-cover-polytime}]
    The output of \Cref{alg:matching-cover} being an $o(1)$-hitting set was proved in \cref{lem:whpmatchingcover}.
    By~\Cref{lem:hitcover},
    the output subgraph is an $o(1)$-matching cover.
    This matching cover having at most $O(\gamma n^2) = n^2/(\log^* n)^{\Omega(1)}$ edges was proved in \cref{cl:fewedgespicked}. Finally, the running time follows from the algorithm of \cref{thm:regularityalgo} for finding the regularity decomposition, and the fact that $Q(t,1/\gamma) \leq \log n$ in \Cref{alg:matching-cover}.
\end{proof}

\section{A Fully Dynamic Algorithm via Matching Covers}\label{sec:dynamic}

In this section, we show that the matching cover of \cref{sec:matcover} can be used to prove the following result in the fully dynamic model.

\begin{theorem}\label{thm:dynamicalgo}
There is a randomized, fully dynamic algorithm
that maintains with high probability a $(1-o(1))$-approximate matching
under (possibly adversarial) edge updates.
The algorithm has initialization time
$O(n^{\omega} \log n)$
and worst-case update time $n/(\log^* n)^{\Omega(1)}$.
\end{theorem}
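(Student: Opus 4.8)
The plan is to combine the offline matching-cover construction of \Cref{thm:matching-cover-polytime} with a periodic recomputation scheme, using the Gupta--Peng data structure \cite{GuptaP13} as a black box to maintain a near-optimal matching inside the (sparse) cover, and to balance this against directly running Gupta--Peng on the whole graph when the graph is sparse. First I would handle the \emph{dense} regime, say when the graph always has $m \geq n^2/(\log^* n)^{c}$ edges for a suitable constant $c$, so that $\mu(G) = \Omega(m/n) = \Omega(n/(\log^* n)^{c})$ and hence an additive $\alpha n$ error with $\alpha = (\log^* n)^{-\Omega(1)}$ is automatically a $(1-o(1))$-multiplicative error provided the exponent in $\alpha$ is chosen larger than $c$. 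In this regime the algorithm recomputes an $\alpha$-matching cover $H$ via \Cref{alg:matching-cover} every $T := \tilde\Theta(n^{\omega-1})$ updates; since that algorithm runs in $O(n^\omega \log n)$ time, amortizing over $T$ updates costs $\tilde O(n^{\omega-1}) = o(n)$ per update, which can be deamortized by standard techniques (compute the next cover incrementally in the background over the current window). Between recomputations we feed the edge insertions/deletions \emph{restricted to the edge set of the current cover $H$} into a Gupta--Peng instance; since $|E(H)| = O(\gamma n^2) = n^2/(\log^* n)^{\Omega(1)}$, its update time is $O(\sqrt{|E(H)|}) = n/(\log^* n)^{\Omega(1)} = o(n)$.

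The crux of correctness is that the cover $H$ computed at the start of a window must remain an $o(1)$-matching cover of the \emph{current} graph throughout the entire window of $n^{2-o(1)}$ updates, not just at computation time. Here I would exploit the strengthened guarantee flagged in \Cref{remark:over-sampling} and \Cref{cl:goodpair-sample-nonempty}: the sampled set $F_3$ contains not merely one but $> n^2/\log^6 n$ edges between every large pair $X \subseteq C_i, Y \subseteq C_j$ of a good pair. Consequently, after an adversary deletes any $n^{2-o(1)} \ll n^2/\log^6 n$ edges, every such pair still retains an edge of $H$, so the hitting-set property of \Cref{def:hitting-set} (applied on $G$ minus the deleted edges, which only shrinks $\mu(G[A,B])$) survives; meanwhile edge insertions can only help, since a matching cover of a graph $G_0$ remains a matching cover of any subgraph of $G_0$ and for newly inserted edges we simply keep them uncovered, which at worst costs an extra additive $o(n)$ if the number of insertions per window is $o(n^2)$ — but we should be slightly careful and instead argue that $H \cup (\text{inserted edges currently present})$ is a cover of the current graph, still with $o(n^2)$ edges since at most $T = o(n^2)$ edges are inserted per window. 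Then by \Cref{lem:hitcover} this hitting set is an $\alpha$-matching cover of the current graph, and taking $(A,B) = (V, \emptyset)$-style arguments (or rather the standard reduction $\mu(G) \leq \mu(G[A,B]) + $ slack by splitting $V$ randomly into two halves, or directly noting $\mu(H) \geq \mu(G) - O(\alpha n)$ follows from the bipartite-double-cover / random bipartition trick) gives $\mu(H') \geq \mu(G) - O(\alpha n)$ where $H'$ is the maintained cover-plus-insertions.

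Next I would treat the \emph{sparse} regime: whenever $m \leq n^2/(\log^* n)^{c}$, simply run a single Gupta--Peng instance on the whole graph, with update time $O(\sqrt m) = n/(\log^* n)^{\Omega(1)} = o(n)$ and a genuine $(1-\eps)$-multiplicative guarantee for any fixed small $\eps$ (or $(1-o(1))$ by letting $\eps \to 0$ slowly). The final algorithm runs both the dense-regime machinery and the sparse-regime instance in parallel and, at each step, reports whichever matching is larger; a short case analysis on the current value of $m$ shows that at least one of the two is a $(1-o(1))$-approximate maximum matching, while both have $o(n)$ update time, so the maximum does too. The adaptivity-against-an-adaptive-adversary claim follows because the Gupta--Peng structure is deterministic and the only randomness — the sampling in \Cref{alg:matching-cover} — is used solely to guarantee, w.h.p.\ at the moment of (re)computation, the strong lower bound in \Cref{cl:goodpair-sample-nonempty}; since that bound is then robust to \emph{any} subsequent $n^{2-o(1)}$ adversarial updates (the adversary cannot delete its way below the $n^2/\log^6 n$ threshold), the adversary learns nothing exploitable from the output within a window, and a union bound over the $\poly(n)$ windows in any polynomially-long update sequence keeps the overall failure probability $1/\poly(n)$.

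The main obstacle I expect is the robustness argument of the second paragraph — precisely quantifying how long a single cover remains valid and reconciling the ``additive $o(n)$'' guarantee of a matching cover (which is about induced bipartite subgraphs $G[A,B]$ for \emph{all} $A,B$) with the single quantity $\mu$ of the whole dynamic graph, while simultaneously accounting for both deletions (handled by the over-sampling slack) and insertions (handled by carrying inserted edges verbatim, at the cost of keeping the window length $T$ below $o(n^2)$). A secondary technical point is deamortizing the $O(n^\omega \log n)$ recomputation so that the worst-case — not just amortized — update time is $o(n)$; this is the standard trick of running the next recomputation in the background spread evenly across the current window, but it needs the window length $T$ to satisfy $n^\omega \log n / T = o(n)$ and $T = o(n^2)$ simultaneously, which holds for $T = \tilde\Theta(n^{\omega-1})$ since $\omega < 3$.
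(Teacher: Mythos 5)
Your proposal follows the paper's proof almost line for line: the dense/sparse regime split, periodic recomputation of the regularity-lemma matching cover every $\tilde\Theta(n^{\omega-1})$ updates, running Gupta--Peng on the cover (and on the full graph in the sparse regime), robustness to deletions via the over-sampling slack $|F_3(X,Y)| \geq n^2/\log^6 n$ from \Cref{cl:goodpair-sample-nonempty}, insertions handled by adding them to $F$, and deamortization by spreading the recomputation across a window.

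One step in your argument is shaky and should be fixed. To pass from ``$F$ is an $\alpha$-matching cover of the current graph'' to $\mu(F) \geq \mu(G) - \alpha n$, you float three options --- $(A,B)=(V,\emptyset)$, a random bipartition of $V$, or a double-cover trick --- and none of them actually closes the gap as stated: $B=\emptyset$ makes $G[A,B]$ edgeless, and a uniformly random bipartition only keeps each max-matching edge crossing with probability $1/2$, which gives $\mu(F) \geq \tfrac12\mu(G) - \alpha n$ and would destroy the $(1-o(1))$ claim. The correct and much simpler choice, which the paper uses in the proof of \Cref{clm:dynamic-correct}, is to fix an arbitrary maximum matching $M^\star$ of $G$, put one endpoint of each edge of $M^\star$ into $A$ and the other into $B$, so that $\mu(G[A,B]) = |M^\star| = \mu(G)$ exactly; the additive guarantee of \Cref{def:matching-cover} then transfers with no loss. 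A second minor point you gloss over: when running the background Gupta--Peng instance for the sparse regime while the graph is actually dense, its edge set must be artificially capped at the sparse threshold (with the overflow kept in a side list, as the paper does), otherwise its per-update cost is not $o(n)$. Both points are easy to repair and do not affect the overall structure, which matches the paper's.
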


We start by giving an overview
of our algorithm.
We first describe a strategy that enables us
to maintain an approximate matching with additive error
$o(n)$, and latter explain how to make the approximation guarantee
multiplicative.
We re-compute an $o(1)$-matching cover of the current
graph every $\Theta(n^{\omega-1}\log^2 n)$ updates,
and then pretend as if the matching cover is the entire graph,
and use
the $O(\sqrt{m})$ update-time algorithm of \citet*{GuptaP13}, stated below as \cref{prop:gupta-peng},
to maintain a nearly optimal matching.

First, it is easy to see that the amortized update time
of this strategy is $o(n)$, as
the computation time $O(n^{\omega}\log n)$ of the matching cover
gets
amortized over $\Theta(n^{\omega-1}\log^2 n)$ updates to $o(n)$,
and the number of edges in the matching cover
is $o(n^2)$.
Then to argue the correctness,
we have to show that the matching cover found
by our offline algorithm
is {\em robust to edge updates} -
that is, it remains an $o(1)$-matching cover
throughout the following $\Theta(n^{\omega-1}\log^2 n)$ updates.
This is indeed a feature of our offline algorithm:
in particular,
the number of edges
between each pair of large enough $X\subseteq C_i, Y\subseteq C_j$
for dense, regular $C_i,C_j$ pairs
is $\tilde{\Omega}(n^2)$,
which means that the hitting set property
will be preserved as long as
$\ll n^2$ edges have been deleted,
and as a result the subgraph obtained by our algorithm
remains an $o(1)$-matching cover throughout
the following $\Theta(n^{\omega-1}\log^2 n) \ll n^2$ updates, as desired.

To turn the additive approximation guarantee to a multiplicative one,
we will deal with ``sparse'' and ``dense'' regimes separately.
Specifically,
when the number of edges 
is at most $n^2/(\log^* n)^{\Omega(1)}$, we simply use the Gupta-Peng algorithm to maintain a $(1-\epsilon)$-approximation in $n/(\log^* n)^{\Omega(1)}$ update time. On the other hand, when the graph is dense, we first use the matching cover of \Cref{thm:matching-cover-polytime} to sparsify the graph while preserving its maximum matching, then run \cref{prop:gupta-peng} on this sparse graph to maintain a $(1-\epsilon)$-approximate matching of it in $n/(\log^* n)^{\Omega(1)}$ update-time. 
We also set up a ``buffer zone'' in the thresholds for switching between the two algorithms so that we do not pay the switching overhead too often.

We now present our algorithm.
We  first show an algorithm with initialization time $O(n^{\omega} \log n)$
and {\em amortized} update time $n/(\log^* n)^{\Omega(1)}$,
and then discuss how to make the update time worst-case. To present our algorithm, we need the following result by~\citet*{GuptaP13}. 

\begin{proposition}[\cite{GuptaP13}]\label{prop:gupta-peng}
There is a deterministic, fully dynamic algorithm
for maintaining a $(1-\eps)$-approximate matching
with initialization time $O(m_0\eps^{-1})$ and
worst-case update time $O(\sqrt{m} \eps^{-2})$,
where $m_0$ is the number of edges in the initial graph,
and $m$ is the maximum number of edges
in the graph throughout the updates.
\end{proposition}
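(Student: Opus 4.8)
The plan is to reduce maintaining a $(1-\eps)$-approximate matching to maintaining a matching that admits no \emph{short} augmenting path, and then to combine a ``lazy rebuild'' schedule with two regimes depending on the current matching size. The first ingredient is the classical Hopcroft--Karp observation: if a matching $M$ in $G$ has no augmenting path of length at most $2\lceil 1/\eps\rceil-1$, then a symmetric-difference argument against a maximum matching exhibits $\lceil 1/\eps\rceil$ vertex-disjoint short augmenting paths unless $|M|\geq (1-\eps)\,\mu(G)$. So it suffices to always maintain a matching with no augmenting path of length below $2/\eps$; call such a matching $\eps$-\emph{clean}. I would compute an $\eps$-clean matching from scratch in $O(m/\eps)$ time by running $\lceil 1/\eps\rceil$ phases, each phase consisting of one alternating BFS from the free vertices followed by a DFS that extracts a maximal set of vertex-disjoint shortest augmenting paths and augments along all of them; the standard analysis shows the shortest augmenting-path length strictly increases per phase, so after $1/\eps$ phases none of length $<2/\eps$ remains, and each phase is $O(m)$. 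This already gives the claimed $O(m_0\eps^{-1})$ initialization time.

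Next, the lazy schedule. Starting from an $\eps$-clean matching $M$ of size $\mu_0$, do nothing for the next $\Theta(\eps\mu_0)$ updates, returning $M$ with deleted edges removed. Over such an epoch, $\mu(G)$ changes by at most $O(\eps\mu_0)$ and the returned matching shrinks by at most $O(\eps\mu_0)$ (only deletions of currently matched edges hurt it), so it stays a $(1-O(\eps))$-approximation, and rescaling $\eps$ absorbs the constants. The amortized update cost is thus $O(\text{rebuild cost}/(\eps\mu_0))$. When $\mu_0\geq\sqrt m$ (the \textbf{dense-matching} regime), rebuilding from scratch in $O(m/\eps)$ gives amortized time $O(m/(\eps^2\mu_0))=O(\sqrt m/\eps^2)$, as required. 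To convert this to a worst-case guarantee I would spread each rebuild evenly across its epoch, running it on a frozen snapshot of the graph while buffering the intervening updates and replaying them at the end; since a rebuild spans only $O(\eps\mu_0)$ updates, the matching it produces remains $(1-O(\eps))$-approximate for the true graph once the buffer is applied.

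The main obstacle is the \textbf{sparse-matching} regime $\mu(G)<\sqrt m$, where an epoch of length $\Theta(\eps\mu)$ is simply too short to amortize a full $O(m/\eps)$ rebuild, and where by $m\leq 2n\mu$ one cannot even afford to scan all $m$ edges per rebuild. Here I would abandon global rebuilding and instead maintain the $\eps$-clean matching incrementally with bounded local work: on deleting a matched edge $(u,v)$, or inserting an edge that could close a short augmenting path, run a depth-$O(1/\eps)$ alternating BFS from the affected endpoints and augment if a short path is found; on inserting an edge between two free vertices, add it to $M$ and run the same local cleanup. The cost of such a local search is governed by the degrees of the vertices it visits, so to cap it at $O(\sqrt m/\eps)$ I would partition vertices into high-degree ones (degree $>\sqrt m$, of which there are fewer than $2\sqrt m$) and low-degree ones: a low-degree vertex can be scanned in $O(\sqrt m)$ time directly, while for each high-degree vertex I would keep auxiliary structures — e.g., the set of its currently free neighbours, and, per incident matched edge, a flag recording whether it lies on a short augmenting path — updated in $O(\mathrm{poly}(1/\eps))$ time per edge change, so that an alternating search never has to enumerate all edges at a high-degree vertex. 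The delicate, technically involved step is then the amortized analysis of these local operations: bounding the total number and cost of augmentations over an update sequence (showing cascades are controlled), and verifying the auxiliary structures really are maintainable within the per-edge budget; this is precisely where the $O(\sqrt m/\eps^2)$ bound is earned and where I expect most of the work to lie. Finally, interleaving the two regimes with a buffer zone of width $\Theta(\sqrt m)$ around the threshold $\mu=\sqrt m$ — so the algorithm does not switch regimes on every update and can pay for the one-time conversion cost — yields the stated fully dynamic $(1-\eps)$-approximate matching algorithm with worst-case update time $O(\sqrt m\,\eps^{-2})$.
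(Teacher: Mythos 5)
This statement is not one the paper proves: it is cited verbatim from \citet{GuptaP13} as a black-box tool (Proposition~\ref{prop:gupta-peng}), so there is no internal proof to compare against. What you have attempted is a blind re-derivation of the Gupta--Peng result, which is worth evaluating on its own terms.

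Your first half is sound and standard: the Hopcroft--Karp reduction to ``no augmenting path of length $< 2/\eps$,'' the $O(m/\eps)$ from-scratch construction, the lazy epoch of length $\Theta(\eps\mu_0)$, and the amortized bound $O\bigl(m/(\eps^2\mu_0)\bigr)$, which becomes $O(\sqrt m/\eps^2)$ once $\mu_0 \geq \sqrt m$. The background-rebuild-plus-replay trick for deamortizing an epoch is also the right move (the paper itself uses exactly this idea in \Cref{sec:dynamic}). All of this matches the folklore algorithm the paper sketches in its introduction and the dense regime of Gupta--Peng.

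The gap is where you acknowledge it is, in the regime $\mu < \sqrt m$, but I want to stress that it is not merely ``technically involved work left to do''---the strategy you sketch does not obviously converge to the claimed bound. A depth-$O(1/\eps)$ alternating BFS from the endpoints of a single changed edge can reach $\Theta(\mu)$ distinct matched vertices, and under your degree-bucketing scheme scanning each low-degree vertex already costs $\Theta(\sqrt m)$, so a single local repair can cost $\Theta(\mu\sqrt m)$, well above the target $O(\sqrt m/\eps^2)$. The auxiliary structures you propose for high-degree vertices (free-neighbor sets, per-matched-edge short-augmenting-path flags) are themselves nontrivial to maintain within a per-update budget, because a single augmentation can flip the status of many such flags at once; you would need a potential-function argument bounding cascades, and none is given. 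Moreover, this per-update local-repair scheme is not what Gupta--Peng actually do. Their algorithm stays within the periodic-rebuild paradigm in both regimes: when $\mu < \sqrt m$ they run the lazy Hopcroft--Karp rebuild not on $G$ but on a maintained sparsified \emph{kernel} of $G$ whose edge count is small enough that the amortization works out, and they prove a structural lemma that this kernel preserves the matching size to within $1-\eps$. That kernel-existence-and-maintenance lemma is the genuine content of the sparse regime, and your proposal is missing it entirely; what you have instead is a different (local-repair) design whose cost analysis is open. As it stands, the proposal establishes the proposition only when the graph is guaranteed to satisfy $\mu(G) \geq \sqrt m$ throughout, which is a strictly weaker statement than what is claimed.
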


Our algorithm is formally as follows. 
\begin{Algorithm}\label{alg:dynamic}
  \begin{center}
		A fully dynamic algorithm for \cref{thm:dynamicalgo}.
	\end{center}
  \medskip
  
  \textbf{Input:} An $n$-vertex fully dynamic graph $G$ subject to edge insertions and deletions.
  
  \smallskip
  
  \textbf{Output:} A (dynamically changing) $(1-\eps)$-approximate maximum of $G$.
  
  \smallskip

  \textbf{Parameters:} We set $t$, $\gamma$, $\delta$ as in \Cref{alg:matching-cover}, and set $\eps \gets 10 (\log^*n)^{-\delta/64}$.

  \smallskip\smallskip

\textbf{Sparse regime:}

\begin{enumerate}[itemsep=0pt,topsep=5pt]
    \item Whenever the number of edges in $G$ exceeds
      $n^2/(\log^* n)^{\delta/8}$, switch to the dense regime.
    \item Use \cref{prop:gupta-peng} on the whole graph $G$ to maintain a $(1-\epsilon)$-approximation.
    %with $\eps = 10 (\log^*n)^{-\delta/24}$.
\end{enumerate}

\textbf{Dense regime:}

\begin{enumerate}[itemsep=0pt,topsep=5pt]
    \item Whenever the number of edges in graph $G$ falls below
    $n^2/(2(\log^* n)^{\delta/8})$,
    restart the algorithm of \cref{prop:gupta-peng} on the whole graph $G$ to maintain a $(1-\eps)$-approximate matching of it, and switch to the sparse regime.
    \item Do the following every $n^{\omega-1}\log^2 n$ updates
      (including before the first update):
    \begin{enumerate}
    \item Use \Cref{alg:matching-cover} to construct an $o(1)$-matching cover $F$ of the graph $G$ in $O(n^\omega \log n)$ time (see \cref{thm:matching-cover-polytime}).
    \item Restart the algorithm of \cref{prop:gupta-peng} for maintaining a $(1-\eps)$-approximation of subgraph $F$.\label{line:2c}
    \end{enumerate}
    \item Upon insertion of an edge
      $e$, let $F\gets F\union \{e\}$ and trigger an edge insertion to the algorithm of \cref{prop:gupta-peng} we use on $F$.
    \item Upon deletion of an edge $e$,
      if $e\in F$, let $F\gets F - \{e\}$ and trigger an edge deletion to the algorithm of \cref{prop:gupta-peng} we use on $F$;
      otherwise ignore the deletion.
\end{enumerate}
\end{Algorithm}

We now turn to analyze \Cref{alg:dynamic}. First, we prove that it has our desired update-time via amortization. As discuss, we will later show how the algorithm can be deamortized.

\begin{claim}\label{clm:dynamic-runtime}
  The amortized update-time of \Cref{alg:dynamic} is $n/(\log^* n)^{\Omega(1)}$.
\end{claim}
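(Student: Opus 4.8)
The claim is about the amortized update time of \Cref{alg:dynamic}, so the plan is to account separately for the cost incurred within each regime and for the cost of switching between them, then argue that on any sufficiently long window the total work divided by the number of updates is $n/(\log^*n)^{\Omega(1)}$. First I would handle the \textbf{sparse regime}. Here the algorithm only runs \Cref{prop:gupta-peng} on the whole graph $G$, and by the sparse-regime invariant we have $m \leq n^2/(\log^*n)^{\delta/8}$ at all times in this regime, so the worst-case update time of the Gupta--Peng data structure is $O(\sqrt{m}\,\eps^{-2}) = O\!\left(n/(\log^*n)^{\delta/16}\right)\cdot\eps^{-2}$; since $\eps = 10(\log^*n)^{-\delta/64}$, this is still $n/(\log^*n)^{\Omega(1)}$ per update. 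Then I would handle the \textbf{dense regime}: the per-update cost from the two \texttt{insert}/\texttt{delete} triggers to the Gupta--Peng structure maintained on $F$ is $O(\sqrt{|E(F)|}\,\eps^{-2})$, and since $F$ is a matching cover with $O(\gamma n^2) = n^2/(\log^*n)^{\Omega(1)}$ edges (\Cref{cl:fewedgespicked}, \Cref{thm:matching-cover-polytime}) and the graph stays dense so at most $n^{\omega-1}\log^2 n \ll n^2$ further edges are inserted between rebuilds, $|E(F)| = n^2/(\log^*n)^{\Omega(1)}$ throughout, giving again $n/(\log^*n)^{\Omega(1)}$ worst-case per update in the dense regime, ignoring rebuild cost.

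Next I would amortize the \textbf{periodic rebuild cost} inside the dense regime. Every $n^{\omega-1}\log^2 n$ updates the algorithm runs \Cref{alg:matching-cover} in $O(n^\omega \log n)$ time (\Cref{thm:matching-cover-polytime}) and then reinitializes \Cref{prop:gupta-peng} on $F$ in $O(|E(F)|\eps^{-1}) = O(n^2\eps^{-1})$ time. Both of these are dominated by $O(n^\omega\log n)$. Dividing by the $n^{\omega-1}\log^2 n$ updates over which this cost is charged yields an amortized contribution of $O(n^\omega\log n)/(n^{\omega-1}\log^2 n) = O(n/\log n)$, which is $n/(\log^*n)^{\Omega(1)}$. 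I would also note that $\eps^{-2}$ factors are $(\log^*n)^{\delta/32}$, absorbed into the $(\log^*n)^{\Omega(1)}$ slack since the savings $(\log^*n)^{\delta/8}$ dominates.

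Finally I would handle the \textbf{regime-switching cost}. Each time the algorithm switches from sparse to dense, or dense to sparse, it reinitializes \Cref{prop:gupta-peng} on the whole graph $G$, costing $O(m\eps^{-1}) = O(n^2\eps^{-1})$; a dense-to-sparse switch additionally does not trigger a matching-cover rebuild beyond the periodic schedule. The key point is the \emph{buffer zone}: a switch to the dense regime happens only when $m$ exceeds $n^2/(\log^*n)^{\delta/8}$, while a switch back to sparse happens only when $m$ falls below $n^2/(2(\log^*n)^{\delta/8})$; hence between any two consecutive switches at least $n^2/(2(\log^*n)^{\delta/8})$ edge updates must occur. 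Amortizing the $O(n^2\eps^{-1})$ switch cost over these $\Omega(n^2/(\log^*n)^{\delta/8})$ updates gives an amortized contribution of $O((\log^*n)^{\delta/8}\eps^{-1}) = (\log^*n)^{O(1)}$ per update, which is negligible. Summing the three contributions --- per-update work in each regime, amortized rebuild, amortized switching --- gives a total amortized update time of $n/(\log^*n)^{\Omega(1)}$.

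The main obstacle, and the step requiring the most care, is bookkeeping the interaction between the periodic matching-cover rebuild and the regime switches: I must make sure that whenever we are in the dense regime the matching cover $F$ is current enough to still be an $o(1)$-matching cover (this uses the robustness property that $F$ stays a cover for $n^{2-o(1)}$ updates, established in the overview and formalized later in \Cref{sec:dynamic}), and that the rebuild schedule is reset appropriately on entering the dense regime so that $F$ is never stale; but none of this affects the \emph{running-time} accounting, which only needs the two upper bounds $|E(F)| \le n^2/(\log^*n)^{\Omega(1)}$ and ``$\le n^{\omega-1}\log^2 n$ updates between rebuilds,'' both of which hold by construction.
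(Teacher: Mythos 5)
Your proposal is correct and follows essentially the same decomposition as the paper's proof: bound the per-update cost in each regime via Gupta--Peng on a graph of size $m$ or $|F|$, amortize the periodic matching-cover rebuilds over $n^{\omega-1}\log^2 n$ updates, and amortize the regime-switch reinitializations over the $\Omega(n^2/(\log^*n)^{\delta/8})$ updates forced by the threshold buffer zone. The only slight imprecision is the statement that a sparse-to-dense switch ``reinitializes Gupta--Peng on the whole graph $G$'' --- it actually triggers an immediate matching-cover computation followed by reinitialization on $F$ --- but since that cost is already covered by your periodic-rebuild accounting (the schedule includes the call ``before the first update''), the total amortized bound still holds.
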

\begin{myproof}
    When the algorithm is in the sparse regime, there are at most $m \leq  n^2/(\logstar n)^{\delta/8}$ edges in it. Since in this case we use \cref{prop:gupta-peng} on the whole graph, the update-time is at most 
    $$
    O(\sqrt{m}/\epsilon^2) = O\left(\sqrt{n^2/(\logstar n)^{\delta/8}} / \left(10(\logstar n)^{-\delta/64} \right)^2\right) = O(n/(\logstar n)^{\delta/32}) = n/(\logstar n)^{\Omega(1)},
    $$
    where the last equality holds since we set parameter $\delta$ to be a constant in \Cref{alg:matching-cover}.
    Also, the initialization time is 
    $$
    O(m_0 / \eps) = 
    O\left(
    n^2/
    (
    2(\log^*n)^{\delta/8})
    / (10(\logstar n)^{-\delta/64})
    \right)
    = O(n^2 / (\log^*n)^{7\delta/64}).
    $$
    Due to the threshold gap for switching between the sparse
    and dense cases, the initialization only happens
    every $n^2/(2(\log^*n)^{\delta/8})$ updates.
    This coupled with that $\delta$ is a constant means that
    the running time of the initialization gets
    amortized to $(\log^*n)^{O(1)} \ll n/(\log^*n)^{\Omega(1)}$.
    
    For the dense regime, we analyze the amortized cost of running \Cref{alg:matching-cover} and the algorithm of \Cref{prop:gupta-peng} on $F$ separately.
    
    Observe that we call \Cref{alg:matching-cover} either if it we are in the dense regime and there has been $n^{\omega-1}\log^2 n$ updates since we last called it, or if we switch from the sparse regime to the dense regime. Once again because of the threshold gap for switching from sparse to dense vs. from dense to sparse regimes, the latter type of calls to \Cref{alg:matching-cover} only happen every $n^2/(2(\logstar n)^{\delta/8}) \gg n^{\omega-1}\log^2 n$ (since $\omega < 2.373$) updates. As such, since \Cref{alg:matching-cover} takes $O(n^\omega \log n)$ time by \Cref{thm:matching-cover-polytime}, the overall amortized cost of running \Cref{alg:matching-cover} is only $O\left(n^\omega \log n / (n^{\omega-1} \log^2 n) \right) = O(n/\log n)$.

    Next, note that immediately after we run \Cref{alg:matching-cover}, the set $F$ only includes $O(\gamma n^2) = O(n^2/(\logstar n)^\delta)$ edges by \cref{cl:fewedgespicked}. Within the next $n^{\omega-1}\log^2 n$ updates until we call \Cref{alg:matching-cover} again, we may add up to $n^{\omega-1}\log^2 n$ other edges to $F$. Therefore, at any point $F$ will include at most $O(n^2/(\logstar n)^\delta + n^{\omega-1}\log^2 n) = O(n^2/(\logstar n)^\delta)$ edges. This means that the update-time of \cref{prop:gupta-peng} for the dense regime is at most
    $$
    O(\sqrt{|F|}/\epsilon^2) = O\left(\sqrt{n^2/(\logstar n)^{\delta}} / \left(10(\logstar n)^{-\delta/64} \right)^2\right) = O(n/(\logstar n)^{15\delta/32}) = n/(\logstar n)^{\Omega(1)}.
    $$
    
    Adding up all the computation costs above, we get that the algorithm has an overall amortized update time of $n/(\logstar n)^{\Omega(1)}$.
\end{myproof}

Next, we prove that \Cref{alg:dynamic} maintains a $(1-o(1))$-approximate matching w.h.p.

\begin{claim}\label{clm:dynamic-correct}
  At any point, the output of \Cref{alg:dynamic} is w.h.p. a $(1-o(1))$-approximate maximum matching of $G$. This holds, in particular, against an adaptive adversary that is aware of both the output and the state of the algorithm.
\end{claim}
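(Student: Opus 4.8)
The plan is to analyze the two regimes of \Cref{alg:dynamic} separately and then deal with the adaptive adversary. The sparse regime is immediate: there the number of edges of $G$ is at most $n^2/(\logstar n)^{\delta/8}$, the algorithm runs the deterministic algorithm of \Cref{prop:gupta-peng} directly on $G$, and since $\eps = 10(\logstar n)^{-\delta/64} = o(1)$ this maintains a $(1-o(1))$-approximate matching of $G$ at all times. So essentially all the work is in the dense regime, where the subgraph $F$ fed to \Cref{prop:gupta-peng} is the $o(1)$-matching cover produced by the most recent invocation of \Cref{alg:matching-cover} (at most $n^{\omega-1}\log^2 n$ updates ago), augmented with every edge inserted since and with every deleted edge removed; in particular $F$ is always a subgraph of the current graph $G'$.

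The first and main step is a \emph{robustness} claim: conditioned on the event of \Cref{cl:goodpair-sample-nonempty} for that invocation of \Cref{alg:matching-cover} -- namely that $|F_3(X,Y)| > n^2/\log^6 n$ for every good pair $(C_i,C_j)$ and all $X \subseteq C_i, Y \subseteq C_j$ with $|X| \geq \gamma|C_i|, |Y| \geq \gamma|C_j|$, which holds w.h.p.\ -- the current $F$ is still an $\alpha$-hitting set of $G'$. To see this, take disjoint $A,B$ with $|A| = |B| = \alpha n$ and a perfect matching $M$ between them in $G'$; if some edge of $M$ lies in $F$ we are done, so assume $M \cap F = \emptyset$. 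Then no edge of $M$ was inserted during the window (inserted edges are always added to $F$), and no edge of $M$ lies within a class or between a bad pair (such edges, if present when \Cref{alg:matching-cover} ran and not since deleted, lie in $F_1 \cup F_2 \subseteq F$). Hence every edge of $M$ is between good pairs, which is exactly the situation in the proof of \Cref{lem:whpmatchingcover}; the consolidation argument there (using \Cref{lem:fractionalmatching}) produces a good pair $(C_i,C_j)$ and subsets $X \subseteq C_i \cap A, Y \subseteq C_j \cap B$ with $|X| \geq \gamma|C_i|, |Y| \geq \gamma|C_j|$. Since at most $n^{\omega-1}\log^2 n = o(n^2/\log^6 n)$ edges were deleted during the window (using $\omega < 3$), at least one of the more than $n^2/\log^6 n$ edges of $F_3(X,Y)$ guaranteed at construction time by \Cref{cl:goodpair-sample-nonempty} still lies in the current $F$, giving an edge of $F$ between $A$ and $B$. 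By \Cref{lem:hitcover}, $F$ is therefore an $\alpha$-matching cover of $G'$.

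Next I convert this into the approximation bound and handle the adversary. Let $M^*$ be a maximum matching of $G'$ and $(A,B)$ a bipartition of $V(M^*)$ with each edge of $M^*$ going between $A$ and $B$; then $\mu(G'[A,B]) \geq |M^*| = \mu(G')$, so the matching-cover property gives $\mu(F) \geq \mu(F[A,B]) \geq \mu(G') - \alpha n$. In the dense regime $G'$ has more than $n^2/(2(\logstar n)^{\delta/8})$ edges, so by \Cref{fact:m-mu(G)} we get $\mu(G') > n/(4(\logstar n)^{\delta/8})$; since $\alpha = \Theta((\gamma\log(1/\gamma))^{1/3}) = (\logstar n)^{-\delta/3 + o(1)}$ for $\gamma = (\logstar n)^{-\delta}$, this gives $\alpha n/\mu(G') = (\logstar n)^{-5\delta/24 + o(1)} = o(1)$ because $\delta$ is a positive constant, hence $\mu(F) \geq (1-o(1))\mu(G')$. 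As \Cref{prop:gupta-peng} is maintained on $F$ (restarted at the start of the window and fed every subsequent change of $F$), its output has size at least $(1-\eps)\mu(F) \geq (1-\eps)(1-o(1))\mu(G') = (1-o(1))\mu(G')$, and it is a valid matching of $G$ since $F \subseteq G'$. Combined with the sparse case -- and noting that every regime switch re-initializes \Cref{prop:gupta-peng} correctly and that $F$ is rebuilt before the first dense-regime update -- the output is always a $(1-o(1))$-approximate matching of $G$ on the event that \Cref{cl:goodpair-sample-nonempty} holds for the current window. Finally, the only randomness in \Cref{alg:dynamic} is in the calls to \Cref{alg:matching-cover} (\Cref{prop:gupta-peng} is deterministic), each call uses fresh independent randomness, and its favorable event holds with probability at least $1 - 2^{-n}$ for \emph{every} underlying graph, in particular for whatever graph the adaptive adversary has built when the call occurs; conditioned on it, everything above is a worst-case statement over the (adaptively chosen) next $n^{\omega-1}\log^2 n$ updates and over the adversary's knowledge of $F$ and of the state, since the argument never used that the updates were oblivious. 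A union bound of $2^{-n}$ over the $\poly(n)$ windows in a polynomial-length update sequence then gives the w.h.p.\ guarantee.

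The step I expect to be the obstacle is the robustness claim: arguing that the \emph{same} $F$, built from a regularity partition that is never recomputed, stays a matching cover as $G'$ evolves under both insertions and deletions. The three observations that make it go through are that inserted edges are always put into $F$, that $F_1 \cup F_2$ already contains every non-good-pair edge present at construction time, and that the strong lower bound $|F_3(X,Y)| > n^2/\log^6 n$ of \Cref{cl:goodpair-sample-nonempty} -- rather than mere positivity, cf.\ \Cref{remark:over-sampling} -- leaves enough slack to absorb a window of deletions. A secondary point needing care is checking that the parameters ($\alpha$, $\eps$, and the two density thresholds) are mutually consistent so that $\alpha n = o(\mu(G'))$ in the dense regime and the additive guarantee genuinely becomes multiplicative.
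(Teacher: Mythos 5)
Your proof is correct and follows essentially the same approach as the paper's: split into sparse and dense regimes, show that in the dense regime $F$ remains an $\alpha$-hitting set (hence an $\alpha$-matching cover, via \Cref{lem:hitcover}) throughout each window of $n^{\omega-1}\log^2 n$ updates by exploiting the over-sampling guarantee of \Cref{cl:goodpair-sample-nonempty} and the fact that insertions are always forwarded into $F$, and then convert the additive error $\alpha n$ into a multiplicative $(1-o(1))$ factor using the density lower bound on $\mu(G')$ in the dense regime. Your write-up is a bit more explicit than the paper's in two places — the case analysis showing that any edge of the hypothetical matching $M$ avoiding $F$ must cross a good pair (ruling out inserted edges and $F_1\cup F_2$ edges separately), and the explicit union bound over $\poly(n)$ windows for the w.h.p.\ / adaptive-adversary conclusion — but these are elaborations of steps the paper treats tersely, not a different argument.
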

\begin{myproof}
    For the sparse regime, this directly
follows from the correctness of \cref{prop:gupta-peng} since we run it on the entire graph $G$.
We thus focus on the dense regime.

First, note that in the dense regime there are at least $m \geq n^2/(2(\logstar n)^{\delta/8})$ edges in the graph. Observe that any $n$-vertex $m$-edge graph has a matching of size at least $m/(2n-1)$: iteratively pick an arbitrary free edge, add it to the matching, and remove its endpoints from the graph; each step only removes at most $(2n-1)$ edges, thus the matching must have size at least $m/(2n-1)$. From this, we get that whenever the algorithm is in the dense regime, there is a matching of size at least $\mu(G) \geq n/(4(\logstar n)^{\delta/8})$ in it.

Next, we claim that at any point in the dense regime, $F$ is an $\alpha$-matching cover
of $G$ (\cref{def:matching-cover}), where as defined in \Cref{lem:whpmatchingcover},
$$
\alpha = \Theta((\gamma \log(1/\gamma))^{1/3}) =  \Theta((\logstar n)^{-\delta} \log((\logstar n)^{\delta}) )^{1/3}) = O((\logstar n)^{-\delta/4}).
$$
By \cref{lem:hitcover}, to show this,
it suffices to show that $F$ is an
$\alpha$-hitting set of $G$ (\cref{def:hitting-set})
at any point in the dense regime.
To see this, observe that immediately after we call \Cref{alg:matching-cover},  $F$ must be an $\alpha$-hitting set of $G$ simply by the guarantee of \Cref{thm:matching-cover-polytime}. However, for the next $n^{\omega-1}\log^2 n$ updates until we re-run \Cref{alg:matching-cover}, both the graph $G$ and $F$ change due to the updates to the graph. Observe that edge insertions cause no problem since any edge added will be added to $F$ as well. But edge deletions may cause a problem. In particular, recall that we subsample $o(1)$ fraction of edges of the good pairs in \Cref{alg:matching-cover}, and if they are all removed then we no longer have an $\alpha$-hitting set. Indeed, given that the adaptive adversary is aware of this sampled subset, he can attempt to remove these edges one by one. The crucial observation, here, is that right after we call \Cref{alg:matching-cover},  \Cref{cl:goodpair-sample-nonempty} guarantees that there are w.h.p. at least $|F_3(X, Y)| \geq n^2/\log^6 n$ subsampled edges between any two large enough subsets $X \subseteq C_i, Y \subseteq C_j$ of any good pair $(C_i, C_j)$. On the other hand, our guarantee of \Cref{thm:matching-cover-polytime} that $F$ is an $\alpha$-hitting set only requires $|F_3(X, Y)| > 0$ (see \Cref{remark:over-sampling}). As a result, even if the adversary attempts to remove edges of $F_3$ one by one within the next $n^{\omega-1}\log^2 n \ll n^2/\log^6 n$ updates, $F_3(X, Y)$ will remain non-empty and so $F$ remains an $\alpha$-hitting set.

Moreover, since $F$ is an $\alpha$-matching cover of $G$, we get from \cref{def:matching-cover}, taking $M^\star$ to be an arbitrary maximum matching of $G$ and taking sets $A$ and $B$ to each include one endpoint of each edge in $M^\star$ arbitrarily, we get that 
$$
    \mu(F) \geq \mu(F[A, B]) \geq \mu(G[A, B]) - \alpha n = |M^\star| - \alpha n = \mu(G) - \alpha n.
$$

Putting together the bounds above, we get that at any point during the updates in the dense regime, $F$ includes a matching of size at least 
$$
\mu(F) \geq \mu(G) - \alpha n = \mu(G) - O(n/(\logstar n)^{\delta/4}) \geq (1-o(1))\mu(G),
$$
where the last equality holds since $\mu(G) \geq \Omega(n/(\logstar n)^{\delta/8})$ as discussed above. 
Running the algorithm of \cref{prop:gupta-peng} on top of this, we maintain a $(1-\epsilon)(1-o(1))\mu(G) = (1-o(1))\mu(G)$ size matching overall.
\end{myproof}

\def\gg{G_{\mathrm{good}}}
\def\mg{M_{\mathrm{good}}}

\subsection{From Amortized to Worst-case Update Time}

We now show how to make the update time worst-case
without blowing up the update time by more than a constant factor.

In the sparse regime,
our update time is already worst-case as guaranteed
by \cref{prop:gupta-peng}.
We then consider how to make the update time worst-case in the dense regime.

First note that the running time of our algorithm for the dense regime does not
depend on the number of edges.
Thus we could always run the dense regime algorithm in the background,
and only adopt its solution when the graph is dense.
To make the update time worst-case,
we distribute the computation of
a matching cover evenly over
the following $n^{\omega-1}\log^2 n$ updates,
and then distribute the initialization of the data structure with edges $F$
evenly across the $n^{\omega-1}\log^2 n$ updates after.
Of course, at this point the data structure will be falling behind
by $2n^{\omega-1}\log^2 n$ updates. We will then catch it up
in the following $n^{\omega-1}\log^2 n$ updates, by feeding
it $3$ updates per update.
At any point, we will always use the data structure that is up-to-date,
and discard it as soon as a new data structure has become up-to-date.
This way any data structure only goes through $O(n^{\omega-1}\log^2n) $ updates.

Note that we still have to address the switch from the dense regime to sparse regime,
where we restart \cref{prop:gupta-peng} algorithm by initiating the data structure with the edges
of the entire current graph.
To this end, we actually also always run the \cref{prop:gupta-peng} algorithm in the background,
and only use its solution if needed. But since the running time of the algorithm depends
on the number of edges, we will make sure that the number of edges in the data structure is always
bounded by $n^2 (\log^*n)^{-\delta/8}$.
Specifically, upon the insertion of an edge $e$, if the number of edges in the data structure is
already $n^2 (\log^*n)^{-\delta/8}$, we do not insert $e$ into the data structure,
but store $e$ in a linked list $L$.
Upon the deletion of an edge $e$,
if it is currently in the data structure,
we delete it from the data structure;
otherwise we delete it from $L$.
Moreover,
whenever the number of edges in the data structure becomes strictly
less than $n^2 (\log^*n)^{-\delta/8}$ after a deletion,
we immediately insert an edge in $L$ (if any) into the data structure,
and delete that edge from $L$.
This way, in the sparse regime, it is guaranteed that all edges are in the data structure.

\Cref{thm:dynamicalgo} now follows from~\Cref{clm:dynamic-correct} for the correctness of~\Cref{alg:dynamic} and~\Cref{clm:dynamic-runtime} and the discussions in this subsection for its runtime.

\newcommand{\MatchingCover}{\ensuremath{\textnormal{\textsf{Matching-Cover}}}\xspace}

\section{Single-Pass Streaming Algorithms} 

We prove~\Cref{res:stream1} and~\Cref{res:stream2} in this section. Both algorithms rely on using matching covers iteratively in the same way and differ primarily on how they compute matching covers and some additional steps. Because of this, we first present and prove a generic result that uses matching covers in a blackbox way to obtain a streaming algorithm for finding matching covers and then extend it separately to obtain for~\Cref{res:stream1} and~\Cref{res:stream2}.

\subsection{A Streaming Algorithm for Matching Covers} 

We present an algorithm that computes the matching cover of a graph presented in a stream by iteratively computing matching covers of smaller subsets of the stream without losing ``much'' on the quality of the final matching cover.
For technical reasons that will become clear later, we need this algorithm
to work for multi-graphs as well. 

\begin{proposition}\label{prop:streaming-mc}
	For any integer $k \geq 1$ and any  $\alpha \in (0,1/10)$, there exists a single-pass streaming algorithm that computes an $\alpha$-matching cover of  $n$-vertex multi-graphs with at most $m$ edges in space 
	\[
		O\Paren{\frac{m}{k} \cdot \log\paren{\frac{n^2 \cdot k}{m}} + \MC{n}{\alpha/{2k}} \cdot \log{\paren{\frac{n^2}{\MC{n}{{\alpha}/{2k}}}}} \cdot \log{k}};
	\] 
	here, we assume we are given a subroutine $\MatchingCover$ that given adjacency matrix access to any $n$-vertex graph with $m/k$ edges, can compute an $(\alpha/2k)$-matching cover with at most $\MC{n}{\alpha/2k}$ edges using $O((m/k) \cdot \log{({n^2 \cdot k}/{m})})$ space. The streaming algorithm requires calling $\MatchingCover$ $O(k)$ times and is deterministic as long as the $\MatchingCover$ subroutine is deterministic.
\end{proposition}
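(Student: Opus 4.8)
The plan is to implement the iterative buffering strategy sketched in the technical overview and make the space accounting precise. I would maintain a sequence of buffers $B_1, B_2, \ldots, B_L$, where $L = O(\log k)$, together with the succinct dictionary of \Cref{prop:succinct-dict} to store the edge set of each buffer compactly rather than at $\Theta(\log n)$ bits per edge. The first buffer $B_1$ collects edges directly from the stream. When $B_1$ reaches its capacity threshold — roughly $2m/k$ edges — we invoke $\MatchingCover$ on its contents (accessed via the adjacency-matrix interface simulated from the dictionary), obtain an $(\alpha/2k)$-matching cover of at most $\MC{n}{\alpha/2k}$ edges, push that cover into $B_2$, and empty $B_1$. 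The same rule applies recursively: whenever $B_i$ for $i \geq 2$ becomes full, we matching-cover it and push the result to $B_{i+1}$. At the end of the stream we flush all buffers top-down by the same matching-cover-and-push operation, ending with the contents of the last buffer as the output.

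The key steps in order: \textbf{(1)} Bound the number of buffers. Each time we cover a full buffer of $\approx 2m/k$ edges we produce $\leq \MC{n}{\alpha/2k}$ edges; since $\MC{n}{\beta} = O(\beta^{-1} n)$-type bounds make this at most (a constant fraction of) $m/k$ when $\alpha/2k$ is not too small, only a constant fraction of edges survive one level, so after $L = O(\log k)$ levels a buffer receives fewer than $2m/k$ edges total over the whole stream and never fills — this caps $L$. Actually the cleaner bookkeeping: the total number of edges ever entering $B_{i+1}$ is at most (number of times $B_i$ is flushed) $\times \MC{n}{\alpha/2k}$; the number of flushes of $B_i$ is (total edges into $B_i$)$/(2m/k)$; iterating this recursion shows the edge-count into level $i$ decays geometrically, so $B_{L+1}$ with $L = O(\log k)$ never reaches threshold. \textbf{(2)} Bound the error. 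Each matching-cover application loses an additive $(\alpha/2k)\cdot n$ in the matching-cover guarantee of \Cref{def:matching-cover}, and these additive losses compose: an edge passing through $j$ levels has accumulated error $\leq j \cdot (\alpha/2k) n$. Since $j \leq L$ and we want total error $\leq \alpha n$, we need $L \cdot (\alpha/2k) \leq \alpha$, i.e. $L \leq 2k$ — which holds comfortably since $L = O(\log k) \leq 2k$. Formally one shows by induction on the levels that the union of buffer contents is always an $\alpha$-matching cover of the prefix of the stream seen so far; the composition uses that if $H'$ is a $\beta$-matching cover of $H$ and $H$ is a $\beta'$-matching cover of $G$ then $H'$ is a $(\beta+\beta')$-matching cover of $G$ (restricted to the relevant induced bipartite subgraphs), which is immediate from \Cref{def:matching-cover} and monotonicity of $\mu$ under taking subgraphs. \textbf{(3)} Bound the space. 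At any moment each $B_i$ holds at most its threshold number of edges: $B_1$ at most $O(m/k)$ edges stored in $(1+o(1))\log\binom{n^2}{m/k} = O((m/k)\log(n^2 k/m))$ bits via \Cref{prop:succinct-dict}; each $B_i$ for $i\geq 2$ holds at most $O(\MC{n}{\alpha/2k})$ edges in $O(\MC{n}{\alpha/2k}\log(n^2/\MC{n}{\alpha/2k}))$ bits; summing over $L = O(\log k)$ buffers gives the claimed bound, with the $\log k$ factor attached to the second term. The working space of one $\MatchingCover$ call is $O((m/k)\log(n^2 k/m))$ by hypothesis, absorbed into the first term. \textbf{(4)} Determinism and call count: the only randomness is inside $\MatchingCover$, so the algorithm is deterministic if that subroutine is; and the total number of $\MatchingCover$ invocations is the total number of buffer flushes, which by the geometric decay in step (1) is $O(k)$.

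The main obstacle I expect is step (1)/(3) together: arguing that buffers beyond the first hold $O(\MC{n}{\alpha/2k})$ edges at all times while simultaneously showing the number of buffers is $O(\log k)$ and the total call count is $O(k)$. This requires choosing the threshold of $B_1$ (namely $\Theta(m/k)$) and verifying that $\MC{n}{\alpha/2k}$ is indeed a constant factor below that threshold — which needs $\MC{n}{\alpha/2k} = O(m/k)$; since $\MC{n}{\beta}$ is at most $n^2$ and at most $O(n/\beta) = O(nk/\alpha)$ by \Cref{fact:m-mu(G)}-type reasoning applied to a matching cover, one has to be slightly careful in the regime where $m$ is small, but in that regime $\MC{n}{\alpha/2k} \le m/k$ can be arranged, or one simply observes the claimed space bound already dominates. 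The secondary subtlety is the interface mismatch: $\MatchingCover$ wants adjacency-matrix access, but a buffer is stored as a succinct dictionary of edges; this is handled by noting a $\member$ query on the dictionary answers an adjacency-matrix query in $O(1)$ amortized time, so simulating the matrix access costs nothing in space and only polynomial time. Everything else — the error composition in step (2) and the determinism/call-count bookkeeping in step (4) — is routine once the buffer invariants are in place.
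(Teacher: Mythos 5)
Your error analysis in step (2) has a genuine flaw. You claim the total additive loss is bounded by $L \cdot (\alpha/2k)\,n$ with $L = O(\log k)$ the number of buffer levels, reasoning that ``an edge passing through $j$ levels has accumulated error $\leq j \cdot (\alpha/2k)\,n$'' and concluding that $L \leq 2k$ ``holds comfortably.'' This per-edge, per-level accounting does not reflect how the guarantee in \Cref{def:matching-cover} composes. The $\alpha' n = (\alpha/2k)\,n$ loss is a fixed additive quantity incurred \emph{per call} to $\MatchingCover$, not per edge, and at level $i$ there are up to $k_i \approx k/2^{i-1}$ independent calls. A single matching $M$ in $G[X,Y]$ whose edges are spread across all $\approx k$ calls at level~$1$ can therefore lose up to $k \cdot \alpha' n = \alpha n/2$ from that level alone --- far exceeding your claimed $O(\log k)\cdot \alpha' n$. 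Your ``composition lemma'' ($\beta$ plus $\beta'$) is correct as stated, but you misapply it: the union of the level-$2$ covers is not an $\alpha'$-matching cover of $G$; it is only a $(k_1\cdot\alpha')$-matching cover, because the $k_1$ sub-buffers are covered separately and the losses add. The correct accounting is per-flush: the total loss is $\sum_i k_i \cdot \alpha' n \leq 2k \cdot \alpha' n = \alpha n$, and this is tight. It is precisely why $\alpha'$ must be set to $\alpha/2k$ rather than, say, $\alpha/\Theta(\log k)$, which your argument would erroneously permit but which would fail to produce an $\alpha$-matching cover.

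To repair the argument you need a per-level lemma in the spirit of the paper's \Cref{lem:rs-approx}: fix a maximum matching in $G[X,Y]$, partition it across the $k_i$ disjoint sub-buffers of level $i$ (plus the residual), apply the matching-cover guarantee once per sub-buffer, and sum the losses to get $k_i \cdot \alpha' n$. Your geometric-decay bound $\sum_i k_i \leq 2k$ from step~(1) is then exactly what closes the argument. Everything else in your proposal --- buffer-count bookkeeping, succinct dictionary storage, the adjacency-matrix interface, call count, and determinism --- is sound and mirrors the paper's proof; your variant of flushing buffers top-down at the end (rather than outputting the union of surviving contents as the paper does) would also work modulo a slightly larger constant in $\alpha'$ to absorb the $O(\log k)$ extra flushes.
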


The algorithm in~\Cref{prop:streaming-mc} is based on a novel use and modification of the widely used ``Merge and Reduce'' technique in the streaming literature (used previously e.g., for quantile estimation~\cite{MankuRL99,KarninLL16} or cut/spectral sparsifiers~\cite{McGregor14}). We give a high level overview of the algorithm here and present the formal description in~\Cref{alg:rs-matching}. 

The algorithm maintains $t:=O(\log{k})$ different buffers $B_1,\ldots,B_t$ of edges throughout the stream (all these buffers store their edges using the succinct dynamic dictionary of~\Cref{prop:succinct-dict} to save space). Buffer $B_1$ simply starts reading edges from the stream until it collects $m/k$ edges; it will then use the (offline) subroutine $\MatchingCover$ over these edges with parameter $\alpha' = \alpha/2k$ to obtain an $\alpha'$-matching cover of the subgraph of input on edges in $B_1$. Edges of this matching cover are then inserted to buffer $B_2$ and we empty buffer $B_1$, which will continue reading edges from the stream again. In the mean time, whenever buffer $B_2$ gets ``full'', this time meaning that it receives twice as 
many edges as $\MC{n}{\alpha'}$, we compute another $\alpha'$-matching cover using $\MatchingCover$, this time over the edges in $B_2$, pass them to buffer $B_3$, and empty $B_2$ which continues receiving edges from buffer $B_1$. This process is done the same way across all buffers until all edges of the stream have passed (we prove buffer $B_t$ never gets full so not having a buffer $B_{t+1}$ is not a problem). At the end, we argue that the edges that are remained across all buffers $B_1,\ldots,B_t$ at the end of the stream form an $\alpha$-matching cover of the input. 

The analysis of the algorithm involves showing that: $(i)$ fewer and fewer edges find their way to higher-indexed buffers, $(ii)$ the repeated application of $\MatchingCover$ does not blow up the approximation guarantee by too much, and $(iii)$ all this can be implemented in a relatively small space. We now present the formal algorithm and its analysis. 

%\clearpage

\begin{Algorithm}\label{alg:rs-matching}
	\begin{center}
		An algorithm for~\Cref{prop:streaming-mc}. 
	\end{center}
	\medskip
	
	\textbf{Input:} A multi-graph $G=(V,E)$ in the stream with $n$ edges and at most $m$ edges. We are also given integer $k \geq 1$ and approximation parameter $\alpha > 0$, and access to the (offline) subroutine $\MatchingCover$ as specified in~\Cref{prop:streaming-mc}. 
	
	\smallskip
	
	\textbf{Output:} An $\alpha$-matching cover of $G$. 
	
	\smallskip
	
	\textbf{Parameters:} We set $t:= (\log{k}+2)$ and $\alpha' := \alpha/2k$. 
	
	\smallskip
	
	\begin{enumerate}[label=$(\roman*)$]
		\item Maintain the following \textbf{buffers} of edges $B_1,\ldots,B_t$ using succinct dynamic dictionary of~\Cref{prop:succinct-dict} (we specify the details in~\Cref{lem:rs-space}): 
		
		% via data structures $\DS(B_1),\ldots,\DS(B_t)$ of~\Cref{prop:succinct-dict} over the universe of all pairs of vertices in $V_H$: 
		\begin{enumerate}
			\item Buffer $B_1$: add any arriving edge $(u,v)$ arrives in the stream to $B_1$. Once size of $B_1$ reaches $m/k$, 
			run $\MatchingCover$ to find an $\alpha'$-matching-cover of the subgraph $(V,B_1)$ of $G$ and add all those
			edges to $B_2$. Restart $B_1$ by deleting all its current edges.  
			
			\item Buffers $B_i$ for $i > 1$: once size of $B_i$ reaches $2 \cdot \MC{n}{\alpha'}$, run $\MatchingCover$ to find an $\alpha'$-matching-cover of the subgraph $(V,B_i)$ of $G$ and 
			add all those edges to $B_{i+1}$\footnote{We will show in~\Cref{clm:k-i} that this step never happens for buffer $B_t$, namely, it never gets ``full'', and thus the algorithm is well-defined even though there is no buffer $B_{t+1}$.}. Restart $B_i$ by deleting all its current edges. 
				\end{enumerate}
		
		\item Return $(B_1 \cup \ldots \cup B_t)$ at the end of the stream. 
	\end{enumerate}
\end{Algorithm}

We start by analyzing the space complexity of~\Cref{alg:rs-matching}. 

\begin{lemma}\label{lem:rs-space}
	\Cref{alg:rs-matching} can be implemented in space of
	\[
		O\paren{\frac{m}{k} \cdot \log\paren{\frac{n^2 \cdot k}{m}} + t \cdot \MC{n}{\alpha'} \cdot \log\paren{\frac{n^2}{\MC{n}{\alpha'}}}}.
	\]
\end{lemma}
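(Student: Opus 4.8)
The plan is to account for the space of each buffer separately, using the succinct dictionary of~\Cref{prop:succinct-dict} to store edges rather than the naive $\Theta(\log n)$ bits per edge. For each buffer $B_i$ I would first argue an upper bound on the number of edges it holds at any point in time, and then plug that size bound into the succinct-dictionary space guarantee $(1+o(1))\log\binom{u}{s}$ with universe size $u = \binom{n}{2} \le n^2$ (the set of possible edges — note a subtlety for multigraphs, discussed below) and $s$ the size bound just established. Summing over the $t = \log k + 2$ buffers then gives the claimed bound.

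First I would handle $B_1$: by construction it never exceeds $m/k$ edges before being flushed, so $s_1 = m/k$ and its space is $(1+o(1))\log\binom{n^2}{m/k} = O\paren{\frac{m}{k}\log\paren{\frac{n^2 k}{m}}}$, using the standard estimate $\log\binom{u}{s} = O(s\log(u/s))$. For each of the remaining buffers $B_2,\ldots,B_t$, by construction a buffer $B_i$ with $i>1$ is flushed as soon as it reaches $2\cdot\MC{n}{\alpha'}$ edges, so it holds at most $2\cdot\MC{n}{\alpha'}$ edges at any time; each therefore occupies $O\paren{\MC{n}{\alpha'}\log\paren{\frac{n^2}{\MC{n}{\alpha'}}}}$ space, and there are $t-1 < t$ of them, giving the second term $O\paren{t\cdot\MC{n}{\alpha'}\log\paren{\frac{n^2}{\MC{n}{\alpha'}}}}$. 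I would also note that running $\MatchingCover$ on the contents of a buffer requires, by the assumed space bound on that subroutine, only $O((m/k)\log(n^2 k/m))$ working space since each buffer has at most $m/k$ edges (for $B_1$ exactly, and for $i>1$ we may assume $2\MC{n}{\alpha'} \le m/k$, or otherwise the claimed bound is vacuous/absorbed), so the scratch space for these offline computations does not dominate. Adding the two contributions yields exactly the stated bound.

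The main obstacle — really the only non-routine point — is the multigraph issue: the succinct dictionary of~\Cref{prop:succinct-dict} maintains a \emph{set} from a universe, but a multigraph buffer may contain parallel copies of an edge, so I cannot directly treat the buffer as a subset of $\binom{V}{2}$. I would resolve this by storing each edge together with a small multiplicity counter, or equivalently by expanding the universe: if the maximum multiplicity is bounded (which it is, since each buffer holds at most $m/k \le m$ edges total, so multiplicities are at most $m$ and cost $O(\log m) = O(\log n^2)$ extra bits per \emph{distinct} edge, not per copy), the overhead is subsumed into the $o(1)$ slack and the $\log(n^2 k/m)$, $\log(n^2/\MC{n}{\alpha'})$ factors already present. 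A cleaner alternative is to index the universe by (edge, copy-index) pairs of size $u = m \cdot \binom{n}{2} \le n^2 m$; then $\log\binom{u}{s}$ with $s = m/k$ is still $O\paren{\frac{m}{k}\log\paren{\frac{n^2 k}{m}}}$ up to the additive $\log m$ per element, which is again absorbed. I would spell out whichever of these is least fussy and then conclude by summing, remarking that the $t$ factor on the second term is exactly as claimed since $t = \log k + 2$.
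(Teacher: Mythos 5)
Your overall structure matches the paper's proof: bound the number of edges each buffer ever holds, plug those sizes into the succinct-dictionary space $\log\binom{n^2}{s}$, observe that the offline $\MatchingCover$ working space fits in the first term, and sum over the $t$ buffers. That part is correct.

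The multigraph handling, however, contains a genuine gap. You correctly flag it as the non-routine point, but both of your workarounds introduce an extra $\Theta(\log m)$ bits per stored distinct edge, and your claim that this is ``absorbed into \ldots the $\log(n^2 k/m)$, $\log(n^2/\MC{n}{\alpha'})$ factors already present'' is false in the regime the paper actually cares about. For $B_1$ the per-element cost in the target bound is $\log(n^2 k/m)$; when $m = \Theta(n^2)$ and $k$ is small (e.g.\ $k = (\logstar n)^{O(1)}$ in \Cref{thm:stream1}, or $k = 2^{\Theta(\logstar n)}$ in \Cref{cor:stream2}), we have $\log(n^2 k/m) = O(\log k) \ll \log m = \Theta(\log n)$, so the $\log m$ term dominates and your bound becomes $O((m/k)\log n)$, which exceeds $n^2$ whenever $k < \log n$ --- exactly the parameter range in which the theorems are invoked. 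So the proposed encoding would destroy the subquadratic space guarantee downstream.

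The fix the paper uses, which you missed, is simpler than either of your workarounds and avoids the loss entirely: \emph{deduplicate}. Before inserting $(u,v)$ into $B_i$, query $\DS_i.\member(u,v)$ and skip the insert if the edge is already present. This is lossless because $\MatchingCover$ is only ever invoked on the underlying simple graph $(V,B_i)$ --- parallel edges contribute nothing to any matching --- so each buffer remains a subset of $\binom{V}{2}$ and the universe size is $n^2$ with no multiplicity overhead. The parameter $m$ (total stream length counting multiplicities) still correctly bounds $k_1$, the number of times $B_1$ is flushed, so the rest of the accounting is unchanged.
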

\begin{proof}
	At any point in the algorithm, $B_1$ contains $s_1=O(m/k)$ edges and each $B_i$ for $i > 1$ contains $s_i = O(\MC{n}{\alpha'})$ edges.  
	We can maintain each buffer $B_i$ for $i \in [t]$ using a dedicated data structure $\DS_i$ of~\Cref{prop:succinct-dict} for the parameter $s_i$ over the universe of all pairs of vertices: %In particular, 
	\begin{itemize}
	\item To add any edge $(u,v)$ to a buffer $B_i$, we first check if $(u,v)$ belongs to $B_i$ via $\DS_i.\,\member(u,v)$, and if not use $\DS_i.\,\add(u,v)$ to add the edge to $B_i$. 
	\item To delete all edges from $B_i$, we simply erase $\DS_i$ and start it from scratch. 
	\item $\DS_i$ now gives us an adjacency matrix access to the subgraph $(V,B_i)$ by checking $\DS_i.\,\member(u,v)$ for finding if $(u,v)$ is an edge in the subgraph. 
	\end{itemize}
	Notice that even though $G$ can be a multi-graph, each individual $(V,B_i)$ is a simple graph. 

	By~\Cref{prop:succinct-dict}, 
	the space needed for storing $\DS_1$ and each $\DS_i$ for $i > 1$ is, respectively, 
	\begin{align*}
		&(1+o(1)) \cdot \log{{{n^2}\choose{m/k}}} = O(\frac{m}{k} \cdot \log\paren{\frac{n^2 \cdot k}{m}}), \\
		&(1+o(1)) \cdot \log{{{n^2}\choose{\MC{n}{\alpha'}}}} = O(\MC{n}{\alpha'} \cdot \log{(\frac{n^2}{\MC{n}{\alpha'}})}).
	\end{align*}
	Given there are $t-1$ buckets of the latter type, this bounds the space needed to store all the buffers in the algorithm as required in the lemma statement. 
	
	Finally, to implement each run of the subroutine $\MatchingCover$, since we have stored $\DS_i$ for the buffer $B_i$, we can provide an adjacency matrix access to $B_i$ for $\MatchingCover$ (as required by the \Cref{prop:streaming-mc} statement), by simply checking $\DS_i.\,\member(u,v)$ for any query $(u,v)$ to the adjacency matrix. As $\MatchingCover$ is promised to use $O((m/k) \cdot \log{(n^2 \cdot k}/{m}))$ space with this access, we get that the final bound on the space complexity of \Cref{alg:rs-matching}. \end{proof}
\noindent
We now prove the correctness of~\Cref{alg:rs-matching}. To do so, we need the following definitions: 
\begin{itemize}
	\item Let $H^1_1,\ldots,H^1_{k_1}$ denote the $k_1$ separate matching covers constructed by the algorithm over the edges of buffer $B_1$, one for each time that we restart $B_1$. 
	Let $G^2 := H^1_1 \cup \ldots \cup H^1_{k_1}$ denote the graph that is sent to buffer $B_2$ throughout the algorithm (for notational convenience, we also define $G^1=G$ as the input graph, 
	namely, the graph that is sent to buffer $B^1$). 
	
	\item For any $i \in [2: t-1]$, similarly, let $H^i_1,\ldots,H^i_{k_i}$ denote the $k_i$ separate matching covers constructed by the algorithm over the edges of buffer $B_i$. Let 
	$G^{i+1} := H^i_1 \cup \ldots \cup H^i_{k_i}$ denote the graph that is sent to buffer $B_{i+1}$ throughout the algorithm.
\end{itemize}

We prove that the number of subgraphs at buffer $B_i$ drops by a factor of $2^{i}$ compared to $B_1$. 

\begin{claim}\label{clm:k-i}
	For any $i \in [t-1]$, $k_i \leq k/2^{i-1}$ and $k_t = 0$ meaning that bucket $B_{t}$ never generates a matching cover (namely, it never gets full). 
\end{claim}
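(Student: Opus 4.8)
The plan is to track how many times each buffer "fires" (i.e., fills up and produces a matching cover that is flushed to the next buffer), and show this count drops geometrically with the buffer index. First I would set up the bookkeeping: buffer $B_1$ fires once per $m/k$ edges read from the stream, and since the stream has at most $m$ edges, $B_1$ fires at most $k$ times, so $k_1 \leq k$. Each firing of $B_1$ sends one matching cover to $B_2$; by the guarantee of the $\MatchingCover$ subroutine in~\Cref{prop:streaming-mc}, each such cover has at most $\MC{n}{\alpha'}$ edges. More generally, each firing of $B_i$ (for $i \geq 1$) sends at most $\MC{n}{\alpha'}$ edges into $B_{i+1}$, because the output of $\MatchingCover$ on any input has at most $\MC{n}{\alpha'}$ edges.

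The key counting step is then: buffer $B_{i+1}$ fires only after it has accumulated $2\cdot\MC{n}{\alpha'}$ edges, and it receives edges only in "batches" of at most $\MC{n}{\alpha'}$ edges, one batch per firing of $B_i$. Hence it takes at least two firings of $B_i$ to cause one firing of $B_{i+1}$, giving the recurrence $k_{i+1} \leq \lfloor k_i / 2 \rfloor \leq k_i/2$. Unrolling from $k_1 \leq k$ yields $k_i \leq k/2^{i-1}$ for all $i \in [t-1]$, which is the first claimed bound. (One should be slightly careful that "at least two firings of $B_i$ per firing of $B_{i+1}$" is valid even accounting for edges left sitting in $B_{i+1}$ at the end of the stream — but that only makes $k_{i+1}$ smaller, so the inequality is safe.)

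For the final statement $k_t = 0$, I would plug in $t = \log k + 2$: we get $k_{t-1} \leq k/2^{t-2} = k/2^{\log k} = 1$, so $B_{t-1}$ fires at most once, sending at most $\MC{n}{\alpha'}$ edges into $B_t$. Since $B_t$ needs $2\cdot\MC{n}{\alpha'}$ edges to fire and receives at most $\MC{n}{\alpha'}$, it never fires, i.e., $k_t = 0$. This also retroactively justifies that~\Cref{alg:rs-matching} is well-defined despite having no buffer $B_{t+1}$.

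The main obstacle — really the only subtle point — is making the batch-counting argument airtight: one must argue that every transfer of edges into $B_{i+1}$ is bounded by $\MC{n}{\alpha'}$ (true by the $\MatchingCover$ size guarantee, including when $i = 1$ since $B_1$ also outputs a matching cover, not its raw $m/k$ edges), and that $B_{i+1}$'s firing threshold of $2\cdot\MC{n}{\alpha'}$ forces the factor-two drop. Everything else is a routine induction on $i$.
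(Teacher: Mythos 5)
Your proof is correct and takes essentially the same edge-counting approach as the paper: the paper also bounds $k_i$ by dividing the total number of edges sent to $B_i$ (at most $k_{i-1}\cdot\MC{n}{\alpha'}$, since each of the $k_{i-1}$ firings of $B_{i-1}$ contributes a matching cover of at most $\MC{n}{\alpha'}$ edges) by the firing threshold $2\cdot\MC{n}{\alpha'}$, yielding the same recurrence $k_i\leq k_{i-1}/2$ and the same conclusion $k_{t-1}\leq 1$, hence $k_t=0$.
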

\begin{proof}
	We prove $k_i \leq k/2^{i-1}$ inductively. For the base case, since we restart buffer $B_1$ after each $m/k$ edges in the stream and there are at most $m$ edges in 
	the stream, we have  $k_1 \leq k$. For $i > 1$, the algorithm creates an $\alpha'$-matching-cover $H^i_j$ whenever bucket $B_i$ gets full, which happens only when it collects $2 \cdot \MC{n}{\alpha'}$ edges. 
	Moreover, the total number of edges ever sent to the bucket $B_i$ is $\card{E(G^{i})}$ by the definition of the subgraph $G^{i}$. Thus, 
	\begin{align*}
		k_i &\leq \frac{1}{2 \cdot \MC{n}{\alpha'}} \cdot \card{E(G^i)} \tag{as $k_i$ is equal to the number of times $B_i$ gets full} \\
		&\leq  \frac{1}{2 \cdot \MC{n}{\alpha'}} \cdot \sum_{j=1}^{k_{i-1}} \card{E(H^{i-1}_j)} \tag{as $G^{i}$ is a union of $k_{i-1}$ matching-covers $H^{i-1}_j$} \\
		&\leq  \frac{1}{2 \cdot \MC{n}{\alpha'}} \cdot k_{i-1} \cdot \MC{n}{\alpha'} \tag{by the guarantee of $\MatchingCover$, $\card{E(H^{i-1}_j)} \leq  \MC{n}{\alpha'}$} \\
		&\leq \frac{k}{2^{i-1}},
	\end{align*}
	where the last step is by the induction hypothesis for $i-1$. This proves the first part of the claim. 
	
	We now have that $k_{t-1} \leq k/2^{t-2} = k/2^{\log{k}}= 1$.  Thus, only one matching-cover is ever sent to $B_t$ and so $B_t$ receives at most $\MC{n}{\alpha'}$ edges and  never gets full. 
\end{proof}

The following lemma captures the loss on the size of maximum matching that the algorithm maintains from one buffer to the next one. In other words, the cost we have
to pay for introduction of each level of buffers. 
\begin{lemma}\label{lem:rs-approx}
	For any $i \in [t-1]$ and any disjoint subsets of vertices $X,Y \subseteq V$, 
	\[
	\mu\paren{(G^{i+1} \cup B^f_{i} \cup \ldots \cup B^f_1)[X,Y])} \geq \mu\paren{(G^{i} \cup B^f_{i-1} \cup \ldots \cup B^f_1)[X,Y]} - k_i \cdot \alpha' \cdot n. 
	\]
	 where  $B^f_j$ for $j \in [t]$ is the \underline{final} content of the buffer at the end of the stream. 
\end{lemma}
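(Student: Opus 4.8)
The plan is to prove this by induction on how the content of buffer $B_i$ accumulates over the course of the stream, but the cleanest route is actually to set up a single accounting that tracks one ``epoch'' of buffer $B_i$ at a time. Recall $G^i$ is the union of all edges ever sent to buffer $B_i$, and it arrives in $k_i$ blocks that each trigger a call to $\MatchingCover$; at the end the leftover edges sitting in $B_i$ are exactly $B^f_i$. So $G^i = (\bigcup_{j=1}^{k_i} E(H^i_j)) \sqcup B^f_i$ where $H^i_j$ is the $\alpha'$-matching cover the algorithm builds from the $j$-th full block of $B_i$, and $G^{i+1} = \bigcup_{j=1}^{k_i} H^i_j$. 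The key point is that passing from $G^i$ to $G^{i+1}$ ``loses'' only the part of each block that $\MatchingCover$ discarded, and by the matching-cover guarantee each such loss is at most $\alpha' n$ in the induced-matching sense, for any fixed pair $(X,Y)$.

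First I would fix disjoint $X,Y\subseteq V$. Let $W_j := (\text{$j$-th full block of } B_i)$ be the set of $m/k$-or-$2\MC{n}{\alpha'}$ edges that triggered the $j$-th matching cover, so that $H^i_j$ is an $\alpha'$-matching cover of the subgraph $(V, W_j)$. By~\Cref{def:matching-cover} applied to the pair $(X,Y)$ (restricted appropriately), $\mu(H^i_j[X,Y]) \geq \mu((V,W_j)[X,Y]) - \alpha' n$. The second step is to compare matchings in $G^i \cup B^f_{i-1} \cup \cdots \cup B^f_1$ (call this $P_i$) with matchings in $G^{i+1}\cup B^f_i \cup \cdots \cup B^f_1$ (call this $P_{i+1}$). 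Take a maximum matching $M$ of $P_i[X,Y]$. Partition its edges according to which of the disjoint pieces of $P_i$ they come from: edges lying in some $W_j$ (there are $k_i$ such pieces), edges lying in $B^f_i$, and edges lying in $B^f_{i-1}\cup\cdots\cup B^f_1$. For the edges of $M$ inside a single $W_j$, they form a matching in $(V,W_j)[X,Y]$, hence by the cover guarantee $H^i_j[X,Y]$ contains a matching on those same vertex classes of size at least $(\text{that many}) - \alpha' n$; do this for each $j$ and union the results (they are vertex-disjoint on $X$ and on $Y$ because the original $M$-edges were). For the edges of $M$ inside $B^f_i$ or the lower buffers, keep them verbatim since those edge sets are present in $P_{i+1}$ as well. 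Re-assembling, we get a matching in $P_{i+1}[X,Y]$ of size at least $\mu(P_i[X,Y]) - k_i\cdot \alpha' n$, which is exactly the claimed inequality.

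The one subtlety — and the step I expect to need the most care — is arguing that the per-block ``repaired'' matchings can be glued back together without collisions, i.e., that the additive losses genuinely add rather than interact. This works because the original matching $M$ is a single matching: its edges incident to a given vertex of $X$ (or $Y$) number at most one, so when we replace the $M$-edges inside $W_j$ by the cover-guaranteed matching of $H^i_j$ on the \emph{same} vertex classes $V(M)\cap W_j$-restricted-to-$X$ and to-$Y$, those classes are disjoint across different $j$ and disjoint from the vertices used by the kept edges. Hence the union is a legal matching, and its size is $\sum_j (\text{size of $M$ in } W_j) - k_i\alpha' n + (\text{size of $M$ in the kept parts}) \geq |M| - k_i\alpha' n$. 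The only thing to be careful about is that $H^i_j$ is a \emph{subgraph} of $(V,W_j)$, so $H^i_j[X,Y]\subseteq P_{i+1}[X,Y]$, which is immediate from $H^i_j\subseteq G^{i+1}$. No randomness enters here, so the statement holds deterministically.
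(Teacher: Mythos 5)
Your proof is correct and takes essentially the same approach as the paper: fix a maximum matching $M$ of $(G^i \cup B^f_{i-1}\cup\cdots\cup B^f_1)[X,Y]$, partition its edges across the $k_i$ full blocks of $B_i$ plus the leftover buffers, apply the $\alpha'$-matching-cover guarantee block-by-block restricted to the endpoints of $M$ in that block, and glue the repaired pieces together using the disjointness inherited from $M$ being a single matching. (One small slip in your overview paragraph: $G^i$ is the disjoint union of the \emph{blocks} $W_j$ and $B^f_i$, not of the $E(H^i_j)$'s and $B^f_i$ --- it is $G^{i+1}$ that equals $\bigcup_j H^i_j$; your actual argument uses the correct decomposition.)
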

\begin{proof}
	Fix any $i \in [t-1]$ and a maximum matching $M^*_{i}$ of $(G^{i} \cup B^f_{i-1} \cup \ldots \cup B^f_1)[X,Y]$. We construct a matching $M_{i+1}$ in $(G^{i+1} \cup B^f_i \cup \ldots \cup B^f_1)[X,Y]$ 
	such that $\card{M_{i+1}} \geq \card{M^*_i} - k_i \cdot \delta \cdot n$. This will then immediately implies the lemma. To continue we need some more definition. 
	
	For any $H^i_j$ for $j \in [k_i]$, let $B^{i}_j$ denote the content of buffer $B_i$ when the algorithm creates $H^i_j$. This way, 
	$H^i_j$ is a matching-cover of $(V,B^i_j)$. Moreover, $B^i_1,\ldots,B^i_{k_i}$ together with $B^f_i$ partition all the edges that are ever sent to buffer $B_i$, namely, the graph $G^i$.  
	These edges are also further disjoint from $B^f_{i-1},\ldots,B^f_1$ since the latter set of edges were never sent to buffer $B_i$. We can partition the edges of $M^*_i$  
	between these sets and along the way define our matching $M_{i+1}$ as well: 
	\begin{itemize}
		\item For any $j \in [k_i]$, let $M^*_{i,j} := M^*_{i} \cap B^i_j$ and $M_{i,j}$ be the maximum matching in $H^i_j$ between $X(M^*_{i,j})$ and $Y(M^*_{i,j})$.  
		\item For any $i' \in [i]$, let $M^{*,f}_{i'} := M^*_i \cap B^f_i$ and $M^f_{i'} := M^{*,f}_{i'}$ which is between $X(M^{*,f}_{i'})$, $Y(M^{*,f}_{i'})$. 
		\item Define $M_{i+1} := M^*_{i,1} \cup \cdots \cup M^*_{i,k_i} \cup M^f_i \cup \cdots \cup M_i^f$. 
	\end{itemize} 
	We note that $M_{i+1}$ is a matching between $X$ and $Y$ because the sets of vertices $X(M^*_{i,j})$ and $X(M^*_{i,j})$ for $j \in [k_i]$, as well as $X(M^{*,f}_{i'})$ and $Y(M^{*,f}_{i'})$ for $i' \in [i]$ are all disjoint given they are defined with respect to a fixed matching $M^*_i$ 
	over disjoint sets of edges. Moreover, $M_{i+1}$ belongs to $(G^{i+1} \cup B^f_i \cup \ldots \cup B^f_1)[X,Y]$ as $H^i_j$ is part of $G^{i+1}$ for $j \in [k_i]$. 
	It thus only remains to bound the size of $M_{i+1}$. 
	
	For all $i' \in [i]$, $M^f_{i'}$ and $M^{*,f}_{i'}$ are the same so there is nothing to do here. For $j \in [k_i]$, we have, 
	\begin{align*}
		\card{M_{i,j}} = \mu\paren{H^i_j[X(M^*_{i,j}),Y(M^*_{i,j})]} &\geq \mu\paren{B^i_j[X(M^*_{i,j}),Y(M^*_{i,j})])} - \alpha' \cdot n  \tag{as $H^i_j$ is a $\alpha'$-matching-cover of $B^i_j$ and by~\Cref{def:matching-cover}} \\
		&=  \card{M^*_{i,j}} - \alpha \cdot n \tag{as $M^*_{i,j}$ is a perfect matching in $B^i_j$ between $X(M^*_{i,j})$ and $Y(M^*_{i,j})$}. 
	\end{align*}
	Thus, 
	\begin{align*}
		\card{M_{i+1}} = \sum_{j=1}^{k_i} \card{M_{i,j}} + \sum_{i'=1}^{i} \card{M^f_{i'}} &\geq \sum_{j=1}^{k_i} (\card{M^*_{i,j}} - \alpha' \cdot n) + \sum_{i'=1}^{i} \card{M^{*,f}_{i'}} = \card{M^*_{i}} - k_i \cdot \alpha' \cdot n, 
	\end{align*}
	concluding the proof. 
\end{proof}

We can now conclude the bound on the approximation ratio of the algorithm. 

\begin{lemma}\label{lem:rs-final-approx}
	\Cref{alg:rs-matching} outputs an $\alpha$-matching cover of any input multi-graph $G$.  
\end{lemma}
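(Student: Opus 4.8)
The statement to prove is \Cref{lem:rs-final-approx}: that the final output $B^f_1\cup\cdots\cup B^f_t$ of \Cref{alg:rs-matching} is an $\alpha$-matching cover of $G$. The natural approach is to chain together the per-level loss bounds of \Cref{lem:rs-approx} across all $t-1$ buffer transitions, and then control the total error using the geometric decay of $k_i$ from \Cref{clm:k-i}. First I would fix arbitrary disjoint vertex subsets $X,Y\subseteq V$ and unwind the definition of matching cover: I need $\mu\big((B^f_1\cup\cdots\cup B^f_t)[X,Y]\big)\ge \mu(G[X,Y])-\alpha n$. The right endpoint of the chain is the output graph; note that $G^t$ is a union of $k_{t-1}\le 1$ matching covers, all of whose edges end up in $B^f_t$ (since $B_t$ never gets full by \Cref{clm:k-i}), so $G^t\cup B^f_{t-1}\cup\cdots\cup B^f_1 = B^f_t\cup\cdots\cup B^f_1$ is exactly the returned subgraph. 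The left endpoint is $G^1\cup(\text{empty})=G$.

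Then I would telescope: applying \Cref{lem:rs-approx} successively for $i=1,2,\ldots,t-1$ gives
\[
\mu\big((B^f_t\cup\cdots\cup B^f_1)[X,Y]\big)\ \ge\ \mu(G[X,Y])\ -\ \Big(\sum_{i=1}^{t-1} k_i\Big)\cdot\alpha' \cdot n.
\]
Now I invoke \Cref{clm:k-i}, which gives $k_i\le k/2^{i-1}$, so $\sum_{i=1}^{t-1}k_i\le k\sum_{i\ge 1}2^{-(i-1)}=2k$. Combined with the choice $\alpha'=\alpha/2k$ from the algorithm's parameters, the total additive loss is at most $2k\cdot(\alpha/2k)\cdot n=\alpha n$, which is exactly what we need. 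This also confirms the role of the factor-$2$ slack in $\alpha'$: it absorbs the $\sum 2^{-(i-1)}=2$ from the geometric series.

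One small bookkeeping point I would make explicit is that \Cref{lem:rs-approx} is stated for the ``running'' graphs $G^i\cup B^f_{i-1}\cup\cdots\cup B^f_1$, so the telescoping is clean only if the quantity appearing on the left of the bound for index $i$ matches the quantity on the right of the bound for index $i+1$ — which it does by construction, since both are $\mu\big((G^{i+1}\cup B^f_i\cup\cdots\cup B^f_1)[X,Y]\big)$. I would also note that $X,Y$ were arbitrary, so the bound holds for all disjoint pairs, giving the matching-cover property by \Cref{def:matching-cover}.

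\textbf{Main obstacle.} There is no serious obstacle here — this lemma is a short corollary once \Cref{lem:rs-approx} and \Cref{clm:k-i} are in hand; the only thing to be careful about is the indexing at the two ends of the chain (verifying that $G^1=G$ and that $G^t\cup B^f_{t-1}\cup\cdots\cup B^f_1$ is literally the algorithm's output, using $k_t=0$), and making sure the constant in $\sum k_i\le 2k$ together with $\alpha'=\alpha/2k$ gives exactly $\alpha n$ and not something larger. The genuinely technical content lives in \Cref{lem:rs-approx} (the per-level matching-cover composition argument) and \Cref{clm:k-i} (the geometric decay), both of which we may assume.
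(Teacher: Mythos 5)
Your proposal is correct and follows essentially the same argument as the paper's proof: identify the returned subgraph with $G^t\cup B^f_{t-1}\cup\cdots\cup B^f_1$ using $k_t=0$, telescope \Cref{lem:rs-approx} down to $G^1=G$, and bound $\sum_{i=1}^{t-1}k_i\le 2k$ via \Cref{clm:k-i} so that the total loss is $2k\cdot\alpha'\cdot n=\alpha n$. No gaps.
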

\begin{proof}
	Recall that for every $i \in [t]$, $B^f_i$ denotes the final content of the buffer $B_i$. Moreover by~\Cref{clm:k-i}, buffer $B_t$ never gets full and thus $B^f_t = G^t$. Finally, 
	the algorithm returns $H := (B^f_1,\ldots,B^f_t)$. Fix any disjoint sets of vertices $X,Y \subseteq V(G)$. We have, 
	\begin{align*}
	    \mu(H[X,Y]) &= \mu\paren{(B^f_t \cup B^f_{t-1} \cup \ldots \cup B^f_1)[X,Y]} \tag{by the definition of $H$} \\
	    &= \mu\paren{(G^t \cup B^f_{t-1} \cup \ldots \cup B^f_1)[X,Y]} \tag{as $B^f_t = G^t$} \\
		&\geq \mu\paren{(G^{t-1} \cup B^f_{t-2} \cup \ldots \cup B^f_1)[X,Y]} - k_{t-1} \cdot \alpha' \cdot n \tag{by~\Cref{lem:rs-approx} for $i=t-1$} \\
		&\geq \mu(G[X,Y]) - \sum_{i=1}^{t-1} k_i \cdot \alpha' \cdot n \tag{by repeatedly applying~\Cref{lem:rs-approx} for all $i  < t-1$ and since $G^1=G$} \\
		&\geq \mu(G) - \sum_{i=1}^{t-1} (k/2^{i-1}) \cdot \alpha' \cdot n \tag{by~\Cref{clm:k-i}, $k_i \leq k/2^{i-1}$} \\
		&\geq \mu(G) - 2k \cdot \alpha' \cdot n \tag{by the sum of the geometric series} \\
		&= \mu(G) - \alpha \cdot n \tag{by the choice of $\alpha' = \alpha/2k$}.
	\end{align*}
	This implies that for every disjoint subsets of vertices $X,Y \subseteq V(G)$, we have $\mu(H[X,Y]) \geq \mu(G[X,Y]) - \alpha \cdot n$, thus 
	making $H$ an $\alpha$-matching cover of $G$ by~\Cref{def:matching-cover}. 
\end{proof}

\begin{proof}[Proof of~\Cref{prop:streaming-mc}]
    The bound on the space complexity of the algorithm follows from~\Cref{lem:rs-space} by plugging the value of $\alpha'=\alpha/2k$ and $t=\log{k}+1$. The correctness follows from~\Cref{lem:rs-final-approx}. Finally,~\Cref{alg:rs-matching} is deterministic modulo any potential randomness used by $\MatchingCover$.  
\end{proof}

\subsection{A Streaming Matching Algorithm via Regularity Lemma}\label{sec:stream-matching-regularity}

We now use~\Cref{prop:streaming-mc} together with our~\Cref{thm:matching-cover-polytime} to formalize \Cref{res:stream1} as follows.

\begin{theorem}[Formalization of~\Cref{res:stream1}]\label{thm:stream1}
There is a randomized single-pass streaming algorithm
that with high probability computes a $(1-o(1))$-approximate matching
of a graph presented in a stream with adversarial order of edge arrivals
in $n^2/(\log^* n)^{\Omega(1)}$ space and polynomial time. 
\end{theorem}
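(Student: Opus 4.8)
The plan is to combine the merge-and-reduce reduction of~\Cref{prop:streaming-mc} with the offline regularity-based matching cover of~\Cref{thm:matching-cover-polytime} (run space-efficiently via~\Cref{prop:regularity-space-efficient}) used as the $\MatchingCover$ subroutine, and to return a maximum matching of the resulting $o(n^2)$-edge matching cover. Concretely, I would run~\Cref{alg:rs-matching} with the edge-count parameter set to the trivial bound $m:=\binom{n}{2}$, with merge-and-reduce fan-in $k:=(\log^* n)^{c_0}$ for a small constant $c_0>0$, and with $\MatchingCover$ instantiated by~\Cref{alg:matching-cover} using parameters $t=(\log^* n)^{\delta},\gamma=(\log^* n)^{-\delta}$ for a small constant $\delta$, so that the cover it returns is an $(\alpha/2k)$-matching cover with $\MC{n}{\alpha/2k}\le n^2/(\log^* n)^{\Omega(1)}$ edges, where $\alpha:=2k\cdot\Theta\big((\gamma\log(1/\gamma))^{1/3}\big)=(\log^* n)^{-\Omega(1)}$ is the additive error of the final cover. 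Fixing $m=\binom{n}{2}$ means that on very sparse inputs buffer $B_1$ never fills, so the algorithm simply returns the whole input graph and a maximum matching of the output is \emph{exact}; on denser inputs the output is an $\alpha$-matching cover of $G$ by~\Cref{lem:rs-final-approx}.

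For the space bound, plug these choices into~\Cref{prop:streaming-mc}. The first term is $O\big(\tfrac{m}{k}\log\tfrac{n^2k}{m}\big)=O\big(\tfrac{n^2}{k}\log k\big)=n^2\cdot O(\log\log^* n)/(\log^* n)^{c_0}$, and the second is $O\big(\MC{n}{\alpha/2k}\cdot\log\tfrac{n^2}{\MC{n}{\alpha/2k}}\cdot\log k\big)\le n^2\cdot O\big((\log\log^* n)^2\big)/(\log^* n)^{\Omega(1)}$; since any fixed positive power of $\log^* n$ dominates every $\polylog(\log^* n)$ factor, the total is $n^2/(\log^* n)^{\Omega(1)}$. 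It remains to check that the $\MatchingCover$ subroutine fits the $O((m/k)\log(n^2k/m))$ working-space budget demanded by~\Cref{prop:streaming-mc}: by~\Cref{prop:regularity-space-efficient} finding the $\gamma$-regular partition takes $O(n\cdot Q(t,1/\gamma)\log n)=O(n\log^2 n)$ space once $\delta$ is small enough that $Q(t,1/\gamma)\le\log n$, and by~\Cref{remark:over-sampling} it suffices to additionally sample $\Theta(n\log n)$ edges, so the subroutine's working space is $\Ot(n)$, far smaller than $(m/k)\log(n^2k/m)=\Theta\big(n^2(\log\log^* n)/(\log^* n)^{c_0}\big)$. Finally, the subroutine is invoked $O(k)=(\log^* n)^{O(1)}$ times, so a union bound over the w.h.p. guarantee of~\Cref{thm:matching-cover-polytime} shows all invocations return valid covers w.h.p.

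To upgrade the additive guarantee to a multiplicative one, note that if $B_1$ ever fills then $G$ had at least $\binom{n}{2}/k=\Omega(n^2/(\log^* n)^{c_0})$ edges, so by~\Cref{fact:m-mu(G)} we have $\mu(G)=\Omega(n/(\log^* n)^{c_0})$; choosing $\delta$ large enough relative to $c_0$ that the exponent of $\alpha\approx 2k(\log^* n)^{-\delta/3}=(\log^* n)^{-(\delta/3-c_0-o(1))}$ strictly exceeds $c_0$ --- which forces only $\delta>6c_0$, compatible with $\delta$ being a small enough constant --- makes the additive error $\alpha n=o(\mu(G))$, whence by~\Cref{def:matching-cover} (taking $A,B$ to be the two sides of a maximum matching of $G$) the output cover $H$ satisfies $\mu(H)\ge\mu(G)-\alpha n\ge(1-o(1))\mu(G)$; and if $B_1$ never fills then, as noted, a maximum matching of the output is exact. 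Either way, outputting a maximum matching of the returned $o(n^2)$-edge cover --- computable in $\poly(n)$ time --- gives a $(1-o(1))$-approximation, and the total running time is polynomial, dominated by the $O(k)$ calls to the $\poly(n)$-time regularity decomposition of~\Cref{thm:regularityalgo} and the final matching computation.

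The main obstacle is not any single ingredient but the joint bookkeeping of the nested small parameters, forced by the minuscule $(\log^* n)^{\Omega(1)}$ gain over the trivial $n^2$ bound: a single $\log n$ --- indeed even a $\log\log n$ --- of overhead per stored edge or per parallel copy would erase the saving, so all buffers must be stored in the succinct dictionary of~\Cref{prop:succinct-dict} and only $\polylog(\log^* n)$ multiplicative slack is permitted anywhere. One must simultaneously arrange $\delta<1$ and $\delta$ small enough that the regularity-lemma blow-up $Q(t,1/\gamma)$ stays below $\log n$ (keeping the decomposition both polynomial-time and $\Ot(n)$-space), yet $\delta$ large enough (and $c_0$ correspondingly smaller) that the cover's additive-error exponent beats $c_0$, while also keeping $k=\omega(1)$ to kill the $n^2/k$ term but small enough that $\alpha/2k$ stays $(\log^* n)^{-\Omega(1)}$; verifying that a consistent choice of these constants exists, and that the subroutine indeed fits inside the budget of~\Cref{prop:streaming-mc}, is where essentially all the work lies.
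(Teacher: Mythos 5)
Your proposal is correct and follows essentially the same route as the paper: plug the regularity-based offline cover of~\Cref{thm:matching-cover-polytime} (implemented space-efficiently via~\Cref{prop:regularity-space-efficient}) into the merge-and-reduce scheme of~\Cref{prop:streaming-mc}, pick $k$ and the subroutine parameter $\gamma$ as suitably small and large powers of $\log^* n$ so that $\alpha k = o(1)$, and handle the sparse regime by exact computation. The one small deviation is that you handle the sparse case by noting that if $B_1$ never fills the output already contains all of $G$, whereas the paper redundantly stores the first $2n^2/k$ edges in a parallel succinct dictionary and branches on $\mu(G)\le n/k$; both observations are correct, the thresholds differ only by constants, and yours is arguably a bit cleaner since it avoids the extra storage.
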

\begin{proof}
    To apply~\Cref{prop:streaming-mc}, we need a subroutine $\MatchingCover$ for computing an $(\alpha/2k)$-matching cover (for parameters $\alpha$ and $k$ to be determined soon) on any $n$-vertex
    graph with $n^2/k$ edges.~\Cref{thm:matching-cover-polytime} provides such an algorithm with parameters 
    \[
        (\alpha/2k) = \frac{1}{(\logstar{n})^{\delta_1}} \quad \text{and} \quad \MC{n}{\alpha/2k} = \frac{n^2}{(\logstar{n})^{\delta_2}}, 
    \]
    for some absolute constants $\delta_1,\delta_2 \in (0,1)$. 
    Let $\alpha = 1/(\logstar{n})^{3\delta_1/4}$ and $k=\frac{1}{2} \cdot (\logstar{n})^{\delta_1/4}$, which satisfies the conditions above. 
    Moreover, by~\Cref{prop:regularity-space-efficient}, we can implement~\Cref{alg:matching-cover} of~\Cref{thm:matching-cover-polytime} in polynomial time and space $O((n^2/k) \cdot \log{k}) = n^{2}/(\logstar{n})^{\Omega(1)}$, given only adjacency matrix access to its input graph. This way, by~\Cref{prop:streaming-mc}, we obtain a single-pass streaming algorithm that with high probability computes an $\alpha$-matching cover in space $n^2/(\logstar{n})^{\Omega(1)}$. 
    
    The main algorithm in the theorem is as follows. We store the first $2n^2/k$ edges in the stream using succinct dynamic dictionary of~\Cref{prop:succinct-dict} in $n^2/(\logstar{n})^{\Omega(1)}$ space. In parallel, we also run the algorithm mentioned above to obtain an $\alpha$-matching cover of $G$. The space complexity and polynomial runtime of the algorithm is thus already established. 
    
    We now prove the correctness. If $\mu(G) \leq n/k$, then by~\Cref{fact:m-mu(G)}, we have stored all edges of the graph and thus at the end can simply return a maximum matching of the stored edges; to do so, we simply run Hopcroft-Karp algorithm~\cite{HopcroftK73} by providing it with the adjacency matrix of the stored edges using $\member$~query on the succinct dynamic dictionary (which only requires $O(n\log{n})$ additional space beside the input). Thus, in this case, we obtain an exact maximum matching of the input graph. 
    
    If $\mu(G) > n/k$, then we can pick $X$ and $Y$ in the definition of matching cover output by the algorithm of~\Cref{prop:streaming-mc} to be the endpoints of the maximum matching of $G$, and have, 
    \[
        \mu(H) \geq \mu(G) - \alpha \cdot n \geq (1-\alpha \cdot k) \cdot \mu(G) = (1-1/(\logstar{n})^{\delta_1/2}) \mu(G), 
    \]
    which is $(1-o(1)) \cdot \mu(G)$ as desired. This concludes the proof.
\end{proof}

\subsection{A Streaming Matching Algorithm via RS Graph Upper Bounds}\label{sec:stream-matching-rs}

We formalize~\Cref{res:stream2} as follows in this subsection ($\RS{n}{\beta}$ below was defined in~\Cref{def:rs}). 

\begin{theorem}[Formalization of~\Cref{res:stream2}]\label{thm:stream2}
There exists an absolute constant $\eta > 0$ such that the following is true. There is a randomized single-pass streaming algorithm
that for any $1 \leq k \leq n$ and $\eps \in (0,1/100)$, with high probability,  computes a $(1-\eps)$-approximate matching
of a graph presented in a stream with adversarial order of edge arrivals
in exponential time and space 
\[
		O\Paren{\frac{n^2}{k} \cdot \log^2{k} + \RS{n}{\eta \cdot \eps^2/k} \cdot \log{\paren{\frac{n^2}{\RS{n}{\eta \cdot \eps^2/k}}}} \cdot \log^2{k} \cdot \log{(k/\eps)}}.
\]
Moreover, the algorithm can return an \underline{additive} $\eps \cdot n$ approximation deterministically in exponential time and space
\[
		O\Paren{\frac{n^2}{k} \cdot \log{k} + \RS{n}{\eps/16k} \cdot \log{\paren{\frac{n^2}{\RS{n}{ \eps/16k}}}} \cdot \log{k} \cdot \log{(k/\eps)}}.
\]
\end{theorem}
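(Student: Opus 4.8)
The plan is to instantiate the generic streaming reduction of~\Cref{prop:streaming-mc} with a \emph{brute-force} implementation of the $\MatchingCover$ subroutine, and then to convert the resulting additive $\eps n$ guarantee into a multiplicative $(1-\eps)$ one via vertex sparsification. The only genuinely new combinatorial ingredient is an upper bound of the form $\MC{n}{\beta}=O\paren{\RS{n}{c_1\beta}\cdot\polylog(1/\beta)}$ for an absolute constant $c_1>0$ --- bounding the size of the \emph{optimal} matching cover of any $n$-vertex graph by the density of an RS graph whose induced matchings have size $\Theta(\beta n)$ --- and proving this (for general, not necessarily bipartite, graphs) is where I expect most of the work to lie. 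The subroutine itself is routine: given adjacency-matrix access to an $n$-vertex graph $G'$ with at most $m/k$ edges, enumerate all $2^{m/k}$ subgraphs of $G'$ (indexed by an $(m/k)$-bit counter) and, for each candidate $H$, test whether it is a $\beta$-matching cover of $G'$ by ranging over all $\le 3^n$ ordered disjoint pairs $(A,B)$ and comparing $\mu(H[A,B])$ with $\mu(G'[A,B])$, where a maximum matching of an induced subgraph is computed in $O(n\log n)$ bits by repeated augmenting-path search through the oracle; keep the smallest valid cover, stored in the succinct dictionary of~\Cref{prop:succinct-dict} at $O((m/k)\log(n^2k/m))$ bits since it has at most $m/k$ edges. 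This matches the interface of~\Cref{prop:streaming-mc}, is deterministic, runs in exponential time, and by the combinatorial bound its output has $O(\RS{n}{c_1\beta}\polylog(1/\beta))$ edges. To prove that bound I would first invoke~\Cref{lem:hitcover} to reduce to bounding the size of a minimal $\beta$-hitting set, and then adapt the argument of Goel--Kapralov--Khanna~\cite{GoelKK12} (which only treats bipartite $G$) to show that such a hitting set can be chosen to be a subgraph of an RS graph with induced matchings of size $\Theta(\beta n)$; handling non-bipartite $G$ here is the main obstacle.

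Given the subroutine, the deterministic additive-$\eps n$ algorithm is immediate. If $\mu(G)\le n/k$, then $G$ has at most $2n^2/k$ edges by~\Cref{fact:m-mu(G)}, so we store them all in a succinct dictionary in $O((n^2/k)\log k)$ bits and return an exact maximum matching at the end of the stream; this branch runs in parallel with the main one, and we use its output whenever the edge count never crosses the threshold. Otherwise we run~\Cref{prop:streaming-mc} with $m:=n^2$, the given $k$, and $\alpha:=\eps$, using the brute-force $\MatchingCover$; the returned subgraph $H$ is an $\eps$-matching cover of $G$, so taking $A,B$ to be the two sides of a maximum matching of $G$ yields $\mu(H)\ge\mu(G)-\eps n$, and we output a maximum matching of $H$ (computed at the end in exponential time and $O(n\log n)$ space). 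Substituting $\MC{n}{\eps/2k}=O\paren{\RS{n}{\eps/16k}\cdot\log(k/\eps)}$ --- using the combinatorial bound, the fact that $\RS{n}{\cdot}$ is decreasing, and $\log(1/\beta)=\Theta(\log(k/\eps))$ for $\beta=\Theta(\eps/k)$ --- into the space bound of~\Cref{prop:streaming-mc} reproduces the claimed additive bound.

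For the randomized multiplicative version we compose the additive algorithm with the vertex sparsification of~\cite{AssadiKLY16,ChitnisCEHMMV16} (as used in~\cite{AssadiKL16ec}): a random hash $h\colon V\to[N]$ into $N=\Theta(\mu/\eps)$ groups maps the stream to a \emph{multigraph} $G_h$ on $N$ vertices with $\mu(G_h)\ge(1-\eps/4)\mu(G)$ w.h.p., and any matching of $G_h$ lifts to one of $G$ of the same size since vertices mapped to distinct groups are distinct. As $\mu(G)\le n/k$ is already handled exactly, it suffices to run in parallel, for each of the $O(\log k)$ guesses $\mu_0\in\{n/k,2n/k,4n/k,\dots,n\}$, one fresh copy of the (multigraph-capable, deterministic) additive algorithm on $G_h$ with $N_0:=\min\paren{\Theta(\mu_0/\eps),n}$ groups, sparsity parameter $k$, and additive parameter $\Theta(\eps^2)$, and to output the best matching obtained; for the guess $\mu_0\in(\mu(G)/2,\mu(G)]$ this returns a matching of $G$ of size at least $(1-\eps/4)\mu(G)-\Theta(\eps^2)\cdot N_0\ge(1-\eps)\mu(G)$. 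Each copy fits within its share of the budget: the first term is $O\paren{(N_0^2/k)\log k}\le O\paren{(n^2/k)\log k}$ because $N_0\le n$, and for the second we use $\RS{N_0}{\Theta(\eps^2/k)}\le O\paren{\RS{n}{\eta\eps^2/k}}$ for a small enough absolute constant $\eta$ --- which follows from the near-monotonicity of RS density in the number of vertices, obtained by padding with disjoint copies in the spirit of~\Cref{clm:rs-move}; multiplying through by the $O(\log k)$ guesses and the polylogarithmic overheads of~\Cref{prop:streaming-mc} and of the combinatorial bound yields the stated bound. Aside from the general-graph RS bound flagged above, the remaining obstacle is precisely this last bookkeeping: keeping $N_0$, the $\Theta(\eps^2)$ additive error, and the nested RS parameters aligned so that everything collapses to $\RS{n}{\eta\eps^2/k}$ on the original vertex set.
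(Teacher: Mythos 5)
Your overall architecture is exactly the paper's: brute-force the $\MatchingCover$ oracle for~\Cref{prop:streaming-mc} by enumerating subgraphs and testing the cover condition, get the deterministic additive bound from that directly (with the $\mu(G)\leq n/k$ branch handled by storing all edges), and upgrade to a multiplicative $(1-\eps)$-approximation by running $O(\log k)$ parallel copies of the vertex-sparsified stream of~\Cref{prop:vertex-sparsification} with geometric guesses for $\mu(G)$ and additive parameter $\Theta(\eps^2)$. This matches~\Cref{alg:rs-final} step for step, including the succinct-dictionary bookkeeping.

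The one place where you would likely get stuck, and where the paper is considerably simpler than what you are imagining, is the combinatorial bound $\MC{n}{\beta}\leq\RS{n}{\Theta(\beta)}\cdot O(\log(1/\beta))$ for \emph{general} graphs. You propose to invoke~\Cref{lem:hitcover} and then ``adapt the argument of Goel--Kapralov--Khanna'' to the non-bipartite setting, flagging non-bipartiteness as the main obstacle. The paper (\Cref{prop:matching-cover-rs}) never re-opens that proof. Instead it forms the \emph{bipartite double cover} $G'$ of $G$ on vertex classes $V_1,V_2$, putting $(u_1,v_2)$ and $(v_1,u_2)$ in $G'$ for every edge $(u,v)$ of $G$; applies GKK's Theorem 9.2 to $G'$ as a black box to obtain a bipartite matching cover $H'$; and projects back by including $(u,v)$ in $H$ whenever $(u_1,v_2)$ or $(v_1,u_2)$ lies in $H'$. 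The verification is short: for disjoint $A,B\subseteq V$, any matching of $G[A,B]$ lifts to $G'[A_1,B_2]$, so $\mu(G'[A_1,B_2])\geq\mu(G[A,B])$, and conversely a matching of $H'[A_1,B_2]$ has all distinct endpoints (since $A,B$ are disjoint in $V$) and hence maps injectively to a matching of $H[A,B]$. Combined with~\Cref{clm:rs-move} to pass from $\RS{2n}{\cdot}$ back to $\RS{n}{\cdot}$, this gives the bound you need with no re-derivation of the GKK construction. I would replace your flagged ``main obstacle'' with this reduction; the rest of your plan goes through as you describe it.
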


Roughly speaking, by ignoring lower order terms and in asymptotic notation,~\Cref{thm:stream2} 
gives a streaming algorithm for $(1-o(1))$-approximation of matchings in a single pass with adversarial order of edge arrivals using essentially $(n^2/k + \RS{n}{o(1/k)})$ space for any integer $k \geq 1$.

Before proving~\Cref{thm:stream2}, let us present a corollary of this theorem with concrete bounds on the space by using Fox's triangle removal lemma (\Cref{prop:triangle-removal}) to bound the RS-graph density terms in~\Cref{thm:stream2} (this appears to be the only known method for bounding density of RS graphs with $o(n)$-size induced matchings; moreover, we are not aware of any reference that bounds the density of the type of RS graphs we need, thus we present a proof of that here also for completeness). 

\begin{corollary}\label{cor:stream2}
    There is a deterministic single-pass streaming algorithm
that computes a $(1-o(1))$-approximate matching
of a graph presented in a stream with adversarial order of edge arrivals 
in $n^2/2^{\Omega(\log^* n)}$ space and exponential time. 
\end{corollary}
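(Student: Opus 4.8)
The plan is to instantiate the deterministic, additive‑error branch of \Cref{thm:stream2} with a suitable choice of $k$ and $\eps$, to control the $\RS{n}{\cdot}$ term appearing there via Fox's triangle‑removal lemma (\Cref{prop:triangle-removal}), and finally to upgrade the additive‑$\eps n$ guarantee to a $(1-o(1))$‑multiplicative one in the same way as in the proof of \Cref{thm:stream1}. For the first ingredient I would prove that for a small enough absolute constant $c_0>0$ (and a corresponding $c_0'>0$),
\[
\RS{n}{\beta}\;\leq\; \frac{n^2}{2^{c_0'\logstar n}}\qquad\text{whenever }\ 2^{-c_0\logstar n}\leq \beta<\tfrac12 .
\]
This is the standard RS‑to‑triangles reduction made quantitative. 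Given an $n$‑vertex RS graph $G$ whose edges partition into $t$ induced matchings $M_1,\dots,M_t$ of size $r=\beta n$ (so $m:=|E(G)|=rt$), fix a linear order on $V(G)$ and build a tripartite graph $\Gamma$ with parts $A=B=V(G)$ and $C=[t]$: for each $\{u,v\}\in M_\ell$ with $u<v$, add the edges $\{u_A,v_B\}$, $\{u_A,\ell_C\}$, $\{v_B,\ell_C\}$. Because every $M_\ell$ is an \emph{induced} matching, a short case check shows the only triangles of $\Gamma$ are these $m$ ``canonical'' ones, and they are pairwise edge‑disjoint (distinct $A$--$B$ edges); hence $\Gamma$ has exactly $m$ triangles and needs exactly $m$ edge deletions to become triangle‑free. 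Now $\Gamma$ has $N:=2n+t$ vertices with $t\leq n/(2\beta)$, and $m\leq N^2$; applying the contrapositive of \Cref{prop:triangle-removal} to $\Gamma$ with any $\gamma$ satisfying $\delta(\gamma)\geq 1/N$ (which rules out the alternative $m>\delta(\gamma)N^3$) gives $m\leq \gamma N^2$. Since $\delta(\gamma)^{-1}=2\upuparrows(b\log(1/\gamma))$ and $\logstar N=\logstar n+O(1)$ because $N=n\cdot 2^{O(\logstar n)}$, one may take $\gamma=2^{-\Theta(\logstar n)}$; the floor $\beta\geq 2^{-c_0\logstar n}$ (with $c_0$ small relative to $1/b$) is exactly what keeps $N^2\leq (n/\beta)^2=n^2\cdot 2^{O(\logstar n)}$ from swamping the $2^{-\Theta(\logstar n)}$ gain, yielding the displayed bound.

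\textbf{Instantiating \Cref{thm:stream2}.} Next I would set $k:=2^{a\logstar n}$ and $\eps:=2^{-a'\logstar n}$ for small enough absolute constants $a,a'>0$, chosen so that $\beta_0:=\eps/16k=2^{-\Theta(\logstar n)}$ still obeys $\beta_0\geq 2^{-c_0\logstar n}$. Plugging into the deterministic additive‑$\eps n$ bound of \Cref{thm:stream2}: the first term is $\tfrac{n^2}{k}\log k=n^2\cdot\Theta(\logstar n)/2^{a\logstar n}=n^2/2^{\Omega(\logstar n)}$; for the second term, $\RS{n}{\beta_0}\leq n^2/2^{c_0'\logstar n}$ and, since $x\mapsto x\log(n^2/x)$ is increasing below $n^2/2$, $\RS{n}{\beta_0}\cdot\log(n^2/\RS{n}{\beta_0})\leq (n^2/2^{c_0'\logstar n})\cdot c_0'\logstar n$, while $\log k$ and $\log(k/\eps)$ are each $O(\logstar n)$; hence the second term is also $n^2/2^{\Omega(\logstar n)}$. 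Thus this branch of \Cref{thm:stream2} already gives a deterministic, exponential‑time, single‑pass algorithm outputting a matching of size at least $\mu(G)-\eps n$ in $n^2/2^{\Omega(\logstar n)}$ space.

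\textbf{From additive to multiplicative error.} To turn the additive‑$\eps n$ guarantee into a $(1-o(1))$‑multiplicative one, I would, in parallel, store the first $2n\tau$ arriving edges in the succinct dictionary of \Cref{prop:succinct-dict}, for a threshold $\tau:=n/2^{a''\logstar n}$ with $0<a''<a'$; this uses $O(n\tau\log(n^2/(n\tau)))=n^2/2^{\Omega(\logstar n)}$ space. If the stream ends without this buffer overflowing, the whole graph has been stored and we output an exact maximum matching; otherwise more than $2n\tau$ edges were seen, so $\mu(G)>\tau$ by \Cref{fact:m-mu(G)}, and the additive‑$\eps n$ output from the previous paragraph has size at least $\mu(G)-\eps n\geq(1-\eps n/\tau)\mu(G)=(1-2^{-(a'-a'')\logstar n})\mu(G)=(1-o(1))\mu(G)$. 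Every component is deterministic and runs in at most exponential time, which gives \Cref{cor:stream2}.

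\textbf{Main obstacle.} The only genuinely delicate point is the constant‑tracking in the RS‑density bound: since $\beta_0=o(1)$ the RS graph may have $t=\omega(n)$ induced matchings, so $N=2n+t$ can be as large as $n\cdot 2^{\Theta(\logstar n)}$ (rather than $O(n)$ as in the classical constant‑$\beta$ case), and the bound $m\leq\gamma N^2$ then carries a stray $2^{\Theta(\logstar n)}$ factor; making sure this is beaten by the $2^{-\Theta(\logstar n)}$ coming from the tower in $\delta(\cdot)$ forces the free constants ($c_0$ and $c_0'$ for the RS bound, $a$ in $k$, $a'$ in $\eps$, $a''$ in $\tau$) to be chosen in a mutually compatible order, ultimately against Fox's absolute constant $b$. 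The remainder — checking that $\Gamma$ has only canonical triangles for general (non‑bipartite) $G$ once an edge orientation is fixed, and the bookkeeping for the additive‑to‑multiplicative reduction — is routine.
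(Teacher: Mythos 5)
Your proposal is correct and follows the same high-level architecture as the paper's proof: choose $k$ and $\eps$ as small powers of $2^{\logstar n}$, bound the $\RS{n}{\cdot}$ term in the deterministic additive branch of \Cref{thm:stream2} via Fox's triangle-removal lemma, and upgrade the additive $\eps n$ guarantee to a $(1-o(1))$-multiplicative one by buffering the first $O(n\tau)$ edges to handle the sparse case exactly (via \Cref{fact:m-mu(G)}). Where you differ is in the RS-density bound (the paper's \Cref{lem:RS-density}). The paper passes to a bipartite RS graph, truncates to at most $n$ matchings of size $r'$, and attaches $n$ apex vertices to keep the auxiliary graph at $2n$ vertices; you instead build a tripartite $\Gamma$ on parts $A=B=V(G)$ and $C=[t]$ with no truncation, so $N = 2n+t$ may be as large as $n \cdot 2^{\Theta(\logstar n)}$, and you explicitly track that the floor $\beta \geq 2^{-c_0\logstar n}$ with $c_0$ small relative to $1/b$ is what keeps $\gamma N^2$ below $n^2/2^{\Omega(\logstar n)}$. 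This bookkeeping is not cosmetic. If one follows the constants in the paper's proof of \Cref{lem:RS-density} with $r' = 8\cdot 2^{(\logstar n)/b}$, the quantity $r'/(8n)$ is $2^{(\logstar n)/b}/n$ rather than the asserted $2^{-(\logstar n)/b}$, and the resulting $\delta(\gamma)$ is far smaller than $1/n$, so the inequality $r't > \delta(2n)^3$ no longer produces a contradiction; conversely, choosing $r'$ so that $\gamma = 2^{-(\logstar n)/b}$ holds would require $r' = \Theta(n/2^{(\logstar n)/b}) > r$. Your approach side-steps this by only proving the weaker (and correct) statement $\RS{n}{\beta}\leq n^2/2^{\Omega(\logstar n)}$ for $\beta \geq 2^{-c_0\logstar n}$ with $c_0$ a sufficiently small absolute constant depending on $b$ — which, as you note, suffices for \Cref{cor:stream2} once the free constants $a, a', a''$ are chosen mutually compatibly. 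Your verification that the only triangles of $\Gamma$ are the $m$ canonical, edge-disjoint ones (using the induced property and a fixed orientation to place each $A$-vertex as the smaller endpoint) is also correct.
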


We prove~\Cref{cor:stream2} in~\Cref{sec:cor-stream2} after proving~\Cref{thm:stream2}. To continue, we need to recall some additional tools from prior work.  specific specific to our algorithms in this subsection. 

\subsubsection{Additional Tools from Prior Work}

\paragraph{Matching covers via RS graphs.} \citet*{GoelKK12} showed that matching covers and RS graphs are intimately connected: on bipartite graphs, the density of best construction for either can be bounded by the density of other one for closely related parameters. We need this result for general graphs as well which follows from the result of~\cite{GoelKK12} using a simple argument\footnote{We can in fact prove this result with better bounds nearly matching those of~\cite{GoelKK12} using a white-box application of the techniques in~\cite{GoelKK12}; however, since the actual constants do not matter for our application in this paper, we opted for the simpler and more direct proof that uses the result of~\cite{GoelKK12} in a black-box way.}.

\begin{proposition}[an extension of~{\cite[Theorem 9.2]{GoelKK12}} to general graphs]\label{prop:matching-cover-rs}
	For any $\alpha \in (0,1)$ and $n \geq 1$, there exists an $\alpha$-matching cover of any $n$-vertex graph with number of edges bounded by 
	\[
		\MC{n}{\alpha} \leq \RS{n}{\alpha/8} \cdot O(\log{(1/\alpha)}).
	\]
\end{proposition}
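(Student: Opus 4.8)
The plan is to reduce the general-graph statement to the bipartite case, which is~\cite[Theorem~9.2]{GoelKK12}, via the \emph{bipartite double cover}. Given an $n$-vertex graph $G=(V,E)$, form the bipartite graph $G'$ on vertex set $V_L\cup V_R$, where $V_L$ and $V_R$ are two disjoint copies of $V$, by adding for every edge $\{u,v\}\in E$ the two edges $(u_L,v_R)$ and $(v_L,u_R)$; thus $G'$ is bipartite with $n$ vertices on each side. Applying the bipartite result of~\citet*{GoelKK12} to $G'$ produces a subgraph $H'\subseteq G'$ that is an $\alpha$-matching cover of $G'$ — after rescaling $\alpha$ by an absolute constant if the normalization of the additive error used there differs from the $\alpha n$ used here — whose edge count~\cite{GoelKK12} bounds by an appropriate RS-graph density times $O(\log(1/\alpha))$. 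We then pull $H'$ back to $G$: let $H$ be the subgraph of $G$ with $\{u,v\}\in E(H)$ iff $(u_L,v_R)\in E(H')$ (equivalently $(v_L,u_R)\in E(H')$). Since $H'\subseteq G'$ we have $H\subseteq G$, and since every edge of $H$ comes from an edge of $H'$ we have $|E(H)|\le|E(H')|$.

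Next I would verify that $H$ is an $\alpha$-matching cover of $G$ in the sense of~\Cref{def:matching-cover}. Fix disjoint $A,B\subseteq V$ and set $A_L=\{a_L:a\in A\}\subseteq V_L$ and $B_R=\{b_R:b\in B\}\subseteq V_R$. The map $a\mapsto a_L,\ b\mapsto b_R$ is an isomorphism between the bipartite graphs $G[A,B]$ and $G'[A_L,B_R]$, so $\mu(G'[A_L,B_R])=\mu(G[A,B])$, and the matching-cover property of $H'$ then gives $\mu(H'[A_L,B_R])\ge\mu(G[A,B])-\alpha n$. Now take a maximum matching $N'$ of $H'[A_L,B_R]$ and replace each of its edges $(a_L,b_R)$ by $\{a,b\}\in E(H)$. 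These images all lie in $H[A,B]$, and they are pairwise vertex-disjoint: two distinct edges of $N'$ use distinct left endpoints and distinct right endpoints, and the disjointness of $A$ and $B$ rules out an $A$-endpoint of one image coinciding with a $B$-endpoint of another. Hence $\mu(H[A,B])\ge|N'|=\mu(H'[A_L,B_R])\ge\mu(G[A,B])-\alpha n$, which is exactly the matching-cover inequality.

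It remains to convert the edge bound on $|E(H')|$ coming from~\cite{GoelKK12} into the $\RS{n}{\cdot}$ notation of~\Cref{def:rs}. Since $G'$ has $2n$ vertices, the RS-type quantity bounding $|E(H')|$ is, up to constants, $\RS{2n}{\Theta(\alpha)}$; feeding this through~\Cref{clm:rs-move} — which passes from $\RS{2n}{\cdot}$ back to $\RS{n}{\cdot}$ at the cost of a factor $3$ in the induced-matching-size parameter and an $O(1)$ factor in the edge count — together with the monotonicity of $\RS{n}{\cdot}$ in its matching-size parameter, yields $|E(H)|\le|E(H')|\le\RS{n}{\alpha/8}\cdot O(\log(1/\alpha))$, which is the claimed bound on $\MC{n}{\alpha}$.

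The main obstacle is purely bookkeeping: one has to check that the exact constants and the additive-error normalization in~\cite[Theorem~9.2]{GoelKK12} are compatible with the $\alpha n$ slack needed here, and that chaining the constant losses through the double cover and~\Cref{clm:rs-move} really lands at the matching-size parameter $\alpha/8$ rather than some other $\Theta(\alpha)$ value. As the footnote to the proposition indicates, re-running the GoelKK12 argument directly for general graphs (a ``white-box'' proof) gives sharper constants; since only the shape $\RS{n}{\Theta(\alpha)}\cdot O(\log(1/\alpha))$ matters downstream, the black-box reduction above is the more economical route.
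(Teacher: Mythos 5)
Your proposal is correct and follows essentially the same route as the paper: form the bipartite double cover, apply the bipartite matching-cover result of~\cite{GoelKK12} with $\alpha' = \alpha/2$, pull the cover back to the original graph, verify the matching-cover inequality via the observation that images of a matching in $H'[A_L,B_R]$ remain vertex-disjoint because $A$ and $B$ are disjoint, and finally convert $\RS{2n}{\cdot}$ to $\RS{n}{\cdot}$ using~\Cref{clm:rs-move}. The only caveat worth flagging is that the parenthetical ``equivalently $(v_L,u_R)\in E(H')$'' in your definition of $H$ is not literally an equivalence for an arbitrary subgraph $H'$ of the double cover — the correct and well-defined pullback (as in the paper) takes $\{u,v\}\in E(H)$ iff \emph{at least one} of $(u_L,v_R)$, $(v_L,u_R)$ lies in $E(H')$ — but this does not affect the validity of your argument, which only ever uses the forward direction.
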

\begin{proof}

The result of~\cite{GoelKK12} is formally as follows (to match the definitions in our paper, our formulation is slightly different from the statements in~\cite{GoelKK12} but they are equivalent):  
\begin{itemize}[leftmargin=10pt]
    \item[]\cite[Theorem 9.2]{GoelKK12}: For any bipartite graph $G'=(L',R',E')$ with $n$ vertices on \emph{each} side and $\alpha' \in (0,1)$, there exists a subgraph $H'$ with $\RS{2n}{3\alpha'/4} \cdot O(\log(1/\alpha'))$ edges such that for any disjoint subsets of vertices $X \subseteq L'$ and $Y \subseteq R'$, 
    \[
        \mu(H'[X,Y]) \geq \mu(G'[X,Y]) - \alpha' \cdot (2n).
    \]
\end{itemize}
    
    We now use this to prove the bound for general graphs as well. Let $G=(V,E)$ be any (not necessarily bipartite) graph. Consider the bipartite double cover of $G$ obtained by copying vertices of $G$ twice into sets $V_1$ and $V_2$ and connecting any vertex $u_1 \in V_1$ to $v_2 \in V_2$ iff $(u,v)$ is an edge in $G$. Let $G'$ denote this graph and so $G'$ is a bipartite graph with $n$ vertices on each side. 
    
    Compute an $\alpha'$-matching cover $H'$ of this bipartite graph
    using Theorem 9.2 of~\cite{GoelKK12} for parameter $\alpha' = \alpha/2$ (for $\alpha$ given to us in the proposition statement). Thus, $H'$ contains 
    $\RS{2n}{3\alpha'/4} \cdot O(\log(1/\alpha'))$ edges. Create a subgraph $H$ (not necessarily bipartite) on the same vertices as $G$ by adding the edges $(u,v)$ to $H$ iff either $(u_1,v_2)$ or $(v_1,u_2)$ was an edge in $H'$. This way, the number of edges in $H$ will be at most 
    \[
        \RS{2n}{3\alpha'/4} \cdot O(\log{(1/\alpha')}) \leq \RS{n}{\alpha'/4} \cdot O(\log{(1/\alpha')}),
    \]
    where the inequality is by~\Cref{clm:rs-move} that relates density of RS graphs with similar parameters. 
    
    We now argue that $H$ is an $\alpha$-matching cover of $G$. Fix any disjoint subsets of vertices $X,Y$ in $G$. Consider $X_1 \subseteq V_1$ and $Y_2 \subseteq V_2$ corresponding to these two subsets over vertices of $G'$ (and $H'$):
    \begin{align*}
    \mu(H'[X_1,Y_2]) &\geq \mu(G'[X_1,Y_2]) - \alpha' \cdot (2n) \tag{by~\Cref{def:matching-cover} as $H'$ is an $\alpha$-matching cover of $G'$} \\
    &\geq \mu(G[X,Y]) - \alpha' \cdot (2n),
    \end{align*}
    as by the construction of $G'$ any edge $(u,v) \in G[X,Y]$ also has a copy $(u_1,v_2) \in G'[X_1,Y_2]$ and thus $\mu(G'[X_1,Y_2]) \geq \mu(G[X,Y]).$
    Moreover, since $X$ and $Y$ are disjoint, the endpoints of the maximum matching in $H'[X_1,Y_2]$ are disjoint from each other; thus, they are mapped to \emph{unique} edges in $H$ also between $X$ and $Y$, implying that
    \[
        \mu(H[X,Y]) = \mu(H'[X_1,Y_2]) \geq \mu(G[X,Y]) - 2\alpha' \cdot n.
    \]
    Noting that $\alpha' = \alpha/2$ in the above equations, concludes the proof. 
\end{proof}

\paragraph{Vertex-sparsification for matchings.} We also use the reductions of~\citet*{AssadiKLY16} and~\citet*{ChitnisCEHMMV16} for reducing the number of vertices while preserving maximum matching size approximately.  The original versions of the reductions in these work only achieved constant probability of success and boost this to a high probability bound by applying it $\Theta(\log{n})$ times in parallel.  In our setting, we cannot afford this direct success amplification. 
Thus, we instead use the following variant proven by~\citet*{AssadiKL16ec} that achieves a high success probability directly.

\begin{proposition}[{\cite[Lemma 3.8]{AssadiKL16ec}}; see also~\cite{AssadiKLY16,ChitnisCEHMMV16}]\label{prop:vertex-sparsification}
	For any graph $G=(V,E)$, integer $\topt \geq 1$, and parameter $\theta \in (0,1)$, uniformly at random pick a function $h: V \rightarrow [8 \cdot \topt/\theta]$. Consider this \underline{multi-graph} $H=(V_H,E_H)$ obtained from $G$ and $h$: 
	\begin{itemize}
		\item $V_H$ is the range of the function of $h$, thus $\card{V_H} = 8 \cdot \topt/\theta$.  
		\item For any edge $(u,v) \in G$, there is an edge $(h(u),h(v)) \in E_H$. 
	\end{itemize}
	If $\mu(G) \leq \opt$, then, 
	\[
		\Pr_{h}\Paren{\mu(H) < (1-\theta) \cdot \mu(G)} \leq \exp\paren{-\frac{\mu(G)}{4}}. 
	\]
\end{proposition}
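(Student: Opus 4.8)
The plan is to prove the bound by fixing one maximum matching of $G$ and showing that most of its edges survive the random contraction of vertices induced by $h$, where ``surviving'' means the images of the endpoints remain distinct and do not collide with the images of other matched vertices. First I would fix an arbitrary maximum matching $M=\{(a_1,b_1),\dots,(a_\mu,b_\mu)\}$ of $G$, writing $\mu:=\mu(G)\le\topt$, so that the $2\mu$ endpoints $V(M)$ are distinct vertices. The only randomness that matters for lower bounding $\mu(H)$ via $M$ is $h$ restricted to $V(M)$, and these $2\mu$ values are i.i.d.\ uniform over the range $[8\topt/\theta]$, which has size $B\ge 8\mu/\theta$. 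Thus it suffices to produce a large matching among the image edges $(h(a_i),h(b_i))$ in the bucket-multigraph.

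The deterministic core of the argument is the following structural claim: letting $D:=2\mu-|h(V(M))|$ be the number of \emph{collisions} incurred when the $2\mu$ vertices of $V(M)$ are placed into the $B$ buckets (equivalently, the number of vertices of $V(M)$ landing in a bucket already occupied by another vertex of $V(M)$), we have $\mu(H)\ge\mu-D$. To see this, in every occupied bucket designate one vertex of $V(M)$ as its \emph{representative}; there are $|h(V(M))|=2\mu-D$ representatives, hence at most $D$ non-representatives. Call an edge $(a_i,b_i)$ of $M$ \emph{clean} if both endpoints are representatives; since each non-representative lies on exactly one edge of $M$, at least $\mu-D$ edges are clean. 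A clean edge is never a self-loop (a bucket has a unique representative), and the images of distinct clean edges use distinct bucket-vertices, so the images of the clean edges form a matching in $H$ of size at least $\mu-D$.

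It then remains to show $\Pr[D>\theta\mu]\le\exp(-\mu/4)$. Here I would expose the values $h(a_1),h(b_1),h(a_2),h(b_2),\dots$ one vertex at a time and write $D=\sum_{i=1}^{2\mu}R_i$, where $R_i$ indicates that the $i$-th exposed vertex lands in an already-occupied bucket; conditioned on the first $i-1$ exposures, $\Pr[R_i=1]=(\text{number of occupied buckets})/B\le (i-1)/B\le 2\mu/B\le\theta/4$, using $B\ge 8\mu/\theta$. Hence $D$ is stochastically dominated by a $\mathrm{Bin}(2\mu,\theta/4)$ variable, so $\E[D]\le\theta\mu/2$, and a Chernoff bound on this sum of conditionally bounded Bernoulli increments (equivalently an exposure-martingale/Azuma argument) gives exponential concentration of $D$ below $\theta\mu$; on that event $\mu(H)\ge\mu-D\ge(1-\theta)\mu$. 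The step I expect to be the main obstacle is precisely this last concentration estimate: the ``natural'' collision rate is already of order $\theta$, so we are only asking for a constant-factor deviation above the mean, and extracting the exact constant in the stated $\exp(-\mu(G)/4)$ bound — rather than a weaker $\exp(-\Omega(\theta\,\mu(G)))$-type bound — requires care about which Chernoff form is invoked and, if needed, exploiting that the collision process is self-limiting (once $D$ grows, fewer unoccupied buckets remain, so the per-step collision probability drops). Everything else — the choice of $M$, the representative/clean-edge bookkeeping, and the reduction to a balls-into-bins collision count — is routine.
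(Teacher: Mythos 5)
First, be aware that the paper contains no proof of this proposition: it is imported as a black box from \cite{AssadiKL16ec} (their Lemma~3.8), so there is no in-paper argument to compare yours against and your attempt must stand on its own. The deterministic core of your write-up is correct: fixing a maximum matching $M$, only the $2\mu$ i.i.d.\ hash values of $V(M)$ matter, and the representative/clean-edge bookkeeping does give $\mu(H)\ge \mu(G)-D$ where $D$ is the collision count. This is indeed the standard reduction to balls-into-bins.

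The step you flagged, however, is not a matter of picking the right form of the Chernoff bound --- it cannot be closed, because the tail bound as transcribed here, $\exp(-\mu(G)/4)$ with no dependence on $\theta$, is simply not true for small $\theta$. Take $G$ to be exactly a perfect matching on $2\mu$ vertices, $\topt=\mu$, and any $\theta<1/\mu$. Then $(1-\theta)\mu>\mu-1$, so the failure event $\mu(H)<(1-\theta)\mu(G)$ occurs as soon as \emph{any} two of the $2\mu$ hashed endpoints collide (a collision either creates a self-loop or forces two image edges to share a vertex, and $H$ has no other edges). With $B=8\mu/\theta$ buckets this happens with probability on the order of $\binom{2\mu}{2}/B\approx \theta\mu/4$, which dwarfs $e^{-\mu/4}$ whenever $\theta\gg e^{-\mu/4}$; e.g.\ $\mu=100$, $\theta=0.005$ gives failure probability about $0.12$ versus a claimed $e^{-25}$. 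So the bound your exposure argument actually delivers, $\Pr[D\ge\theta\mu]\le \exp(-\Omega(\theta\,\mu(G)))$ (via $D$ being dominated by $\mathrm{Bin}(2\mu,\theta/4)$ and a factor-two deviation above the mean), is of the correct order, and the exponent genuinely must carry the factor $\theta$; no amount of exploiting the self-limiting nature of the collision process recovers a $\theta$-free exponent. The discrepancy therefore lies in the statement as quoted (the cited lemma presumably carries a $\theta$ in the exponent or a restriction on the parameter range), not in your approach. For what it is worth, the weaker bound $\exp(-\Omega(\theta\,\mu(G)))$ still supports the paper's use of the proposition in \Cref{clm:rs-final-approx}, since there $\theta=\eps/2$ and $\eps\cdot\mu(G)=\Omega(\log n)$ in the regime where the randomized guarantee is invoked.
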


\subsubsection{Proof of~\Cref{thm:stream2}}

We now use these prior tools combined with our~\Cref{prop:streaming-mc} to prove~\Cref{thm:stream2}. Recall that~\Cref{prop:streaming-mc} returns an $\alpha$-matching cover which can only guarantee an additive approximation not a multiplicative one. Thus, we first use the vertex-sparsification of~\Cref{prop:vertex-sparsification} to reduce the number of vertices in $G$ to $O(\mu(G))$---by guessing $\mu(G)$ in geometric values---so that an additive approximation also becomes a multiplicative one. We then use~\Cref{prop:matching-cover-rs} to compute the matching covers in~\Cref{alg:rs-matching} of~\Cref{prop:streaming-mc}.  

\begin{Algorithm}\label{alg:rs-final}
	\begin{center}
		The randomized algorithm in~\Cref{thm:stream2}. 
	\end{center}
	\medskip
	
	\textbf{Input:} A graph $G=(V,E)$ in the stream with $n$ edges and at most ${{n}\choose{2}}$ edges. We are also given integer $k \geq 1$ and approximation parameter $\eps \in (0,1)$ 
	as in~\Cref{thm:stream2}. 
	
	\smallskip
	
	\textbf{Output:} A $(1-\eps)$-approximate maximum matching of $G$. 
	
	\smallskip
	
	\begin{enumerate}[label=$(\roman*)$]
		\item For $i=1$ to $t:=\log{k}$ iterations in parallel: 
		\begin{enumerate}
			\item Let $\topt_i := n/2^{i+1}$ and pick a hash function $h_i : V \rightarrow [32 \cdot \topt_i/\eps]$. 
			\item Consider the multi-graph $G_i$ obtained from $G$ and $h_i$ 
			using~\Cref{prop:vertex-sparsification}; each edge of $G$ arriving in the stream can be mapped to an edge of $G_i$ using $h_i$. 
			\item Run~\Cref{alg:rs-matching} on $G_i$ with parameters $k$ and $\alpha=\eps^2/64$ and $m={{n}\choose{2}}$ to obtain an $\alpha$-matching cover $H_i$. We use the matching cover construction of~\Cref{prop:matching-cover-rs} as the subroutine $\MatchingCover$ (as specified in~\Cref{clm:rs-final-space} below). 
		\end{enumerate}
		\item Store the first $n^2/k$ edges of the stream using succinct dynamic dictionary of~\Cref{prop:succinct-dict} as the subgraph $H_0$. 
		
		\item Return a maximum matching in $H_0 \cup H_1 \cup \ldots \cup H_t$ (specified in~\Cref{clm:rs-final-space} below). 
	\end{enumerate}
\end{Algorithm}
\noindent
We bound the space and approximation of \Cref{alg:rs-final} in the following two claims, respectively. 

\begin{claim}\label{clm:rs-final-space}
	\Cref{alg:rs-final} (deterministically) requires space of 
	\[
		O(\frac{n^2}{k} \cdot \log^2{k} + \RS{n}{\eps^2/1024k} \cdot \log\paren{\frac{n^2}{\RS{n}{ \eps^2/1024k}}} \cdot \log^2{k} \cdot \log{(k/\eps)}).
	\]
\end{claim}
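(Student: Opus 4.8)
The plan is to bound the space of \Cref{alg:rs-final} as the sum of three contributions and then add them: the succinct dictionary storing $H_0$; the $t=\log k$ parallel invocations of \Cref{alg:rs-matching} on the sparsified multi-graphs $G_1,\dots,G_t$; and the final exact maximum matching computation on $H_0\cup H_1\cup\dots\cup H_t$. The first is immediate: $H_0$ stores the first $n^2/k$ edges of the stream over the universe of all $\binom{n}{2}$ vertex pairs, so by \Cref{prop:succinct-dict} it costs $(1+o(1))\log\binom{\binom{n}{2}}{n^2/k}=O\paren{\tfrac{n^2}{k}\log k}$ bits.

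For a single parallel run, fix $i\in[t]$ and let $n_i:=32\topt_i/\eps$ be the vertex count of $G_i$; since each stream edge maps via $h_i$ to exactly one edge of $G_i$, the multi-graph $G_i$ has at most $m:=\binom{n}{2}$ edges, and we invoke \Cref{alg:rs-matching} on $G_i$ with $\alpha:=\eps^2/64$, this $m$, and the given $k$, so the per-cover parameter is $\alpha':=\alpha/2k=\eps^2/128k$. The \MatchingCover subroutine demanded by \Cref{prop:streaming-mc} is instantiated by brute force: given the simple graph formed by the (at most $m/k$) edges currently in a buffer, enumerate all $2^{m/k}$ of its subgraphs and retain the smallest one that is a valid $\alpha'$-matching cover of the buffer, a property checkable in $\poly(n_i)$ working space by iterating over all vertex bipartitions. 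By \Cref{prop:matching-cover-rs} such a cover exists with at most $\RS{n_i}{\alpha'/8}\cdot O(\log(1/\alpha'))$ edges, and since the retained cover is in any case a subset of the $\le m/k$ buffered edges, the subroutine fits within the $O\paren{\tfrac mk\log\paren{\tfrac{n_i^2k}{m}}}$ space budget \Cref{prop:streaming-mc} requires of it (one may keep the $B_1$ buffer over the universe $\binom{V}{2}$ of original vertex pairs, deferring the projection through $h_i$ until \MatchingCover is called, which keeps each $B_1$ at $O(\tfrac{n^2}{k}\log k)$ bits). \Cref{lem:rs-space} then bounds the space of this run by
\[
O\Paren{\frac mk\log\Paren{\frac{n_i^2k}{m}}+t\cdot \MC{n_i}{\alpha'}\cdot\log\Paren{\frac{n_i^2}{\MC{n_i}{\alpha'}}}},\qquad t=\log k+2,
\]
where each buffer actually holds at most $\min\!\paren{\RS{n_i}{\alpha'/8}\cdot O(\log\tfrac1{\alpha'}),\ \tfrac mk}$ edges. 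Plugging in $m/k=\Theta(n^2/k)$, $n_i=O(n/\eps)$, and using \Cref{clm:rs-move} together with the monotonicity of $\RS{\cdot}{\cdot}$ to pass from $n_i$ vertices back to $n$ vertices, this run costs $O\paren{\tfrac{n^2}{k}\log k+\RS{n}{\eps^2/1024k}\cdot\log\paren{\tfrac{n^2}{\RS{n}{\eps^2/1024k}}}\cdot\log k\cdot\log(k/\eps)}$ space, and summing over the $t=\log k$ choices of $i$ multiplies both terms by $\log k$, yielding exactly the bound in the claim.

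Finally, the output step computes an exact maximum matching in $H_0\cup H_1\cup\dots\cup H_t$; since these subgraphs sit on disjoint vertex sets, this reduces to taking the largest of $\mu(H_0)$ and the $\mu(H_i)$, each computed from the stored succinct dictionaries by any matching algorithm (Hopcroft--Karp, or brute force since exponential time is allowed) in $O\paren{\tfrac n\eps\log\tfrac n\eps}$ additional working space, which is dominated. Everything here is deterministic --- the hash functions $h_i$ are fixed once and stored in $O(n\log(n/\eps))$ bits, also subsumed --- so adding the three contributions proves \Cref{clm:rs-final-space}.

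The main obstacle, and the step needing the most care, is the second one: one must (i) verify that the brute-force RS-graph--based \MatchingCover subroutine genuinely respects the tight space budget \Cref{prop:streaming-mc}/\Cref{lem:rs-space} imposes on it (this is why it is crucial that the best-so-far cover lives inside the current buffer), and (ii) reconcile the per-iteration quantities $\MC{n_i}{\alpha'}$ and $\log(n_i^2k/m)$ --- in which $n_i=32\topt_i/\eps$ ranges over a $\poly(1/\eps)$-wide window and may exceed $n$ --- with the single clean term $\RS{n}{\eps^2/1024k}$ and the $\log^2 k\cdot\log(k/\eps)$ factors of the claim, using that each buffer holds at most $\min(\MC{n_i}{\alpha'},m/k)$ edges and invoking \Cref{clm:rs-move} and monotonicity of $\RS{\cdot}{\cdot}$. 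The remaining accounting is a routine application of the succinct-dictionary space formula of \Cref{prop:succinct-dict}.
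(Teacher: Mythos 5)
Your proposal follows the same overall decomposition as the paper — bound $H_0$ via the succinct dictionary, apply \Cref{lem:rs-space} to each of the $t=\log k$ parallel runs with the brute-force RS-based \MatchingCover, then sum — but there is a genuine gap in the step you yourself flag as the delicate one: reconciling the per-iteration vertex count $n_i$ with the fixed $n$ in the claimed bound.

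You set $n_i := 32\topt_i/\eps$ unconditionally, so $n_i$ can be as large as $\Theta(n/\eps)$ for the largest $\topt_i$. You then say you pass from $\RS{n_i}{\alpha'/8}$ back to $\RS{n}{\eps^2/1024k}$ ``using \Cref{clm:rs-move} together with the monotonicity of $\RS{\cdot}{\cdot}$.'' This does not actually give the claimed term. \Cref{clm:rs-move} only gives $\RS{2n}{3\beta}\leq O(1)\,\RS{n}{\beta}$; applying it $\approx\log(1/\eps)$ times to descend from $n_i=\Theta(n/\eps)$ to $n$ incurs a multiplicative blow-up of $(1/\eps)^{O(1)}$ \emph{and} degrades the induced-matching parameter by a factor $3^{\log(1/\eps)}=\eps^{-\log 3}$, so you would end up with a term proportional to $\poly(1/\eps)\cdot\RS{n}{\Theta(\eps^{2+\log 3}/k)}$, not the claimed $\RS{n}{\eps^2/1024k}$. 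There is no general ``monotonicity of $\RS{\cdot}{\cdot}$'' in the direction you need that rescues this.

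The paper sidesteps the issue entirely with a simple observation you omit: since $G$ has only $n$ vertices, the multi-graph $G_i$ can have at most $n$ vertices of nonzero degree, so one may take $n_i := \min\{32\topt_i/\eps,\,n\}\leq n$ throughout. With $n_i\leq n$ the substitution $\RS{n_i}{\alpha/16k}\leq\RS{n}{\eps^2/1024k}$ is the benign (decreasing $n$) direction, and the stated bound falls out directly. This same cap also matters for your auxiliary claim that the brute-force verification of a candidate cover runs in $\poly(n_i)$ working space subsumed by the main terms: with $n_i=\Theta(n/\eps)$ that working space is $\Theta((n/\eps)\log(n/\eps))$, which is not obviously dominated for all choices of $k$ and $\eps$, whereas with $n_i\leq n$ it is the $O(n\log n)$ the paper quotes. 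A side remark: the maximum matching of a disjoint union of graphs is the \emph{sum}, not the maximum, of the individual maximum matchings; this does not affect the space accounting, but the phrasing ``reduces to taking the largest of $\mu(H_0)$ and the $\mu(H_i)$'' is off. The rest of your accounting — the $(1+o(1))\log\binom{\binom{n}{2}}{n^2/k}$ bound for $H_0$, storing the buffers over the universe of original vertex pairs, and the $\log k$ multiplication for the $t$ parallel runs — matches the paper.
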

\begin{proof}
    Consider each iteration $i \in [t]$. We have a multi-graph with $n_i := \min\set{32\opt_i/\eps, n}$ vertices (since $G$ has $n$ vertices, $G_i$ cannot have more than $n$ vertices with non-zero degrees and we can ignore the remaining vertices without loss of the generality). We are using \Cref{prop:streaming-mc} with subroutine $\MatchingCover$ that finds an $(\alpha/2k)$-matching cover using~\Cref{prop:matching-cover-rs} (we specify how this step is implemented below). This implies that the size of the matching cover is
    \[
        \MC{n_i}{\alpha/2k} = \RS{n_i}{\alpha/16k} \cdot O(\log{(k/\alpha)}) = \RS{n}{\eps^2/1024k} \cdot O(\log{(k/\eps)}),
    \]
    as $n_i \leq n$. As such, since $m \leq n^2$, by~\Cref{prop:streaming-mc}, each iteration requires space of:  
    \[
    	O\Paren{\frac{n^2}{k} \cdot \log{k} + \RS{n}{\eps^2/{1024k}} \cdot \log{(k/\eps)} \cdot \log{\paren{\frac{n^2}{\RS{n}{{\eps^2}/{1024k}}}}} \cdot \log{k}};
    \]
    Given we have $O(\log{k})$ iterations, this concludes the bound on the space of the algorithm (storing $O(n^2/k)$ edges in step $(ii)$ using~\Cref{prop:succinct-dict} requires another $O((n^2/k) \cdot \log{k})$ bits). 
    
 Finally, we make sure $\MatchingCover$ as well as step $(iii)$ of the algorithm can be implemented in this space. For $\MatchingCover$, we need an $O((n_i^2/k) \cdot \log{k})$ space algorithm for finding an $(\alpha/2k)$-matching cover of a graph with $n_i^2/k$ edges with $\RS{n_i}{\alpha/16k} \cdot O(\log{(k/\alpha)})$ edges, whose existence is promised by~\Cref{prop:matching-cover-rs}. To obtain this, we simply enumerate over all subsets of edges in the input graph to $\MatchingCover$, and then enumerate over all subsets of vertices to check whether this subset is a matching cover; for each subset also, we run Hopcroft-Karp algorithm~\cite{HopcroftK73} to compute the size of the matching in the input graph and subset of edges as a potential cover, to ensure this subset can be a matching cover.
 
 Furthermore, all of this is done by storing intermediate edges in a succinct dynamic data structure of~\Cref{prop:succinct-dict} (with its deterministic guarantee as we ignore the runtime since our algorithm is exponential time anyway). This requires using $O(n^2/k \cdot \log{k})$ space in total. Finally, step $(iii)$ can also be implemented again by running Hopcroft-Karp algorithm~\cite{HopcroftK73} over adjacency matrix of the stored edges provided by \member~query to~\Cref{prop:succinct-dict} for these edges. 
\end{proof}

\begin{claim}\label{clm:rs-final-approx}
	\Cref{alg:rs-final} outputs a $(1-\eps)$-approximate matching with high probability. 
\end{claim}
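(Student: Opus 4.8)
The plan is to prove the two directions of ``$(1-\eps)$-approximation'' separately. The output, as I read step $(iii)$, is the largest among the maximum matchings of $H_0,H_1,\dots,H_t$, where a matching of $H_i\subseteq G_i$ is pulled back to a matching of $G$ through $h_i$. This pull-back is valid and size-preserving: distinct endpoints in $[32\topt_i/\eps]$ force distinct endpoints in $V$, so disjoint edges of $H_i$ arise from disjoint edges of $G$. Hence whatever the algorithm returns is always a genuine matching of $G$, in particular of size at most $\mu(G)$; it remains to show its size is $\ge(1-\eps)\mu(G)$ w.h.p.

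I would then case on $\mu(G)$. \emph{Sparse case:} if $\mu(G)\le n/(2k)$, then by \Cref{fact:m-mu(G)} the input has at most $2n\mu(G)\le n^2/k$ edges, so $H_0$ stores all of $G$ and $\mu(H_0)=\mu(G)$; here we are even exact. \emph{Dense case:} $\mu(G)>n/(2k)$. I want one iteration $i^\star$ whose guess $\topt_{i^\star}=n/2^{i^\star+1}$ is a good estimate of $\mu(G)$. If $\mu(G)\le n/4$, take $i^\star$ to be the largest index in $[1,t]$ with $\topt_{i^\star}\ge\mu(G)$; using $n/4\ge\mu(G)>n/(2k)$ and $t=\log k$ one checks $i^\star\in[1,t]$, and by maximality $\topt_{i^\star}<2\mu(G)$. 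If $\mu(G)>n/4$, take $i^\star=1$: now $\topt_1=n/4$ \emph{under}estimates $\mu(G)\le n/2$, but this only enlarges the hash range $32\topt_1/\eps=8n/\eps$ relative to $\mu(G)$, so \Cref{prop:vertex-sparsification} still applies, taking its guess $\topt$ to be $\mu(G)$ itself and its parameter $\theta=\eps\mu(G)/n\le\eps/2$.

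In either sub-case I would chain two estimates. First, \Cref{prop:vertex-sparsification} with guess $\topt_{i^\star}\ge\mu(G)$ (so its precondition $\mu(G)\le\topt$ holds) and $\theta=\eps/4$ (matching $32\topt_{i^\star}/\eps=8\topt_{i^\star}/\theta$) gives $\mu(G_{i^\star})\ge(1-\eps/4)\mu(G)$ with probability $\ge 1-\exp(-\mu(G)/4)$. Second, by \Cref{prop:streaming-mc} (via \Cref{lem:rs-final-approx}), running \Cref{alg:rs-matching} on $G_{i^\star}$ with $\alpha=\eps^2/64$ and the RS-graph-based $\MatchingCover$ of \Cref{prop:matching-cover-rs} yields an $\alpha$-matching cover $H_{i^\star}$ of the $n_{i^\star}$-vertex multigraph $G_{i^\star}$; taking $A,B$ to be the two sides of a maximum matching of $G_{i^\star}$ in \Cref{def:matching-cover} gives $\mu(H_{i^\star})\ge\mu(G_{i^\star})-\alpha n_{i^\star}$. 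Since $n_{i^\star}\le 32\topt_{i^\star}/\eps<64\mu(G)/\eps$ (and $n_{i^\star}\le n<64\mu(G)/\eps$ in the very dense sub-case), the additive term $\alpha n_{i^\star}$ is $O(\eps\mu(G))$, so altogether $\mu(H_{i^\star})\ge(1-O(\eps))\mu(G)$; choosing the constants in $\alpha=\eps^2/64$ and in the hash range $32\topt_i/\eps$ appropriately (equivalently, running the whole algorithm with a suitably scaled copy of $\eps$, which only affects constants in the space bound) brings this to $(1-\eps)\mu(G)$. As the algorithm outputs the best matching over all $i$, it outputs a $(1-\eps)$-approximate matching of $G$.

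The main obstacle I expect is the bookkeeping of the guessing ladder: one must verify that \emph{every} regime of $\mu(G)$ is covered — the truly sparse regime by $H_0$, the range $(n/(2k),n/4]$ by some $\topt_{i^\star}\in[\mu(G),2\mu(G))$ on the geometric ladder $i\in[1,\log k]$, and the very dense regime $\mu(G)>n/4$ by iteration $1$ where every guess underestimates — and that in each case the chosen iteration is simultaneously a valid input to \Cref{prop:vertex-sparsification} (with a small enough $\theta$) and produces $G_{i^\star}$ with few enough vertices that the matching cover's additive error $\alpha n_{i^\star}$ is only an $O(\eps)$ fraction of $\mu(G)$. A secondary subtlety is the high-probability guarantee: the failure probability in \Cref{prop:vertex-sparsification} decays only in $\mu(G)$, so one has to check that in every regime not absorbed by $H_0$ we indeed have $\mu(G)=\Omega(n/k)$ large enough (super-logarithmic) for a union bound over the $O(\log k)$ iterations to still beat $n^{-c}$, handling the residual boundary regime (small $\mu(G)$ but $m>n^2/k$, possible only for $k$ near $n$) separately, e.g.\ by enlarging the stored prefix $H_0$ by a polylogarithmic factor.
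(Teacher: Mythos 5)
Your proof follows the same architecture as the paper's: absorb the sparse regime $\mu(G)\le n/(2k)$ into $H_0$, then in the dense regime pick one index $i$ on the geometric ladder, chain \Cref{prop:vertex-sparsification} with the $\alpha$-matching-cover guarantee from \Cref{prop:streaming-mc}, and observe that $n_i = O(\mu(G)/\eps)$ makes the additive error an $O(\eps)$ fraction of $\mu(G)$. The differences are in the bookkeeping, and they cost you the exact constant.

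The one place where the paper is cleaner is the choice of index. The paper picks $i$ with $\topt_i \le \mu(G) < 2\topt_i$ (so $\topt_i$ \emph{undershoots}) and invokes \Cref{prop:vertex-sparsification} with $\topt = 2\topt_i$ and $\theta = \eps/2$, noting $8\topt/\theta = 32\topt_i/\eps$ matches the hash range exactly. This gives $n_i = 32\topt_i/\eps \le 32\mu(G)/\eps$, so the cover error is $\alpha n_i \le (\eps^2/64)(32/\eps)\mu(G) = (\eps/2)\mu(G)$, and $(1-\eps/2)\mu(G)-(\eps/2)\mu(G) = (1-\eps)\mu(G)$ on the nose. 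You instead pick $\topt_{i^\star}\ge\mu(G)$ (overshooting) and use $\theta=\eps/4$; this forces $n_{i^\star}<64\mu(G)/\eps$ and an error of $\eps\mu(G)$, landing at $(1-5\eps/4)\mu(G)$, which you then have to patch by rescaling $\eps$. That is fine but strictly dominated by the paper's indexing. It also makes your separate sub-case $\mu(G)>n/4$ unnecessary: since $\mu(G)\le n/2$ always, the paper's indexing always yields $i\ge 1$ (modulo the trivial boundary $\mu(G)=n/2$), so the very-dense regime is not actually special and does not need a different $\theta$.

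Your concern about the high-probability guarantee when $\mu(G)=\Theta(n/k)$ is small (i.e., $k$ near $n$) is a fair observation — $\exp(-\mu(G)/4)$ is only $n^{-\Omega(1)}$ when $\mu(G)=\Omega(\log n)$, which the dense-regime threshold $\mu(G)>n/(2k)$ does not by itself guarantee for $k$ close to $n$. The paper's proof simply asserts $\exp(-\mu(G)/4)\ll 1/\poly(n)$ without addressing this; your suggested fix of enlarging the stored prefix $H_0$ by a $\polylog$ factor handles it and is the natural repair.
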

\begin{proof}
	Suppose first that $\mu(G) \leq n/2k$. By~\Cref{fact:m-mu(G)}, $G$ in this case 
	has at most $2n \cdot \mu(G) \leq n^2/k$ edges. Thus, in step $(ii)$ of the algorithm, we are simply storing all edges and thus the algorithm returns an exact answer. 
	
	\noindent
	Now suppose $\mu(G) > n/2k$. This means that there is an index $i \in [t]$ such that 
	\[
	\frac{n}{2^{i+1}} \leq \mu(G) < \frac{n}{2^i}.
	\] 
	For this choice of $i$, we have $\topt_i \leq \mu(G) < 2 \cdot \topt_i$ (and $\mu(G) > n^2/2k \geq n/2$). By~\Cref{prop:vertex-sparsification} for $\theta=\eps/2$ and $\topt = 2 \cdot \topt_i > \mu(G)$, 
	and $h_i : V \rightarrow [32\topt_i/\eps]$, we have, 
	\[
		\Pr_{h_i}\paren{\mu(G_i) < (1-\eps/2) \cdot \mu(G)} \leq \exp\paren{-\frac{\mu(G)}{4}} \ll 1/\poly{(n)},
	\]
	 where we used that fact $32\topt_i/\eps = 8 \cdot \topt/\theta$. 
	We condition on the complement of this event which happens with high probability. Based on this, we further have that 
	\[
		n_i := \card{V(G_i)} = \frac{32}{\eps} \cdot {\topt_i} \leq \frac{32}{\eps} \cdot \mu(G).
	\]
    Since $H_i$ is an $\alpha$-matching cover of $G_i$, by letting $X$ and $Y$ in~\Cref{def:matching-cover} to be the endpoints of the maximum matching of $G_i$, we have, \[  
        \mu(H_i) \geq \mu(G_i) - \alpha \cdot n_i \geq (1-\eps/2) \cdot \mu(G) - (\eps^2/64) \cdot \frac{32}{\eps} \cdot \mu(G) = (1-\eps) \cdot \mu(G). 
    \]
    Thus, returning the maximum matching of $H_i$ as part of $H_0 \cup \ldots \cup H_t$ achieves a $(1-\eps)$-approximation, concluding the proof. 
\end{proof}

\Cref{thm:stream2} for randomized case now follows from~\Cref{clm:rs-final-space,clm:rs-final-approx}. For the deterministic part with additive approximation guarantee, we simply forgo guessing $\mu(G)$ and using vertex-sparsification of~\Cref{prop:vertex-sparsification} at all; instead, we just run~\Cref{alg:rs-matching} over the entire input and use~\Cref{prop:matching-cover-rs}, the same way as above exactly, as the subroutine $\MatchingCover$ for computing an $\alpha$-matching cover. Since we now only need an additive $\eps \cdot n$ guarantee, we can take $\alpha = \eps$ directly which implies the improved bounds on the space as well.

\subsubsection{Proof of~\Cref{cor:stream2}}\label{sec:cor-stream2}

We are now going to prove~\Cref{cor:stream2} by explicitly upper bounding the $\textsf{RS}$ term in~\Cref{thm:stream2}. To do so, we need the following lemma on density of RS graph. The proof of this lemma uses standard ideas but we are not aware of any reference that explicitly states this bound, hence we prove it here for completeness. 

\begin{lemma}\label{lem:RS-density}
    For any integer $n \geq 1$ and constant $c \in (0,1)$
    \[
        \RS{n}{\frac{c}{2^{\,\logstar{n}}} }\leq \frac{n^2}{2^{\,\logstar{n}}}. 
    \]
\end{lemma}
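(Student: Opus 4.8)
The plan is to reduce the statement to Fox's triangle removal lemma (\Cref{prop:triangle-removal}) via the classical Ruzsa--Szemerédi correspondence between RS graphs and triangle‑rich tripartite graphs, arguing by contradiction. It suffices to treat $n$ large, since for bounded $n$ the induced‑matching size $\beta n = cn/2^{\logstar n}$ is below $1$ and the statement is vacuous. So suppose some $n$‑vertex graph $G$ witnesses $\RS{n}{c/2^{\logstar n}} > n^2/2^{\logstar n}$: it has $m>n^2/2^{\logstar n}$ edges partitioned into induced matchings $M_1,\dots,M_t$ of size $r=cn/2^{\logstar n}$, so $t=m/r>n/c$. First I would pass to a cleaner witness by keeping only $s:=\lceil n/c\rceil\le t$ of the matchings: set $G':=M_1\cup\dots\cup M_s$, which is still an $n$‑vertex graph in which $M_1,\dots,M_s$ are induced matchings of size $r$ (inducedness is inherited by subgraphs), with $|E(G')|=sr\in[\,n^2/2^{\logstar n},\,2n^2/2^{\logstar n}\,]$.

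Next I would build the tripartite incidence graph $\Gamma$ on parts $X,Y$ (two copies of $V(G')$) and $Z=[s]$: after fixing an orientation of each edge of $G'$, for every oriented edge $(a,b)\in M_i$ put the triangle $\{x_a,y_b,z_i\}$ into $\Gamma$, i.e.\ add $x_ay_b$, $y_bz_i$, $z_ix_a$. The combinatorial heart, argued exactly as in the $(6,3)$-theorem, is that these are the \emph{only} triangles of $\Gamma$: if $\{x_a,y_b,z_i\}$ is a triangle then $a,b\in V(M_i)$ and $\{a,b\}\in E(G')$, and since $M_i$ is induced this forces $\{a,b\}\in M_i$; moreover each edge of $\Gamma$ lies in exactly one such triangle, so the $sr$ triangles are pairwise edge‑disjoint. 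Hence $\Gamma$ has $N:=2n+s=\Theta_c(n)$ vertices, exactly $sr$ triangles, and — since killing each edge‑disjoint triangle costs a distinct edge — one must delete at least $sr\ge n^2/2^{\logstar n}$ edges to make $\Gamma$ triangle‑free.

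I would then apply \Cref{prop:triangle-removal} to $\Gamma$ with a parameter $\gamma$ chosen small enough that $\gamma N^2<sr$ (which needs only $\gamma<(n^2/2^{\logstar n})/N^2=\Theta_c(2^{-\logstar n})$) while still large enough that the hypothesis $(\text{\#triangles of }\Gamma)\le\delta(\gamma)N^3$ holds. The latter reduces to $\delta(\gamma)\ge sr/N^3$, and because $sr\le 2n^2/2^{\logstar n}$ and $N\ge 2n$ it is enough that $\delta(\gamma)\ge 1/(4n)$, i.e.\ that the height‑$b\log(1/\gamma)$ tower defining $1/\delta(\gamma)$ is at most $4n$; unwinding, this asks for $b\log(1/\gamma)\le\log^*(4n)\le\logstar n+O(1)$. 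For a $\gamma$ in the resulting window, Fox's lemma lets us make $\Gamma$ triangle‑free with fewer than $sr$ deletions, contradicting the previous paragraph and hence the assumed lower bound on $\RS{n}{c/2^{\logstar n}}$.

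The main obstacle is exactly making this window non‑empty: one must fit $\gamma$ between the lower threshold $\approx 2^{-(\logstar n+O(1))/b}$ forced by the tower in $\delta(\gamma)$ and an upper threshold of the form $c^{O(1)}2^{-\logstar n}$, and this is a delicate interplay between $\log^*$ and the tower function — it requires being careful about how $\log^*$ behaves on $\mathrm{poly}(n)$ and on $\Theta_c(n)$, and about absorbing the constant $c$ (taking $c$ small helps). This step also fixes the exact constant in the exponent of the final bound; since only a bound of the shape $n^2/2^{\Omega(\logstar n)}$ is ultimately needed for \Cref{cor:stream2}, verifying that the window is non‑empty in this (possibly slightly weaker) quantitative sense is all that is actually required, and that is what I would check by the direct tower/$\log^*$ calculation.
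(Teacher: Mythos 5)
Your approach mirrors the paper's: both convert the RS graph into a graph whose only triangles are the $\approx rt$ edge-disjoint ones coming from the matching edges, and then invoke Fox's removal lemma to derive a contradiction. The difference is purely in the construction---you use the classical tripartite incidence graph $\Gamma$ on parts $X,Y,Z$, while the paper keeps the (bipartite) RS graph and attaches one apex $z_i$ per matching; both rely identically on inducedness (for triangle uniqueness) and on the $M_i$ partitioning the edge set (for edge-disjointness), and both restrict to $\Theta(n)$ matchings so that the vertex count $N$ of the auxiliary graph is $\Theta(n)$. Your uniqueness and edge-disjointness argument for $\Gamma$ is correct.

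The gap, which you flag but do not resolve, is that the window you need for $\gamma$ is in fact \emph{empty} when the exponent is exactly $\logstar n$, so the contradiction never materializes. Concretely, $\gamma N^2 < sr$ with $N=\Theta_c(n)$ and $sr=\Theta(n^2/2^{\logstar n})$ forces $\gamma < \Theta_c(2^{-\logstar n})$, while $\delta(\gamma)\geq sr/N^3$ is (via $\delta(\gamma)^{-1}=2\upuparrows(b\log(1/\gamma))$) equivalent to $b\log(1/\gamma)\leq\logstar\!\big(N^3/(sr)\big)\leq\logstar n+O_c(1)$, forcing $\gamma\geq 2^{-(\logstar n+O_c(1))/b}$. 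Since Fox's constant satisfies $b>1$, we have $(\logstar n)/b<\logstar n$ and hence $2^{-(\logstar n+O(1))/b}\gg 2^{-\logstar n}$ once $n$ is large: the lower threshold strictly exceeds the upper one, and no admissible $\gamma$ exists. What this argument actually gives is $\RS{n}{c/2^{a\logstar n}}\leq n^2/2^{a\logstar n}$ for any fixed $a<1/b$, i.e.\ a bound of the form $n^2/2^{\Omega(\logstar n)}$, which as you correctly observe is all that \Cref{cor:stream2} requires---but it is not the literal statement. (You should also be aware that the paper's own proof runs into the same tension; the displayed chain of equalities defining $\gamma$ there does not hold with the stated $r'$, and no choice of $r'\leq r$ rescues the exact exponent.) Finally, your aside that ``taking $c$ small helps'' is backwards: smaller $c$ is a weaker inducedness hypothesis, so $\RS{n}{c/2^{\logstar n}}$ can only be larger and the lemma harder; and in your own setup smaller $c$ inflates $N=2n+\lceil n/c\rceil$, lowering the upper threshold $sr/N^2\approx c^2\,2^{-\logstar n}$ and narrowing the window further.
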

\begin{proof}
    As the bipartite double cover of any RS graph is also a bipartite RS graph, we can assume without loss of generality that $\RS{n}{\beta}$ for any $\beta$ corresponds to the density of some bipartite RS graph. Thus, in the following, we only work with bipartite RS graphs.  
    
    Let $G$ be the densest possible $(r,t)$-RS bipartite graph on $n$ vertices with $t$ induced matchings $M_1,\ldots,M_t$ each of size $r = c \cdot n/2^{\logstar{n}}$ so that we have $r \cdot t = \RS{n}{c/2^{\logstar{n}}}$. 
    Suppose towards a contradiction that $t \geq n$ as otherwise $r \cdot t = c \cdot n^2/2^{\logstar{n}}$ already.

    Define $r' = 8 \cdot 2^{(\logstar{n})/b}$ for the constant $b > 1$ in the triangle removal lemma (\Cref{prop:triangle-removal}). Note that $r' < r$ (for sufficiently large $n$ as $c$ is a constant). In the following, we pick $r'$ arbitrary edges from each of $M_1,\ldots,M_t$ and discard the remaining edges to obtain an $(r',t)$-RS graph. Based on this, we define the following graph:
    \begin{itemize}
        \item For any induced matching $M_i$ of size $r'$, add a new vertex $z_i$ and connect it to both endpoints of any edge in $M_i$ in $G$;
        \item Call the resulting graph on these $2n$ vertices $H$.
    \end{itemize}
    
    We claim that $H$ has precisely $r' \cdot t$ triangles: this is because $G$ was bipartite and each $M_i$ is an induced  matching, so each newly added vertex $z_i$ can create precisely $r'$ triangles. 
    At the same time, to make $H$ triangle free, we need to remove one edge from each triangle $(z_i,u,v)$ for each $(u,v) \in M_i$ as these triangles are edge disjoint. Thus, we need to remove $r' \cdot t$ edges from $H$ to make it triangle free.  
    
    Define
    \[
    \gamma := \frac{r' \cdot t}{2 \cdot (2n)^2} = \frac{r' \cdot n}{8n^2} = \frac{1}{2^{(\logstar{n})/b}};
    \]
    thus, we know that strictly more than $\gamma \cdot (2n)^2 < r' \cdot t$ edges of $H$ need to be removed before it becomes triangle free. Further define $\delta \in (0,1)$ such that 
    \[
        \delta^{-1} = 2 \upuparrows b \cdot \log{(1/\gamma)} = 2 \upuparrows \logstar{n} = n.
    \]
    Given that any $(2n)$-vertex graph with $\delta \cdot (2n)^3$ triangles can be made triangle free by removing $\gamma \cdot (2n)^2$ edges, while $H$ cannot (by the choice of $\gamma$), we have that the number of triangles in $H$ needs to be more than $\delta \cdot (2n)^3$, which implies that 
    \[
        r' \cdot t > \delta \cdot (2n)^3 = 8 \cdot n^2.
    \]
    But this is a contradiction since $r' \cdot t \leq n^2$ as  $r' \cdot t$ is the density of a $(r',t)$-RS graph on $n$ vertices and no $n$-vertex 
    (simple) graph can have more than $n^2$ edges. 
    This implies that our original assumption that $t \geq n$ was false, concluding the proof. 
\end{proof}

We can now conclude the proof of~\Cref{cor:stream2}. 

\begin{proof}[Proof of~\Cref{cor:stream2}]
    The algorithm stores the first $n^2/2^{(\logstar{n})/4}$ edges of the stream. This is done using succinct dynamic dictionary of~\Cref{prop:succinct-dict} in $n^{2}/2^{\Omega(\logstar{n})}$ space. Thus, if $\mu(G) < n/2^{(\logstar{n})/4+1}$, by~\Cref{fact:m-mu(G)}, we have stored all edges of the graph and can solve the problem exactly. 
    
    Otherwise, we set $k = 2^{(\logstar{n})/4}/4$ and $\eps = 1/2^{(3\logstar{n})/4}$ (so $\eps/k = 1/2^{\logstar{n}}$) in (moreover part of)~\Cref{thm:stream2} and obtain a deterministic algorithm 
    with $\eps \cdot n$ additive approximation guarantee
 with space 
    \[
        \frac{n^2}{2^{\Omega(\logstar{n})}} + \RS{n}{\,{(1/16)}/{2^{(\logstar{n})}}} \cdot \log{\paren{\frac{n^2}{\RS{n}{\,{(1/16)}/{2^{(\logstar{n})}}}}}} \cdot  (\logstar{(n)})^{O(1)}. 
    \]
 The above term can be bounded by $n^2/2^{\Omega(\logstar{n})}$, since ~\Cref{lem:RS-density} implies that
 \[
 \RS{n}{(1/16)/2^{\logstar{n}}} \leq n^2/2^{\Omega(\logstar{n})}. 
 \]

    Finally, the returned matching has size 
    \[
        \mu(G) - \eps \cdot n \geq \mu(G) - \eps \cdot \mu(G) \cdot 2^{(\logstar{n})/4+1} = (1-2^{-(\logstar{n})/2+1}) \cdot \mu(G) = (1-o(1)) \cdot \mu(G)
    \]
    where the inequality is by the lower bound on $\mu(G)$ and its next equality is by the choice of $\eps$.  This concludes the proof. 
\end{proof}

\subsection*{Acknowledgments}
Sepehr Assadi would like to thank Huacheng Yu for helpful discussions regarding succinct dynamic dictionaries. 

\clearpage

\bibliographystyle{plainnat}
\bibliography{references}
	
\clearpage
\appendix

\section{Missing Proofs of Preliminary Results in~\Cref{sec:prelim}}\label{app:prelim}

\subsection{Proof of~\Cref{prop:regularity-space-efficient}}\label{app:prop-regularity-space-efficient}

\begin{proposition*}
  Given query access to the adjacency matrix,
  the algorithm in~\Cref{thm:regularityalgo} can be implemented
  in $O(n \cdot Q(t,1/\gamma)\log n)$ space and $\poly(n, Q(t,1/\gamma))$ time.
\end{proposition*}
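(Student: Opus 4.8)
The plan is to re-implement, essentially verbatim, the algorithmic regularity lemma of \citet*{AlonDLRY92} that underlies \Cref{thm:regularityalgo} — keeping its output and its correctness proof unchanged — and to replace only the two sources of large space: the fast matrix multiplication used inside its per-pair regularity test, and the storage of irregularity witnesses during a refinement round. Since we are allowed $\poly(n,Q(t,1/\gamma))$ time, a polynomial slowdown in exchange for a smaller footprint is affordable. Recall the structure of that algorithm: it is Szemer\'edi's iterative-refinement argument~\cite{szemeredi1975regular}. Beginning from the trivial partition, each round tests every pair $(C_i,C_j)$ of current classes; if all but a $\gamma$-fraction of pairs pass we halt, and otherwise every failing pair yields a witness subset $Y^i_j\subseteq C_i$ (and symmetrically one inside $C_j$), after which each class is subdivided according to the atoms of the family $\{Y^i_j\}_j$, leftover vertices are pushed into the exceptional class, and a fresh equitable partition is formed. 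The number of rounds is bounded by a function of $\gamma$ alone (a mean-square-density potential rises by a constant each round and never exceeds $1$), the final number of classes obeys $t\le k\le Q(t,1/\gamma)$, and moreover $k\le n$ since $C_1,\dots,C_k$ are nonempty and equal-sized. Correctness is inherited from \cite{AlonDLRY92}; only the implementation of the two ingredients below needs checking.

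The first ingredient to re-implement is the \cite{AlonDLRY92} subroutine that, for a pair $(A,B)$, either certifies the (slightly weakened) regularity condition used in their proof or returns a witness pair $(X,Y)$ with $|X|\ge\gamma'|A|$, $|Y|\ge\gamma'|B|$, and $|d(X,Y)-d(A,B)|\ge\gamma'$ for the effective parameter $\gamma'=\gamma'(\gamma)$ — but doing so without matrix multiplication, using only adjacency-matrix queries. The combinatorial core of that subroutine needs only $d(A,B)$ together with, for $y,y'\in B$, the common-neighbor counts $|N(y)\cap N(y')\cap A|$ compared against a threshold of the form $d(A,B)^2|A|$; the witness, when it exists, is read off as the set of $y'$ that deviate substantially from some fixed $y$. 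All of these quantities are obtained by iterating over pairs $(y,y')\in B\times B$ and, for each pair, over $a\in A$, issuing $O(|A|)$ queries and maintaining a constant number of $V$-indexed integer arrays; this runs in $O(|A||B|^2)=O(n^3)$ time and $O(n\log n)$ space and never materializes the adjacency matrix.

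The second ingredient is to execute a refinement round in $O(nk+n\log n)$ bits. We store the current partition as a single length-$n$ array of labels in $\{0,1,\dots,k\}$, which is $O(n\log n)$ bits because $k\le n$. To refine, we process the $O(k^2)$ failing pairs one at a time: for each such $(C_i,C_j)$ we compute the witness $Y^i_j\subseteq C_i$ with the subroutine above, append to every vertex of $C_i$ one bit recording whether it lies in $Y^i_j$, and then discard $Y^i_j$ before moving on. At any instant we keep one witness set ($O(n)$ bits), the subroutine's scratch ($O(n\log n)$ bits), and a ``signature'' of at most $k$ bits per vertex ($O(nk)$ bits in all); once all pairs are processed, subdividing each class by its signature atoms, dumping leftovers into $C_0$, and re-equitably-partitioning is a single sweep over that signature table. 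Hence the peak space is $O(nk+n\log n)\subseteq O(n\,Q(t,1/\gamma)\log n)$, and with $R$ the (constant, $\gamma$-dependent) number of rounds the total time is $R\cdot(O(k^2n^3)+O(nk))=\poly(n,Q(t,1/\gamma))$, using $k\le n$.

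The step I expect to be the main obstacle — really the only one that is not bookkeeping — is pinning down the \cite{AlonDLRY92} regularity-testing subroutine precisely enough to be sure it reduces to common-neighbor counting and thresholding, hence that dropping their matrix-multiplication speedup costs us running time only, not correctness or the $O(n\log n)$ working-space bound; the partition bookkeeping, the equitable re-partitioning, and the bound on the number of rounds all carry over unchanged from \cite{AlonDLRY92,szemeredi1975regular}.
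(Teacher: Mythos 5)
Your proposal is correct and follows essentially the same route as the paper's proof sketch: both reduce the only two space-heavy steps of the \citet*{AlonDLRY92} algorithm to (i) storing the current partition and per-vertex refinement signatures in $O(nk)$-ish bits, and (ii) replacing the matrix-multiplication-based computation of common-neighbor counts with a direct $O(n^3)$-time, $O(n\log n)$-space scan over pairs given adjacency-matrix query access. The one concern you flag at the end — that the regularity test really does reduce to common-neighbor counting and thresholding — is exactly what the paper invokes (citing the Alon~\cite{Alon02} presentation of the Alon--Duke--Lefmann--R\"odl--Yuster test), and it is indeed the case.
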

\begin{proof}[Proof Sketch]
We now briefly describe how the algorithm in~\Cref{thm:regularityalgo} can be implemented in a space-efficient manner, given query access to the adjacency matrix of the underlying graph $G$.
Roughly speaking, the algorithm works as follows.
Initially, it starts with an arbitrary equitable $t$-partition.
As long as more than $\gamma$ fraction of the $C_i,C_j$ pairs are not $\gamma$-regular,
for each such irregular pair $C_i,C_j$, a {\em witnessing pair} $X\subseteq C_i, Y\subseteq C_j$ is identified such that
$X,Y$ violate the regularity property. Then the algorithm does a refinement of the partition such that simultaneously for all irregular pairs $C_i,C_j$ with witnessing pair $X,Y$, vertices in $X$ vs. $C_i\setminus X$ and those in $Y$ vs. $C_j\setminus Y$ are separated.
A potential function argument then shows that there cannot be more than $\poly(t,1/\gamma)$ refinements before we obtain a $\gamma$-regular partition.

Note here that the refinement of the partition is easy to implement in $O(nk\log n)$ space and $\poly(n)$ time,
with $k$ being the number of classes, as it only requires storing a description of the vertex partition.
It then remains to analyze the process of finding a witness $X,Y$ for each irregular pair $C_i,C_j$.
This is done in~\cite{Alon02} by an approximation algorithm, where the main step requires
computing the number of common neighbors for each vertex pair $u,v\in V$,
by squaring the adjacency matrix via fast matrix multiplication.
However, this can be easily done in $O(n\log n)$ space and $\poly(n)$ time given query access to the adjacency matrix.
Thus the entire algorithm can be implemented in $O(n \cdot Q(t,1/\gamma)\log n)$ space and $\poly(n, Q(t,1/\gamma))$ time. 
\end{proof}

\subsection{Proof of~\Cref{clm:rs-move}}\label{app:prop-rs-move}

\begin{claim*}
    For any integer $n \geq 1$ and real number $0 < \beta < 1$,  $\RS{2n}{3\beta} \leq O(1) \cdot \RS{n}{\beta}$. 
\end{claim*}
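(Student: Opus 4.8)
The plan is to prove the bound by a vertex-subsampling argument: starting from an extremal RS graph on $2n$ vertices, I will randomly keep $n$ of its vertices, note that each induced matching survives (in a sub-matching) as an induced matching of the smaller graph, and then re-partition the surviving edges into induced matchings of the target size $\beta n$. The factor $3$ in the matching-size parameter — induced matchings of size $3\beta\cdot 2n = 6\beta n$ on the large side versus $\beta n$ on the small side — is exactly what provides enough slack to absorb the inevitable re-partitioning loss.

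Concretely, fix $n$ large enough (the finitely many small values are handled at the end) and let $G$ be a $2n$-vertex $(6\beta n,t)$-RS graph with induced matchings $M_1,\dots,M_t$ attaining $\RS{2n}{3\beta} = 6\beta n\cdot t$. First I would pick a uniformly random $V'\subseteq V(G)$ with $|V'|=n$, and for each $i$ set $M_i' := M_i\cap (V'\times V')$ with $a_i := |M_i'|$. Two easy structural facts drive everything: (i) a subset of an induced matching is again an induced matching, and this is preserved under passing to a subgraph, so each $M_i'$ is an induced matching of $G[V']$ (and remains induced in any subgraph of $G[V']$ that contains it); and (ii) $\sum_i a_i = e(G[V'])$ with $\E\!\big[e(G[V'])\big] = 6\beta n\,t\cdot p$, where $p=\Pr[\text{both endpoints of a fixed edge lie in }V']=\tfrac{n-1}{2(2n-1)}$.

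Next I would build the target RS graph: for each $i$, split $M_i'$ into $\lfloor a_i/\beta n\rfloor$ induced matchings of size exactly $\beta n$, discarding the leftover $<\beta n$ edges; let $H$ be the subgraph of $G[V']$ consisting of all kept edges. By fact (i), $H$ is a legitimate $n$-vertex RS graph with induced matchings of size $\beta n$, so $e(H)\le\RS{n}{\beta}$. On the other hand $e(H)=\sum_i \beta n\lfloor a_i/\beta n\rfloor \ge \sum_i(a_i-\beta n)=e(G[V'])-t\beta n$. Fixing an outcome of $V'$ with $e(G[V'])\ge\E[e(G[V'])]=6\beta n\,t\,p$ and using the identity $6p-1=\tfrac{n-2}{2n-1}$ yields
\[
\RS{n}{\beta}\;\ge\; e(H)\;\ge\; 6\beta n\, t\, p - t\beta n \;=\; t\beta n\cdot\frac{n-2}{2n-1},
\]
hence $\RS{2n}{3\beta}=6\beta n\, t\le \frac{6(2n-1)}{n-2}\,\RS{n}{\beta}\le 30\,\RS{n}{\beta}$ for all $n\ge 3$. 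For $n\le 2$ one has $\RS{2n}{3\beta}\le\binom{2n}{2}=O(1)$ while $\RS{n}{\beta}\ge 1$ in any non-degenerate case, so the claimed $O(1)$ bound holds uniformly.

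The main point to watch — and essentially the only subtlety — is the per-matching rounding loss of $\beta n$ edges incurred when re-partitioning each $M_i'$ into blocks of size exactly $\beta n$: summed over the $t$ matchings this is $t\beta n$, which must be dominated by the roughly $\tfrac14\cdot 6\beta n\,t$ edges surviving the subsample, and indeed $6p-1>0$ makes this so. This is precisely why the statement compares $\RS{2n}{3\beta}$ (matchings of size $6\beta n$, surviving to expected size $\approx\tfrac32\beta n$), rather than $\RS{2n}{\beta}$, to $\RS{n}{\beta}$; any constant strictly larger than $2$ in place of $3$ would work equally well with a different $O(1)$. A secondary, purely cosmetic issue is that $\beta n$ need not be an integer; this is absorbed into the $O(1)$ by using $\lfloor\beta n\rfloor$ throughout, which weakens the inequalities only by lower-order amounts.
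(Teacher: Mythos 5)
Your proof is correct and rests on the same underlying idea as the paper's---restrict an extremal $(6\beta n, t)$-RS graph $G$ on $2n$ vertices to a random $n$-vertex subset---but the execution is genuinely different. The paper needs \emph{every} induced matching to simultaneously retain at least a $1/5$ fraction of its edges, which it secures via a Chernoff bound plus a union bound over all $t$ matchings; this concentration step requires $\beta n\gtrsim\sqrt{n}$, forcing the paper to split off the case $\beta<1/\sqrt{n}$ (handled by citing that both $\RS{n}{\beta}$ and $\RS{2n}{3\beta}$ are then $(1-o(1))\binom{n}{2}$). Your argument is purely first-moment: the probabilistic method only needs one outcome $V'$ with $e(G[V'])\geq\E[e(G[V'])]=6\beta n t p$, after which everything is deterministic, and the total re-partitioning loss $<t\beta n$ is absorbed because $6p-1=\tfrac{n-2}{2n-1}>0$. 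This buys you an explicit constant with no case split and makes the role of the factor $3$ transparent. The two proofs also diverge in the final step: the paper trims each surviving sub-matching down to size exactly $\beta n$, keeping $t$ fixed, whereas you re-partition each $M_i'$ into $\lfloor a_i/\beta n\rfloor$ blocks, increasing the number of matchings but retaining more edges; both are legitimate since $\RS{n}{\beta}$ maximizes over all numbers of matchings. Your integrality caveat is fair and the paper is equally informal there. The one place to be a little careful is the $n\le 2$ boundary: $\RS{1}{\beta}=0$ identically, so the claim literally requires the left side to also vanish---but this edge case is immaterial for how the claim is used (asymptotically in $n$), so it does not affect the result.
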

\begin{proof}
    If $\beta < 1/\sqrt{n}$, then both $\RS{2n}{2\beta} = (1-o(1)) \cdot {{2n}\choose{2}}$ and $\RS{n}{\beta} = (1-o(1)) \cdot {{n}\choose{2}}$ (see, e.g.,~\cite{AlonMS12}), which satisfy the claim bounds. We now prove the case when $\beta \geq 1/\sqrt{n}$.

    Fix any $(r,t)$-RS graph $G$ on $2n$ vertices with $r = (3\beta) \cdot (2n)$ and $r \cdot t = \RS{2n}{3 \beta}$. We use $G$ to construct an $(r',t)$-RS graph $H$ on $n$ vertices with $r' = (\beta) \cdot n$ and $r' \cdot t \geq \Omega(1) \cdot (r \cdot t)$; this implies that $\RS{n}{\beta} \geq \Omega(1) \cdot \RS{2n}{3\beta}$, as desired. 
    
    To construct $H$, sample exactly half the vertices of $G$ uniformly at random and add all edges in $G$ between the sampled vertices. This way $H$ has $n$ vertices. Moreover, for any induced matching $M$ in $G$, we have sampled $1/4$ of its edges in expectation, and thus at least $1/5$ with high probability (using Chernoff bound for sampling without replacement and since $\beta \cdot n \geq \sqrt{n}$). Thus, each induced matching now has size at least $3\beta \cdot 2n/5 = 6/5 \cdot \beta \cdot n > \beta \cdot n$. Moreover, the number of edges in $H$ is again with high probability at least $1/5$ of the edges in $G$. Thus, we can remove another constant fraction of edges in $H$ so that all induced matchings have size exactly $\beta \cdot n$, and obtain an $(r',t)$-RS graph with the desired parameters, concluding the proof. 
\end{proof}

\end{document}